
\documentclass[conference]{sig-alternate}

\usepackage{algorithm}
\usepackage{algorithmic}
\usepackage{amssymb}

    \setcounter{topnumber}{2}
    \setcounter{bottomnumber}{2}
    \setcounter{totalnumber}{4}     
    \setcounter{dbltopnumber}{2}    

\usepackage{amsmath}
\usepackage{amsfonts}
\usepackage{amssymb}
\usepackage{url}
\usepackage{cite}
\usepackage{epsfig}
\usepackage{xspace}
\usepackage{graphicx}
\usepackage{subfigure}
\usepackage{wrapfig}
\usepackage{color}
\usepackage{listings}
\usepackage{stmaryrd}
\usepackage{xspace}

\usepackage{algorithm}
\usepackage{algorithmic}
\usepackage{multicol}
\usepackage{graphicx}
\usepackage{color}

\hyphenation{op-tical net-works semi-conduc-tor}

\newcommand{\algorithmicinput}{\textbf{Input:}}

\newcommand{\INPUT}{\item[\algorithmicinput]}

\newcommand{\algorithmicoutput}{\textbf{Output:}}

\newcommand{\OUTPUT}{\item[\algorithmicoutput]}

\newtheorem{theorem}{Theorem}
\newtheorem{corollary}{Corollary}
\newtheorem{observation}{Observation}
\newtheorem{lemma}{Lemma}
\newtheorem{definition}{Definition}
\newtheorem{assumption}{Assumption}
\newtheorem{remark}{Remark}

\begin{document}
\bibliographystyle{plain}
\title{Stabilization and Fault-Tolerance in Presence of Unchangeable Environment Actions}

\numberofauthors{4}
\author{
\alignauthor
Mohammad Roohitavaf\\
       \affaddr{Computer Science and Engineering Department}\\
       \affaddr{Michigan State  University}\\
       \affaddr{East Lansing, Michigan 48824, USA}\\
       \email{roohitav@cse.msu.edu}
\alignauthor
Sandeep Kulkarni\\
       \affaddr{Computer Science and Engineering Department}\\
       \affaddr{Michigan State  University}\\
       \affaddr{East Lansing, Michigan 48824, USA}\\
       \email{sandeep@cse.msu.edu}
}

\maketitle

\begin{abstract}
We focus on the problem of adding fault-tolerance to an existing concurrent protocol in the presence of {\em unchangeable environment actions}. Such unchangeable actions occur in practice due to several reasons. One instance includes the case where only a subset of the components/processes can be revised and other components/processes must be as is. Another instance includes cyber-physical systems where revising physical components may be undesirable or impossible. These actions differ from faults in that they are simultaneously {\em assistive} and {\em disruptive}, whereas faults are only disruptive.  For example, if these actions are a part of a physical component, their execution is essential for the normal operation of the system. However, they can potentially disrupt actions taken by other components for dealing with faults. Also, one can typically assume that fault actions will stop for a long enough time for the program to make progress. Such an assumption is impossible in this context.

We present algorithms for adding stabilizing fault-tolerance, failsafe fault-tolerance and masking fault-tolerance. Interestingly, we observe that the previous approaches for adding stabilizing fault-tolerance and masking fault-tolerance cannot be easily extended in this context. However, we find that the overall complexity of adding these levels of fault-tolerance remains in P (in the state space of the program). We also demonstrate that our algorithms are sound and complete.

\end{abstract}

\keywords{ Stabilization, Fault-tolerance, Cyber-physical Systems, Program synthesis, Addition of fault-tolerance}

\section{Introduction}
\label{sec:intro}

In this paper, we focus on the problem of model repair for the purpose of making the model stabilizing or fault-tolerant. Model repair is the problem of revising an existing model/program so that it satisfies new properties while preserving existing properties. It is desirable in several contexts such as when an existing program needs to be deployed in a new setting or to repair bugs. Model repair for fault-tolerance enables one to separate the fault-tolerance and functionality so that the designer can focus on the functionality of the program and utilize automated techniques for adding fault-tolerance. It can also be used to add fault-tolerance to a newly discovered fault.

This paper focuses on performing such repair when some actions cannot be removed from the model. We refer to such transitions as {\em unchangeable environment actions}. There are several possible reasons that actions can be unchangeable. Examples include scenarios where the system consists of several components --some of which are developed in house and can be repaired and some of which are third-party and cannot be changed. They are also useful in systems such as Cyber-Physical Systems (CPSs) where modifying physical components may be very expensive or even impossible.

The environment actions differ from fault actions considered in \cite{bka12}. Fault actions are assumed to be temporary in nature, and all the previously proposed algorithms to add fault-tolerance in \cite{bka12}, work only with this important assumption that faults finally stop occurring. However, unlike fault actions, environment actions can keep occurring. Environment actions also differ from adversary actions considered in \cite{bk11sss} or in the context of security intrusions. In particular, the adversary intends to cause harm to the system. By contrast, environment actions can be collaborative as well.
In other words, the environment actions are simultaneously collaborative and disruptive. The goal of this work is to identify whether it is possible for the program to be repaired so that it can utilize the assistance provided by them while overcoming their disruption. To give an intuition of the role of the environment and the difference between program, environment, and fault actions, next, we present the following example. 

{\bf An intuitive example to illustrate the role of environment.  }  This intuitive example is motivated by a simple pressure cooker (see Figure \ref{preCookerFig}). The environment (heat source) causes the pressure to increase. In the subsequent discussion, we analyze this pressure cooker when the heat source is always on. 
There are two mechanisms to decrease the pressure, a vent and an overpressure valve. For sake of presentation, assume that pressure is below 4 in normal states. If the pressure increases to {4 or 5}, the vent mechanism reduces the pressure by 1 in each step. However, the vent may fail (e.g., if something gets stuck at the vent pipe), and its pressure reduction mechanism becomes disabled. If the pressure reaches 6, the overpressure valve mechanism causes the valve to open resulting in an immediate drop in pressure to be less than 4. We denote the state where pressure is $a$ by $s_a$ when the vent is working, and by state $fs_a$ when the vent has failed.

\begin{figure}
\begin{center}
\includegraphics[width=80mm,scale=0.5]{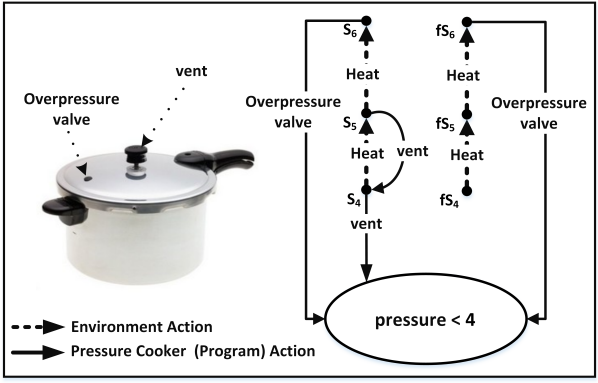}
\caption{An intuitive example to illustrate the role of environment actions. For sake of readability, fault actions (e.g. actions to $fs_4$) are removed from the diagram.}
\label{preCookerFig}
\end{center}
\end{figure}

Our goal in the subsequent discussion is to model the pressure cooker as a program and identify an approach for the role of the environment and its interaction with the program so that we can conclude this requirement: \textit{starting} from \textit{any} state identified above, the system reaches a state where the pressure is less than 4. 

Next, we argue that the role of the environment differs from that of fault actions and program actions. In turn, this prevents us from using existing approaches such as \cite{bka12}. Specifically,

\begin{itemize}
\item
 \emph{Treating the environment as a fault } does not work. In particular, if we treat the environment as a fault then the transitions from state $fs_4$ to $fs_5$ and from $fs_5$ to $fs_6$ in Figure \ref{preCookerFig} are not required to occur. If these actions do not occur, the overpressure valve is never be activated. Hence, neither the valve nor the vent mechanism reduces the pressure to be less than 4. 
Also, faults are expected to stop. By contrast, this is not the case with the environment actions. 
 
 \item \emph{Treating the environment transitions similar to program transitions} is also not acceptable. To illustrate this, consider the case where we want to make changes to the program in Figure \ref{preCookerFig}. For instance, if the overpressure valve is removed, then this would correspond to removing transition from $s_6$ (respectively $fs_6$) to where pressure is less than 4. Also, if we add another safety mechanism, it would correspond to adding new transitions. However, we cannot do the same with environment actions that capture the changes made by the heat source. For example, we cannot add new transitions (e.g., from $fs_4$ to $s_4$) to the environment, and we cannot remove transitions (e.g., from $s_4$ to $s_5$). In other words, even if we make any changes to the model in Figure \ref{preCookerFig} by adding or removing safety mechanisms, the transitions marked environment actions remain unchanged. We cannot introduce new environment transitions and we cannot remove existing environment transitions. This is what we mean by environment being \textit{unchangeable}.

\item \emph{ Treating the environment to be collaborative without some special fairness to the program} does not work either. In particular, without some special fairness for the program, the system can cycle through states $s_4, s_5, s_4, s_5\cdots$.

\item
 \emph{Treating the environment to be simultaneously collaborative as well as adversarial where the program has some special fairness} enables one to ensure that this program achieves its desired goals. In particular, we need the environment to be {\em collaborative}, i.e., if it reaches a state where only environment actions can execute then one of them does execute. (Note that this requirement cannot be expected of faults.) This is necessary to ensure that system can transition from state $fs_4$ to $fs_5$ and from $fs_5$ to $fs_6$ which is essential for recovery to a state where pressure is less than 4.

We also need the program to have special fairness to require that it executes {\em faster} than the environment so that it does not execute in a cycle through states $s_4, s_5, s_4, \cdots$. (We will precisely define the notion of faster in Section \ref{pre:programg}.)
\end{itemize}

{\bf Goal of the paper. } Based on the above example, our goal in this paper is to evaluate how such simultaneously collaborative and adversarial environment can be used in adding stabilization, failsafe fault-tolerance, and masking fault-tolerance to a given program.

Intuitively, in stabilizing fault-tolerance, starting from an arbitrary state, the program is guaranteed to recover to its legitimate states. In failsafe fault-tolerance, in the presence of faults, the program satisfies the safety specification. In masking fault-tolerance, in addition to satisfying the safety specification, the program recovers to its legitimate states from where future specification is satisfied. Also, the results from this work are applicable for nonmasking fault-tolerance from \cite{ag93}.

We also note that the results in \cite{bka12} do not model environment actions. Using the framework in \cite{bka12} for the above example would require one to treat the environment actions to be fault actions. And, as discussed above, this leads to an unacceptable result.

{\bf Contributions of the paper. } \
The main results of this work are as follows:

\begin{itemize} 
\item We present two algorithms for addition of stabilization to an existing program. Of these, the first algorithm is designed for the case where the program is provided with {\em minimal fairness} (where the program is given a chance to execute at least once between any two environment actions). The second algorithm, proposed in the Appendix, is for the case where additional fairness is provided. This algorithm is especially applicable when adding stabilization with minimal fairness is impossible. Both these algorithms are sound and complete, i.e., the program found by them is guaranteed to be stabilizing and if they declare failure then it implies that adding stabilization to that program is impossible. 
\item We present an algorithm for addition of failsafe fault-tolerance. This algorithm is also sound and complete.
\item We present an algorithm for addition of masking fault-tolerance. This algorithm is also sound and complete.
\item We note that the algorithm for masking fault-tolerance can be easily applied for designing nonmasking fault-tolerance discussed in \cite{ag93}. 
\item We show that the complexity of all algorithms presented in this paper is polynomial (in the state space of the program). Also, we note that the algorithms for stabilizing and masking fault-tolerance require one to solve the problem in a completely different fashion when compared to the case where we have no unchangeable environment actions. 
\end{itemize}

{\bf Organization of the paper. } \
This paper is organized as follows: in Section \ref{sec:pre} we provide the definitions of a program design, specifications, faults, fault-tolerance, and safe stabilization. In Section \ref{sec:stabilization} we define the problem of adding safe stabilization, and propose an algorithms to solve that problem for the case of minimal fairness. (The algorithm for the case where additional fairness is provided is proposed in the Appendix.) In Section \ref{sec:case}, as a case study, we illustrate how adding stabilization algorithm can be used for the controller of a smart grid. In Section \ref{sec:addft} we define the problem of adding fault-tolerance, and propose two algorithms to add failsafe and masking fault-tolerance. In Section \ref{sec:extensions} we show how our proposed algorithms can be extended to solve related problems. In Section \ref{sec:related}, we discuss related work. In section \ref{sec:appli}, we discuss application of our algorithms for cyber-physical and distributed systems. Finally, we make concluding remarks in Section \ref{sec:concl}

\section{Preliminaries}
\label{sec:pre}
In this section, we define the notion of programs, faults, specification and fault-tolerance. We define programs in terms of their states and transitions. The definitions of specification is based on that by Alpern and Schneider \cite{as85}. And, the definitions of faults and fault-tolerance are adapted from that by Arora and Gouda \cite{ag93}.

\subsection{Program Design Model}
\label{pre:programg}
\begin{definition}[Program]
A program $p$ is of the form $\langle S_p, \delta_p\rangle$ where $S_p$ is the state space of program $p$, and $\delta_p \subseteq S_p \times S_p$.
\end {definition}

The environment in which the program executes also changes the state of the program. Instead of modeling this in terms of concepts such as variables that are written by program and variables that are written by the environment, we use a more general approach where models it as a subset of $S_p \times S_p$. Thus, 

\begin{definition} [Environment]
An environment $\delta_e$ for program $p$, is defined as a subset of $S_p \times S_p$.

\end {definition}

\begin{definition} [State Predicate]
A state predicate of $p$ is any subset of $S_p$.
\end{definition}

\begin{definition}[Projection]
The projection of program $p$ on state predicate $S$, denoted as $p|S$, is the program $\langle S_p, \{(s_0,s_1): (s_0,s_1) \in \delta_p \wedge s_0, s_1 \in S\}\rangle$. In other words, $p|S$ consists of transitions of $p$ that start in $S$ and end in $S$. We denote the set of transitions of $p|S$ by $\delta_p|S$. 

\end{definition}

\begin{definition} [\ensuremath{p[]_k\delta_e\ computation}]
\label{def:progenvcomp}
Let $p$ be a program with state space $S_p$ and transitions $\delta_p$. Let $\delta_e$ be an environment for program $p$ and $k$ be an integer greater than 1. We say that a sequence $\langle s_0, s_1, s_2, ...\rangle$ is a $p[]_k\delta_e$ computation \textit{iff}
\begin{list}{\labelitemi}{\leftmargin=0.5em}
\itemsep=-2mm \vspace*{-2mm}
\item $\forall i : i \geq 0 : s_i \in S_p$, and
\item $\forall i : i \geq 0 : (s_i, s_{i+1}) \in \delta_p \cup \delta_e$, and
\item $\forall i : i \geq 0 : ((s_i, s_{i+1}) \in \delta_e) \Rightarrow$ \newline $(\forall l : i<l<i+k : (\exists s'_l:: (s_l, s'_l) \in \delta_p) \Rightarrow (s_l, s_{l+1}) \in \delta_p))$.
\end{list}

\end{definition}

Note that the above definition requires that in every step, either a program transition or an environment transition is executed. Moreover, after the environment transition executes, the program is given a chance to execute in the next $k\!-\!1$ steps. However, in any state that no program transition is available, an environment transition can execute. 

\begin{definition}
[Closure]
A state predicate $S$ is closed in a set of transitions $\delta$ iff $(\forall(s_0,s_1): (s_0,s_1)\in \delta: (s_0 \in S \Rightarrow s_1 \in S))$.
\end{definition}

\subsection{Specification}

Following Alpern and Schneider [7], we let the specification of program to consist of a safety specification and a liveness specification.

\begin{definition} [Safety]
The safety specification is specified in terms of a set of transitions, $\delta_b$, that the  program  is  not  allowed  to  execute.  Thus,  a  sequence
$\sigma = \langle s_0,s_1 , \ldots\rangle$ refines the safety specification $\delta_b$ iff  $\forall j : 0\!<\!j\!<\! length(\sigma) : (s_j , s_{j+1}) \notin \delta_b$.
\end{definition}

\begin{definition} [Liveness]
The liveness specification is specified in terms of a leads-to property ($L \leadsto T$) to denote, where both $L$ and $T$  are state predicates. Thus, a sequence $\sigma = \langle s_0,s_1 , \ldots\rangle$ refines the liveness specification iff $\forall j :$ $L$ is true in $s_j :  (\exists k: j \leq k < length(\sigma) :$ $T$ is true in $s_k)$.
\end{definition}

\begin{definition}
A specification, is a tuple $\langle Sf, Lv  \rangle$, where $Sf$ is a safety specialization and $Lv$ is a liveness specification. A  sequence  $\sigma$ satisﬁes  $spec$  iff it refines $Sf$ and $Lv$.
\end{definition}

\begin{definition}
[Refines]
$p[]_k \delta_e$ refines $spec$ from $S$ iff the following conditions hold:
\begin{list}{\labelitemi}{\leftmargin=0.5em}
\itemsep=-2mm
\vspace*{-2mm}
\item $S$ is closed in $\delta_p \cup \delta_e$, and
\item Every computation of $p[]_k \delta_e$ that starts from a state in S refines $spec$.
\end{list}
\end{definition}
We note that from the above definition, it follows that starting from a state in $S$, execution of either a program action or an environment action results in a state in $S$. Transitions that start from a state in $S$ and reach a state outside $S$ will be modeled as faults (cf. Definition \ref{def:fault}). 

\begin{definition} [Invariant]
If $p$ refines $spec$ from $S$ and $S \neq \phi$, we say that $S$ is an invariant of $p$ for $spec$.
\end{definition}

\subsection{Faults and Fault-Tolerance}

\begin{definition}
[Faults]
\label{def:fault}
A fault for $p (=\langle S_p, \delta_p \rangle)$ is a subset of $S_p \times S_p$.
\end{definition}

\begin{definition} [\ensuremath{p[]_k\delta_e[]f\ computation}]
Let $p$ be a program with state space $S_p$ and transitions $\delta_p$. Let $\delta_e$ be an environment for program $p$, $k$ be an integer greater than 1, and $f$ be the set of faults for program $p$. We say that a sequence $\langle s_0, s_1, s_2, ...\rangle$ is a $p[]_k\delta_e []f$ computation \textit{iff}
\begin{list}{\labelitemi}{\leftmargin=0.5em}
\itemsep=-2mm
\vspace*{-3mm}
\item $\forall i : i \geq 0 : s_i \in S_p$, and 
\item $\forall i : i \geq 0 : (s_i, s_{i+1}) \in \delta_p \cup \delta_e \cup f$ , and \item $\forall i : i \geq 0 : (s_i, s_{i+1}) \in \delta_e \Rightarrow$ \newline$\forall l: i<l<i+k : (\exists s'_l:: (s_l, s'_l) \in \delta_p \Rightarrow (s_l, s_{l+1}) \in (\delta_p \cup f))$, and \item $\exists n : n \geq 0 : (\forall j : j > n: (s_{j-1}, s_j) \in (\delta_p \cup \delta_e))$.
\end{list}
\end{definition}

The definition of fault-span captures the boundary up to which program could be perturbed by faults. Thus,

\begin{definition}
[Fault-span]
$T$ is an $f$-span of $p[]_k\delta_e$ from $S$ \textit{iff}
\begin{list}{\labelitemi}{\leftmargin=0.5em}
\itemsep=-2mm
\vspace*{-3mm}
\item $S\Rightarrow T$, and
\item for every computation $\langle s_0, s_1, s_2, \ldots \rangle$ of $p[]_k\delta_e[]f$, where $s_0 \in S$, $\forall i: s_i \in T$.
\end{list}
\end{definition}

A failsafe fault-tolerant program ensures that safety property is not violated even if faults occur. In other words, we have

\begin{definition}
[failsafe f-tolerant]
$p[]_k\delta_e$ is failsafe $f$-tolerant to $spec$ (=$\langle{Sf, Lv}\rangle$) from $S$ \textit{iff} the following two conditions hold:
\begin{list}{\labelitemi}{\leftmargin=0.5em}
\itemsep=-2mm
\vspace*{-3mm}
\item $p[]_k\delta_e$ refines $spec$ from $S$, and
\item every computation prefix of $p[]_k\delta_e[]f$ that starts from $S$ refines $Sf$.
\end{list}

In addition to satisfying the safety property, a masking fault-tolerant program recovers to its invariant.

\end{definition}

\begin{definition}
[masking f-tolerant]
$p$ is masking $f$-tolerant to $spec$ from $S$ \textit{iff} the following two conditions hold:
\begin{list}{\labelitemi}{\leftmargin=0.5em}
\itemsep=-2mm
\vspace*{-3mm}
\item $p[]_k\delta_e$ is failsafe $f$-tolerant to $spec$, and
\item there exists $T$ such that (1) $T$ is an $f$-span of $p[]_k\delta_e$ from $S$ and (2) for every computation  $\sigma (=\langle s_0, s_1, s_2, \ldots \rangle)$ of $p[]_k\delta_e[]f$ that starts from a state in $S$ if there exists $i \!>\! 0$ such that $s_i \in T-S$, then there exists $j \!>\! i$ such that $s_j \in S$.
\end{list}

\end{definition}

Condition (2) above simply means that in any computation which starts in $S$, when the program leaves $S$, it should return back to $S$. 

We also define the notion of stabilizing programs. We extend the definition from \cite{d74} and \cite{ssDolev2000} by requiring a stabilizing program to satisfy certain safety property during recovery. We consider this generalized notion because it allows us to capture program restrictions (such as inability to change environment variables) and because it is useful in our design of algorithm for adding masking fault-tolerance. The traditional definition of stabilization is obtained by setting $\delta_b$ in the following definition to be the empty set. 

\begin{definition} [Safe Stabilization]
$p[]_k\delta_e$ is $\delta_b$-safe stabilizing for invariant $S$ \textit{iff} following conditions hold:
\begin{list}{\labelitemi}{\leftmargin=0.5em}
\itemsep=-2mm
\vspace*{-3mm}
\item $S$ is closed in $\delta_p \cup \delta_e$, and
\item for any $p[]_k\delta_e$ computation $\langle s_0, s_1, s_2, ... \rangle$ there does not exist $l$ such that $(s_l,s_{l+1}) \in \delta_b$, and
\item for any $p[]_k\delta_e$ computation $\langle s_0, s_1, s_2, ... \rangle$ there exists $l$ such that $s_l \in S$.

\end{list}

\end{definition}

\begin{remark}
The notion of safe stabilization has been viewed from different angles in the literature. In \cite{CDDLR13}, authors consider the case where the program reaches an acceptable states quickly and converges to legitimate states after a longer time. By contrast, our notion simply requires that certain transitions (that violate safety specification) cannot be executed during recovery. 
\end{remark}

\section{Addition of Safe Stabilization}
\label{sec:stabilization}

In this section, we present our algorithm for adding safe stabilization to an existing program. In Section \ref{sefAddition}, we identify the problem statement. In Section \ref{sec:stabkeq2}, we present our algorithm for the case where the parameter $k$ (that identifies the fairness between program and environment acitons) is set to $2$. Due to reasons of space, the algorithm for arbitrary value of $k$ is presented in the Appendix. 

\subsection{Problem Definition}
\label{sefAddition}

The problem for adding safe stabilization begins with a program $p$, its invariant $S$, and a safety specification $\delta_b$ that identifies the set of bad transitions. The goal is to add stabilization so that starting from an arbitrary state, the program recovers to $S$. Moreover, we want to ensure that during recovery the program does not execute any transition in $\delta_b$. Also, we want to make sure that the execution of environment actions cannot prevent recovery to $S$. Thus, the problem statement is as follows:

\fbox{\rule{0mm}{0mm}
\begin{minipage}{3in}
Given program $p$ with state space $S_p$ and transitions $\delta_p$, state predicate $S$, set of bad transitions $\delta_b$, environment $\delta_e$, and $k > 1$, identify $p'$ with state space $S_p$ such that:
\begin{list}{\labelitemi}{\leftmargin=0.5em}
\item $p'|S = p|S$
\item $p'[]_k\delta_e$ is $\delta_b$-safe stabilizing for invariant $S$
\end{list}
\end{minipage}
}

\subsection{Addition of Safe Stabilization}
\label{sec:stabkeq2}

In this section, we present an algorithm for the problem of addition of stabilization defined in the Section~\ref{sefAddition}. The algorithm proposed here adds stabilization for $k\!=\!2$. When $k\!=\!2$, the environment transition can execute immediately after any program transition. By contrast, for larger $k$, the environment transitions may have to wait until the program has executed $k\!-\!1$ transitions.  
Observe that if $\delta_b \cap \delta_e$ is nonempty then adding stabilization is impossible. This is due to the fact that if the program starts in a state where such a transition can execute then it can immediately violate safety. Hence, this algorithm (but not the algorithms for adding failsafe and masking fault-tolerance) assumes that $\delta_b \cap \delta_e = \phi$.
%

The algorithm for adding stabilization is as shown in Algorithm \ref{ALG:K2}. In this algorithm, $\delta'_p$ is the set of transitions of the final stabilizing program. Inside the invariant, the transitions should be equal to the original program. Therefore, in the first line, we set $\delta'_p$ to $\delta_p|S$. State predicate $R$ is the set of states such that every computation starting from $R$ has a state in $S$. Initially (Line \ref{k2:initialR}) $R$ is initialized to $S$. In each iteration, state predicate $R_p$ is the set of states that can reach a state in $R$ using a safe program transition, i.e., a transition not in $\delta_b$. In Line~\ref{k2:addRp} we add such program transitions to $\delta'_p$.

In the loop on Lines~\ref{k2:loopBegin}-\ref{k2:loopEnd}, we add more states to $R$. We add $s_0$ to $R$ (Line \ref{k2:addtoR}), whenever every computation starting from $s_0$ has a state in $S$. A state $s_0$ can be added to $R$ only when there is no environment transition starting from $s_0$ and going to state outside $R \cup Rp$. In addition to this condition, there should be at least one transition from $s_0$ that reaches $R$. The loop on Lines \ref{k2:mainLoopBegin}-\ref{k2:mainLoopEnd} terminates if no state is added to $R$ in the last iteration. Upon termination of the loop, the algorithm declares failure to add stabilization if there exists a state outside $R$. Otherwise, it returns $\delta'_p$ as the set of transitions of the stabilizing program. 

We use Figure \ref{addStab} to illustrate Algorithm~\ref{ALG:K2}. 
Figure \ref{addStab} depicts the status of the state space in a hypothetical $i^{th}$ iteration of loop on Lines \ref{k2:mainLoopBegin}-\ref{k2:mainLoopEnd}. In this iteration state \textbf{A} is added to $R$. This is due to the fact that (1) there is at least one transition from \textbf{A} (namely $(\textbf{A}, \textbf{F})$) that reaches $R$ and (2) there is no environment transition from \textbf{A} that reaches outside $R\cup R_p$. Likewise, state \textbf{C} is also added to $R$. 
State \textbf{B} is not added to $R$ due to environment transition $(\textbf{B}, \textbf{E})$. 
Likewise, state \textbf{D} is also not added to $R$. 
State \textbf{E} is not added to $R$ since there is no transition from \textbf{E} to a state in $R$. 

In the next, i.e., $(i+1)^{th}$, iteration, \textbf{E} is added to $R$ since there is a transition $(E, A)$ and $A$ was added to $R$ in the $i^{th}$ iteration. Continuing this, $D$ is added in the $(i+2)^{th}$ iteration.

\begin{figure}
\begin{center}
\includegraphics[width=80mm,scale=0.5]{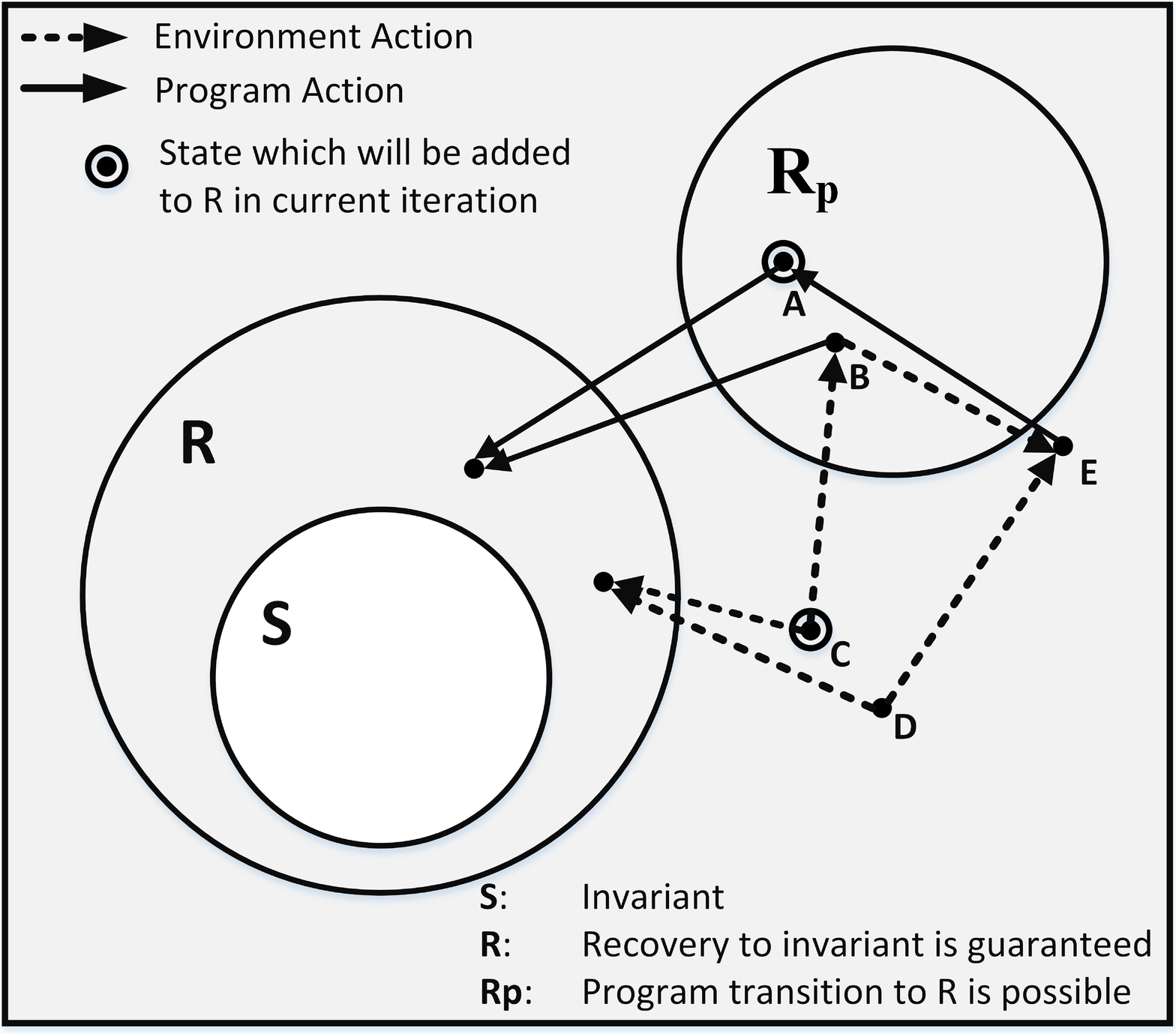}
\caption{Illustration of how R expands in Algorithm \ref{ALG:K2}}
\label{addStab}
\end{center}
\end{figure}

\begin{algorithm}
{
\caption{Addition of safe stabilization}
\label{ALG:K2}
\begin{algorithmic} [1]
\INPUT $S_p, \delta_p, \delta_e, S,$ and $ \delta_b$
\OUTPUT $\delta'_p$ or Not-Possilbe
\STATE $\delta'_p := (\delta_p | S);$
\STATE $R = S;$ \label{k2:initialR}
\REPEAT  \label{k2:mainLoopBegin}
\STATE  $R' = R;$
\STATE $Rp = \{s_0 | s_0 \notin R  \wedge   \exists s_1  : s_1 \in R  : (s_0, s_1) \notin \delta_b\};$
\FOR{$each \ s_0 \in  R_p$ }
\STATE \label{k2:addRp} $\delta'_p = \delta'_p \cup  \{(s_0, s_1) | (s_0, s_1) \notin \delta_b \wedge s_1 \in R \};$
\ENDFOR
\FOR{$each \ s_0 \notin R : \nexists s_2 \in \neg (R \cup R_p) : (s_0, s_2) \in \delta_e \wedge \newline (\exists s_1 : s_1 \in (R \cup R_p) : (s_0, s_1) \in \delta_e  \lor   s_0 \in R_p)$}  \label{k2:loopBegin}
\STATE \label{k2:addtoR} $R = R \cup {s_0};$	
\ENDFOR \label{k2:loopEnd}
\UNTIL \label{k2:mainLoopEnd} $(R' = R);$
\IF{$\exists s_0 \notin R $}
\RETURN  'Not-Possible';
\ELSE
\RETURN  $\delta'_p;$
\ENDIF

\end{algorithmic}
}
\end{algorithm}

\begin{theorem}
\label{thm:alg1}
Algorithm \ref{ALG:K2} is sound and complete. And, its complexity is polynomial.
\end{theorem}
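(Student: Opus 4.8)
The plan is to split the argument into three parts, matching the three claims. For \textbf{complexity}, I would first observe that each execution of the \texttt{repeat} loop body either adds at least one state to $R$ or triggers termination (since the loop exits exactly when $R' = R$). Because $R \subseteq S_p$ and $R$ only grows, there are at most $|S_p|$ iterations. Inside each iteration, computing $R_p$, updating $\delta'_p$, and scanning for states to add to $R$ all require examining each state and each transition in $\delta_p \cup \delta_e$ a constant number of times, so each iteration is $O(|S_p|^2)$ (or linear in $|S_p| + |\delta_p| + |\delta_e|$). Hence the total running time is polynomial in $|S_p|$; this part is routine.

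For \textbf{soundness}, I would prove the loop invariant that at the end of every iteration, every state in $R$ satisfies: (i) every $p'[]_k\delta_e$ computation starting there reaches $S$, and (ii) no such computation uses a transition in $\delta_b$; moreover $S$ is closed in $\delta'_p \cup \delta_e$ and $\delta'_p|S = \delta_p|S$. The base case is $R = S$, where closure of $S$ in $\delta_p \cup \delta_e$ is given and $\delta'_p = \delta_p|S$ so both (i) and (ii) hold trivially. For the inductive step, I would argue by a ranking/measure argument: a state $s_0$ newly added to $R$ in iteration $i$ has (by the loop guard on Lines~\ref{k2:loopBegin}–\ref{k2:loopEnd}) no environment transition leaving $R \cup R_p$, and either lies in $R_p$ (so it has a safe program transition into $R$, which we added to $\delta'_p$ on Line~\ref{k2:addRp}) or has an environment transition into $R \cup R_p$. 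The key point is that from $s_0$, with $k=2$, after any environment step the program gets to move; since $s_0 \in R_p$ implies $s_0$ has an enabled program action into $R$ (which we kept in $\delta'_p$), the fairness constraint forces that whenever the environment would otherwise move to a state in $R_p \setminus R$, the program instead moves — but $R_p$-states themselves reach $R$ in one safe program step, so recovery to $S$ follows by induction on the iteration number at which a state enters $R$. I would need to be careful that $\delta'_p$ contains \emph{only} safe transitions (guaranteed by the $\notin \delta_b$ guards on Lines~\ref{k2:addRp} and in the definition of $R_p$) and that adding program transitions in later iterations does not break closure of $S$ — which holds because those transitions start outside $S$. If the algorithm returns $\delta'_p$, then $R = S_p$, so every state recovers to $S$ safely, i.e.\ $p'[]_k\delta_e$ is $\delta_b$-safe stabilizing for $S$; combined with $\delta'_p|S = \delta_p|S$ this meets the problem statement.

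For \textbf{completeness}, I would show the contrapositive: if the algorithm declares \texttt{Not-Possible}, i.e.\ $R \neq S_p$ at fixpoint, then no revised program $p'$ with $p'|S = p|S$ can be $\delta_b$-safe stabilizing. The idea is to prove, by induction on the fixpoint, that for any state $s \notin R$ at termination, \emph{every} candidate $p'$ admits a $p'[]_k\delta_e$ computation from $s$ that either never reaches $S$ or executes a $\delta_b$-transition. Concretely, a state $s$ fails to be added to $R$ because either (a) it has no transition at all into $R$ via a safe program transition and no safe route, so from $s$ the program is stuck or must use an unsafe transition (note any program transition from $s$ to a state in $R$ that is \emph{safe} would have put $s$ into $R_p$, hence eventually into $R$), or (b) it has an environment transition into $\neg(R \cup R_p)$, which the adversarial environment may take, landing in another non-$R$ state; by induction that successor also has a bad/non-recovering computation, and we prepend this environment step (legal regardless of $k$, since environment steps are always allowed and the fairness only constrains the steps \emph{after} them, which we fill with whatever the inductive computation prescribes). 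Chaining these yields an infinite computation avoiding $S$, or one forced through $\delta_b$. The \textbf{main obstacle} I anticipate is case (b) in completeness combined with the fairness parameter: I must check that the environment's disruptive step into $\neg(R\cup R_p)$ can always be scheduled consistently with the $k=2$ fairness constraint (it can, because after that step the program simply follows the inductively-guaranteed bad computation, and no program action is skipped that fairness would require), and symmetrically in soundness that the program's "speed advantage" genuinely prevents the $s_4,s_5,s_4,\dots$-style cycles — this is exactly why $R_p$ is defined via safe \emph{program} transitions and why those transitions are retained in $\delta'_p$. Getting the interaction of the $R_p$ layer with the one-step fairness precisely right is the crux.
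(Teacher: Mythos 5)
Your complexity argument and your soundness argument are essentially the paper's: the paper proves soundness via a lemma (induction over the iteration at which a state enters $R$) showing every computation of $p'[]_k\delta_e$ from a state in $R$ reaches $S$, plus the observation that all transitions added after Line 1 start outside $S$, so $p'|S=p|S$. One wording slip: fairness does not make ``the program move instead of the environment'' when the environment would enter $R_p\setminus R$; rather, after the environment moves \emph{into} an $R_p$-state, the $k=2$ constraint forces the next step to be a program step, and every retained program transition from an $R_p$-state goes into $R$, which is what closes the induction.

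The completeness part, however, has a genuine gap at exactly the point you call the crux. Your case (a) asserts that a state $s\notin R$ with no safe program transition into $R$ is ``stuck or must use an unsafe transition.'' That is false: a candidate program $p''$ (with $\delta''_p\cap\delta_b=\phi$ and $p''|S=p|S$) may move safely from $s$ to \emph{another} state outside $R$, in particular into $R_p\setminus R$, and your argument says nothing about why such excursions cannot eventually reach $S$. The paper's proof lives precisely in this case: from a state in $\neg(R\cup R_p)$ that is not environment-enabled, any safe program move lands in $\neg R$ (else the source would be in $R_p$); if it lands in $R_p\setminus R$, then by the characterization of states outside $R$ there is an environment transition from that state back into $\neg(R\cup R_p)$, and this environment step is legal because the immediately preceding step was a program step. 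Your second problem is the claim that the environment's disruptive step is ``legal regardless of $k$, since environment steps are always allowed and fairness only constrains the steps after them.'' Under the definition of $p[]_k\delta_e$ computations this is not so: with $k=2$, after an environment step the very next step must be a program step whenever one is enabled, so you cannot simply prepend an environment step and then ``follow the inductively-guaranteed bad computation''---that computation may itself begin with an environment step that is now forbidden. The repair is to prove the statement in prefix-relative form, as the paper does: for \emph{every} prefix of a $p''[]_k\delta_e$ computation ending in a state of $\neg(R\cup R_p)$, there is a legal suffix all of whose states stay outside $R$, established by the case split on whether the last state is environment-enabled; completeness then follows because a state rejected by the algorithm either lies in $\neg(R\cup R_p)$ or has an initial environment step into it. As written, your induction has no well-founded structure over the rejected states and its case analysis omits the only hard case, so the completeness claim does not go through.
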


For reasons of space, we provide the proofs in Appendix.

\section{Case Study: Stabilization of  Smart Grid}
\label{sec:case}
In this section we illustrate how Algorithm~\ref{ALG:K2} is used to add safe stabilization to a controller program of a smart grid. We consider an abstract version of the smart grid described in \cite{smartGrid} (see Figure \ref{smartGrid}). In this example, the system consists of a generator $G$ and two loads $Z_1$ and $Z_2$. There are three sensors in the system. Sensor G shows the power generated by the generator, and sensors 1 and 2 show the demand of load $Z_1$ and $Z_2$, respectively. The goal is to ensure that proper load shading is used if the load is too high (respectively, generating capacity is too low). 

\begin{figure} [H]
\begin{center}
\includegraphics[width=80mm,scale=0.5]{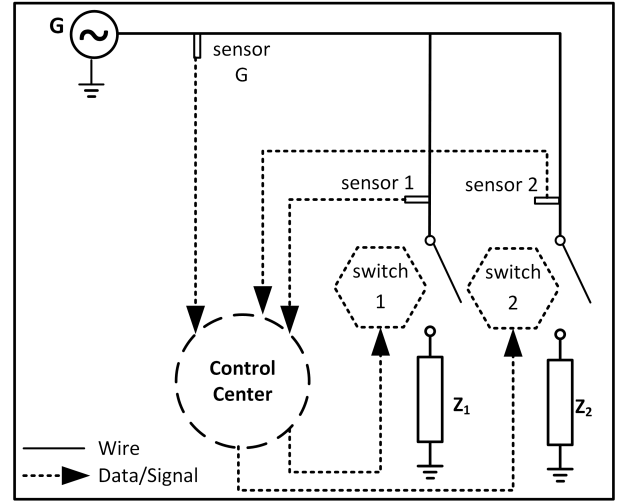}
\caption{Elementary single generator smart grid system}
\label{smartGrid}
\end{center}
\end{figure}

The control center is shown by a dashed circle in Figure \ref{smartGrid}. It can read the values of the sensors and turn on/off switches connected to the loads. The program of the control center should control switches in a manner that all the conditions below are satisfied: 
\begin{enumerate}
 \item Both switches should be turned on if the overall sensed load is less than or equal to the generation capacity. 
 \item If sensor values reveal that neither load can individually be served by G then both are shed.
 \item If only one load can be served then the smaller load is shed assuming the larger load can be served by G.
 \item  If only one can be  served and the larger load exceeds the generation capacity, the smaller load is served.
\end{enumerate}

\subsection{Program Model}
\label{sec:case:model}
We model the program of the smart grid shown in Figure.~\ref{smartGrid} by program $p$ which has five variables as follows: \\ \\
$V_G: $ The value of sensor G. \\
$V_1: $ The value of sensor 1. \\
$V_2: $ The value of sensor 2. \\
$w_1: $ The status of switch 1. \\
$w_2: $ The status of switch 2.

The value of each sensor is an integer in the range $[0, max]$. And, the status of each switch is a Boolean. 


The invariant $S$ for this program includes all the states which are legitimate according to the conditions 1-4 mentioned above. Therefore, $S$ is the union of state predicates $I_1$ to $I_6$ as follows \footnote{We need to add $0 \leq V_1, V_2, V_g \leq max$ to all conditions. For brevity, we keep these implicit.}:\\ \\
{\small
$I_1 = (V_1 + V_1 \leq V_G) \wedge (w_1  \wedge w_2 )$ \\
$I_2 = V_1 \leq V_G \wedge V_2 > V_G) \wedge  (w_1  \wedge \neg w_2 )$ \\
$I_3 =(V_1 > V_G \wedge V_2 \leq V_G) \wedge  (\neg w_1 \wedge w_2 )$ \\
$I_4 = (V_1 > V_G \wedge V_2 > V_G) \wedge  ( \neg w_1  \wedge \neg w_2 )$ \\
$I_5 = (V_1 + V_2 > V_G \wedge V_1 \leq V_G \wedge V_2 \leq V_G \wedge V_1 \leq V_2) \wedge \newline (\neg w_1 \wedge w_2 )$ \\
$I_6 = (V_1 + V_2 > V_G \wedge V_1 \leq V_G \wedge V_2 \leq V_G \wedge V_1 > V_2) \wedge \newline (w_1  \wedge \neg w_2 )$
}

\begin{observation}
\label{obs:forEveryE}
For \textit{any} value of $V_1$, $V_2$, and $V_G$, there exists an assignmet to $w_1$ and $w_2$ such that the resulting state is in $S$. 

\end{observation}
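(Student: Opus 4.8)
The plan is to prove this by a straightforward exhaustive case analysis on where the triple $(V_1, V_2, V_G)$ sits relative to the comparisons $V_1 + V_2$ vs.\ $V_G$, $V_1$ vs.\ $V_G$, and $V_2$ vs.\ $V_G$ (and, in one corner, $V_1$ vs.\ $V_2$), recalling that all three sensor readings lie in $[0, max]$ and are hence non-negative. The underlying point is that the ``sensor-value guards'' --- the first conjunct of each $I_i$, obtained by discarding its switch-assignment conjunct --- together form a total case split on the sensor values. Once the applicable guard is identified, the second conjunct of that $I_i$ prescribes the required assignment to $w_1$ and $w_2$, and the resulting state then lies in $S = \bigcup_{i=1}^{6} I_i$ by construction.

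Concretely, I would first branch on whether $V_1 + V_2 \leq V_G$. If it holds, the guard of $I_1$ is satisfied, so turning both switches on puts the state in $I_1 \subseteq S$ (and, incidentally, non-negativity of $V_1, V_2$ already forces $V_1 \leq V_G$ and $V_2 \leq V_G$ here). Otherwise $V_1 + V_2 > V_G$, and I branch further on the two comparisons $V_1 \leq V_G$ and $V_2 \leq V_G$. The combination $V_1 \leq V_G \wedge V_2 > V_G$ matches the guard of $I_2$, so take switch $1$ on and switch $2$ off; $V_1 > V_G \wedge V_2 \leq V_G$ matches the guard of $I_3$, so take switch $1$ off and switch $2$ on; $V_1 > V_G \wedge V_2 > V_G$ matches the guard of $I_4$, so turn both switches off. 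The last combination, $V_1 \leq V_G \wedge V_2 \leq V_G$ together with $V_1 + V_2 > V_G$, splits once more: $V_1 \leq V_2$ gives the guard of $I_5$ (take switch $1$ off, switch $2$ on) and $V_1 > V_2$ gives the guard of $I_6$ (take switch $1$ on, switch $2$ off). Every $(V_1, V_2, V_G)$ lands in exactly one of these leaves, which establishes the claim.

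I do not expect any real obstacle: the statement is essentially the observation that conditions 1--4 at the start of Section~\ref{sec:case} define a total partition of the sensor space. The only point worth making explicit in the write-up is the use of non-negativity of the sensor readings --- it guarantees that $V_1 + V_2 \leq V_G$ implies $V_1 \leq V_G$ and $V_2 \leq V_G$, so no triple escapes the guards of $I_1$ through $I_6$ --- and perhaps remarking that the same case split simultaneously witnesses pairwise disjointness of the $I_i$, which is used (implicitly) elsewhere in the case study.
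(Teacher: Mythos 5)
Your proposal is correct, and it coincides with the paper's own (implicit) justification: the paper states Observation~\ref{obs:forEveryE} without proof, and the intended argument is exactly the total case split on the sensor-value guards of $I_1$ through $I_6$ that you carry out, with non-negativity of the sensor readings closing the only potential gap (and with the paper's ``$V_1 + V_1 \leq V_G$'' in $I_1$ read as the evident typo for $V_1 + V_2 \leq V_G$, as you do).
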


The values of sensors can change by environment transitions. In addition, environment can keep the current value of a sensor by self-loop environment transitions. However, environment cannot change the status of switches. Thus, set of environment transitions, $\delta_e$ is equal to $\{(s_0, s_1)| \big(w_1(s_0) = w_1(s_1)\big) \wedge \big(w_2(s_0)= w_2(s_1)\big) \}$, where $w_i(s_j)$ shows the status of the switch $i$ in state $s_j$.

Program cannot change the value of any sensor. Thus, set of bad transitions, $\delta_b$ for this program is equal to $\{(s_0, s_1)| \ V_G(s_0) \neq V_G(s_1) \vee V_1(s_0) \neq V_1(s_1) \vee V_2(s_0) \neq V_2(s_1)\}$, where $V_i(s_i)$ shows the value of the variable $V_i$ in state $s_i$.

For the sake of presentation and to illustrate the role of $k$, we also assume that program cannot change the status of more than one switch in one transition. For this case, we add more transitions to the set of bad transitions. We call the set of bad transitions for this case $\delta_{b_2}$ and it is equal to  $\{(s_0, s_1)| \ V_G(s_0) \neq V_G(s_1) \vee V_1(s_0) \neq V_1(s_1) \vee V_2(s_0) \neq V_2(s_1) \vee (w_1(s_0) \neq w_1(s_1) \wedge w_2(s_0) \neq w_2(s_1))\}$.

\subsection{Adding Stabilization}
Here, we apply Algorithm~\ref{ALG:K2} to add stabilization to program $p$ defined in Section \ref{sec:case:model}. We illustrate the result of applying Algorithm~\ref{ALG:K2} for two sets of bad transitions, $\delta_b$ and $\delta_{b_2}$. 

\subsubsection{Adding Stabilization for $\delta_b$ }

At the beginning of Algorithm~\ref{ALG:K2}, $R$ is initialized with $S$. In the first iteration of loop on Lines~\ref{k2:mainLoopBegin}-\ref{k2:mainLoopEnd}, $R_p$ is the set of states outside $S$ that can reach a state in $S$ with only one program transition. A program transition cannot change the value of any sensor. 

According to Observation~\ref{obs:forEveryE}, from each state in $\neg S$ it is possible to reach a state in $S$ with changing the status of switches. Therefore, following set of transitions are added to $\delta'_p$ by Line~\ref{k2:addRp}: \\ \\
$\{(s_0,s_1)| \  V_1(s_0) = V_1(s_1) \wedge V_2(s_0) = V_2(s_1) \wedge V_G(s_0) = V_G(s_1)\wedge s_0 \notin \cup_{i=1}^6I_i \wedge  s_1 \in \cup_{i=1}^6I_i\}$

Since every state in $\neg S$ ($\neg R$) is in $R_p$, there does not exist any environment transition starting from any state to a state in $\neg (R \cup R_p)$. Therefore, all the states in $\neg R$ are added to $R$ by Line~\ref{k2:addtoR}. 

In the second iteration no more states are added to $R$. Thus, loop on Line~\ref{k2:mainLoopBegin}-\ref{k2:mainLoopEnd} terminates. Since there is no state in $\neg R$, the algorithm returns $\delta'_p$ as the transition of the resulting $\delta_b$-safe stabilizing program for $S$.
\subsubsection{Adding Stabilization for $\delta_{b_2}$ }

At the beginning of Algorithm~\ref{ALG:K2}, $R$ is initialized with $S$. In the first iteration of loop on Lines~\ref{k2:mainLoopBegin}-\ref{k2:mainLoopEnd}, $R_p$ is the set of states outside $S$ that can reach a state in $S$ with only one program transition. A program transition cannot change the value of any sensor. In addition, according to $\delta_{b_2}$, it cannot change the status of both switches. Therefore, state predicate $R_p$ is the union of state predicates $R_{p_1}$ to $R_{p_6}$ as follows ($\oplus$ denotes the xor operation):\\ \\ 
{\small
$R_{p_1} = (V_1 + V_1 \leq V_G) \wedge (w_1 \oplus w_2) $ \\
$R_{p_2} = (V_1 \leq V_G \wedge V_2 > V_G)  \wedge (w_1 \oplus \neg w_2) $ \\
$R_{p_3} = (V_1 > V_G \wedge V_2 \leq V_G) \wedge (\neg w_1 \oplus  w_2) $ \\
$R_{p_4} = (V_1 > V_G \wedge V_2 > V_G) \wedge (\neg w_1 \oplus \neg w_2) $ \\
$R_{p_5} = (V_1 + V_2 > V_G \wedge V_1 \leq V_G \wedge V_2 \leq V_G \wedge V_1 \leq V_2) \wedge (\neg w_1 \oplus  w_2) $ \\
$R_{p_6} = (V_1 + V_2 > V_G \wedge V_1 \leq V_G \wedge V_2 \leq V_G \wedge V_1 > V_2) \wedge (w_1 \oplus \neg w_2) $\\
}

Similarly, $\neg (R \cup R_p)$ includes every state that is outside $S$ and more than one step is needed to reach a state in $S$. Therefore, state predicate $\neg (R \cup R_p)$ is the union of state predicates $R'_{p_1}$ to $R'_{p_6}$ as follows:\\ \\
{\small
$R'_{p_1} = (V_1 + V_1 \leq V_G) \wedge (\neg w_1 \wedge \neg w_2) $ \\
$R'_{p_2} = (V_1 \leq V_G \wedge V_2 > V_G) \wedge (\neg w_1 \wedge  w_2) $ \\
$R'_{p_3} = (V_1 > V_G \wedge V_2 \leq V_G) \wedge ( w_1 \wedge  \neg w_2)$ \\
$R'_{p_4} = (V_1 > V_G \wedge V_2 > V_G) \wedge ( w_1 \wedge  w_2) $ \\
$R'_{p_5} = (V_1 + V_2 > V_G \wedge V_1 \leq V_G \wedge V_2 \leq V_G \wedge V_1 \leq V_2) \wedge ( w_1 \wedge  \neg w_2) $ \\
$R'_{p_6} = (V_1 + V_2 > V_G \wedge V_1 \leq V_G \wedge V_2 \leq V_G \wedge V_1 > V_2) \wedge (\neg w_1 \wedge  w_2) $\\
}

Now, observe that for any status of switches, there exists a state in $\neg (R \cup R_p)$. That means from any state in $S_p$ it is possible to reach a state in  $\neg (R \cup R_p)$ without changing the value of switches using an environment transition. Therefore, no state is added to $R$ in the first iteration, and loop on Lines~\ref{k2:mainLoopBegin}-\ref{k2:mainLoopEnd} terminates in the first iteration. Since, all the states outside $S$ remains in $\neg R$, the algorithm declares no solution to the addition problem exists. Therefore, according to the completeness of the Algorithm~\ref{ALG:K2}, there does not exist any $\delta_{b_2}$-safe stabilizing program for the smart grid described in this section when $k$ is equal to 2.  This is expected since the only solution for this problem requires changing both sensors simultaneously before the environment is able to disrupt it again. This program does have a solution for $k\!=\!3$. But we omit its derivation for lack of space. 
\section{Addition of Fault-Tolerance}
\label{sec:addft}
In this section, we present our algorithm for adding failsafe and masking fault-tolerance. 
In Section \ref{sec:probdef}, we identify the problem statement for adding these levels of fault-tolerance. In Section \ref{sec:failsafe}, we present our algorithm for adding failsafe fault-tolerance. Section \ref{sec:masking} presents an algorithm for adding masking fault-tolerance. Finally, we show that the same algorithm can be used for adding nonmasking fault-tolerance considered in \cite{ag93}.

\subsection{Problem Definition}
\label{sec:probdef}

In addition to the set of bad transitions $\delta_b$ that we used for providing safe stabilization, in this case, we introduce additional parameter $\delta_r$ that identifies additional restrictions on program transitions. As an example, consider the case where a program cannot change the value of sensor, i.e., it can only read it. However, the environment can change the value of the sensor. In this case, transitions that change the value of the sensor are disallowed as program transitions but, they are acceptable as environment transitions. Note that this was not necessary in Section \ref{sec:stabilization} since we could simply add these transitions to $\delta_b$, i.e., transitions that violate safety. This is acceptable since addition stabilization requires $\delta_b \cap \delta_e \!=\! \phi$. However, adding failsafe or masking fault-tolerance is possible even if $\delta_b \cap \delta_e \neq \phi$. Hence, we add the parameter $\delta_r$ explicitly. The problem statement for addition of fault-tolerance is as follows:

\fbox{\rule{0mm}{0mm}
\begin{minipage}{3in}
Given $p$, $\delta_e$, $S$, $spec$, set of program restrictions $\delta_r$, $k > 1$, and $f$ such that $p[]_k\delta_e$ refines $spec$ from $S$, and $\delta_p \cap \delta_r = \phi$, identify $p'$ and $S'$ such that:

\begin{list}{\labelitemi}{\leftmargin=0.5em}
\item \textbf{\textit{C1}}: every computation of $ p'[]_k\delta_e$ that starts in a state in $S'$ is a computation of $ p[]_k\delta_e$ that starts in $S$, and

\item \textbf{\textit{C2}}: $p'[]_k\delta_e$ is failsafe (respectively, masking) $f$-tolerant to $spec$ from $S'$ and
\item \textbf{\textit{C3}}: $\delta'_p \cap \delta_r = \phi$

\end{list}
\end{minipage}
}

The problem statement requires that the program does not introduce new behaviors in the absence of faults (Constraint $C1$), provides desired fault-tolerance (Constraint $C2$), and does not include a transition in $\delta_r$ (Constraint $C3$).

\begin{assumption}
\label{assumption:infinite}
For simplicity of the algorithms and its proof, we assume that there are no deadlocks in $\delta_p [] \delta_e$ in any state in $S$. In other words, for any $s_0$ in $S$, there exists a state $s_1$ in $S$ such that $(s_0, s_1)$ is in $\delta_p \cup \delta_e$. If this is not true then we can add self-loops corresponding to those states, i.e., states in $\{ s_0 | s_0 \in S \wedge \forall s_1 :: (s_0, s_1) \not \in \delta_p \cup \delta_e \}$. Finally, after the fault-tolerant program is obtained, we remove these self-loops. We note that this does not affect either soundness or completeness of any of our algorithms. 
\end{assumption}

\subsection{Adding Failsafe Fault-Tolerance}
\label{sec:failsafe}

The algorithm for adding failsafe fault-tolerance for $k\!=\!2$ is as shown in Algorithm \ref{ALG:FAILSAFE}. In this algorithm set $ms_1$ is the set of states no matter how they are reached, starting from them, there exists a computation suffix which violates safety. Set $ms_2$ is the set of states if they are reached by a program or fault transition, starting from them, there exists a computation suffix which violates safety. Note that $ms_2$ always includes $ms_1$. Initially, $ms_1$ is initialized to $\{s_0| (s_0, s_1) \in  f \cap \delta_b\}$, and $ms_2$ is initialized to $ms_1 \cup \{s_0 | \exists s_1:: (s_0, s_1) \in \delta_e \cap \delta_b\}$ by Lines \ref{fai:ini1} and \ref{fai:ini2}. Set $mt$ is the set of transitions that the final program cannot have, as they are in $\delta_b \cup \delta_r$, or reach a state in $ms_2$. 

In the loop on Lines \ref{fail:firstLoopBeging} - \ref{fail:firstLoopEnd}, more states are added to $ms_1$ and $ms_2$. Consequently, $mt$ should be updated. Any state $s_0$ is added to $ms_1$ by Line \ref{fai:addtoMs1} in two cases: 1) if there exists a fault transition starting from $s_0$ that reaches a state in $ms_2$ 2) if there exists an environment transition $(s_0, s_1)$ such that $(s_0, s_1)$ is a bad transition or $s_1 \in ms_1$, and any transition starting from $ms_1$ reaches a state in $ms_2$ (i.e., any transition $(s_0, s_2) \in mt$). 

A state is added to $ms_2$ by Line \ref{fai:addtoMs2} if it is added to $ms_1$ or if there exists an environment transition to a state in $ms_1$. We update $mt$ by Line \ref{fai:updateMt} to include transitions to new states added to $ms_2$.The loop on Lines \ref{fail:firstLoopBeging} - \ref{fail:firstLoopEnd} terminates if no state is added to $ms_1$ or $ms_2$ in an iteration. 

Then, we focus on creating new invariant, $S'$, for the revised program. $S'$ cannot include any transition in $ms_2$, as starting from any state in $ms_2$, there is a computation which violates safety. In addition, the set of program transitions of the revised program, $\delta'_p$, cannot include any transition in $mt$, as by any transition in $mt$ a state in $ms_2$ is reached. Thus, we initialized $\delta'_p$ with $\delta_p|S - mt$. Note that $S'$ should be closed in $p'[]_2\delta_e$. In addition, according to Assumption~\ref{assumption:infinite}, $S'$ cannot include any deadlock state. Thus, anytime that we remove a state from $S'$ we ensure these condition by calling $RemoveDeadlock$ and $EnsureClosure$ functions. 

Note that according to condition \textit{C1}, of the addition problem defined in the Section \ref{sec:probdef}, the set of computations of the revised program inside its invariant should be a subset of set of computations of the original program inside its invariant. Thus, the revised program cannot have any new computation starting from its invariant. In loop on Lines \ref{fail:removems4Begin} -  \ref{fail:removems4End} we remove states from $S'$ to avoid creating such new computations. 

\begin{algorithm}
{
\caption{Adding Failsafe Fault-Tolerance}
\label{ALG:FAILSAFE}
\begin{algorithmic} [1]
\INPUT $S_p, \delta_p, \delta_e, S, \delta_b$, $\delta_r$, $k$, and $f$
\OUTPUT $(\delta'_p, S')$ or Not-possilbe

\STATE \label{fai:ini1} $ms_1 = \{s_0| (s_0, s_1) \in  f \cap \delta_b\};$
\STATE \label{fai:ini2} $ms_2 = ms_1 \cup \{s_0 | \exists s_1:: (s_0, s_1) \in \delta_e \cap \delta_b\};$
\STATE $mt =   \{(s_0, s_1) | \ (s_0 ,s_1) \in  (\delta_b \cup \delta_r) \vee  s_1 \in ms_2\};$
\REPEAT \label{fail:firstLoopBeging}
\STATE $ms_1' = ms_1;$
\STATE $ms_2' = ms_2;$

\STATE \label{fai:addtoMs1} $ms_1 = ms_1 \cup \{s_0 | \ \exists s_1: s_1 \in  ms_2: (s_0 ,s_1) \in f \} \cup \{s_0 |(\exists s_1:: (s_1 \in ms_1 \wedge (s_0, s_1) \in \delta_e) \vee (s_0, s_1) \in \delta_e \cap \delta_b )) \wedge (\forall s_2:: (s_0, s_2) \in mt)\};$
\STATE \label{fai:addtoMs2}$ms_2 = ms_2 \cup ms_1 \cup \{s_0 | \exists s_1: s_1 \in ms_1 : (s_0, s_1) \in \delta_e) \};$
\STATE \label{fai:updateMt} $mt =   \{(s_0, s_1) | \ (s_0 ,s_1) \in  (\delta_b \cup \delta_r) \vee  s_1 \in ms_2\};$

\UNTIL \label{fail:firstLoopEnd} {$(ms_1' = ms_1 \wedge ms_2' = ms_2  )$}
\STATE $\delta'_p = \delta_p |S - mt;$
\STATE $S' = RemoveDeadlock(S - ms_2,  \delta'_p, \delta_e);$
\REPEAT \label{fail:removems4Begin}
\IF {$S' =\phi$}
\RETURN \label{fail:isEmpty} Not-possible;
\ENDIF
\STATE $S'' = S';$
\STATE $\delta'_p = EnsureClosure(\delta'_p, S');$
\STATE $ms_3 = \{s_0 | \  \big ( \exists s_1, s_2:: (s_0, s_1) \in \delta_e \ \wedge (s_0, s_2) \in \delta_p \ \big ) \wedge$ $
                      \big ( \nexists s_3:: (s_0, s_3) \in \delta'_p  \big ) \};$
\STATE $ms_4 = \{ s_0| \exists s_1 :: (s_1 \in ms_3 \wedge (s_0, s_1) \in \delta_e) \}$
\STATE $S' = RemoveDeadlock (S' - ms_4, \delta_p , \delta_e)$
\UNTIL ($S'' = S'$)  \label{fail:removems4End}
\STATE \label{fail:line:finaldelta} $\delta'_p = \Big(\delta'_p \cup \big ( ( S_p - S') \times S_p )\big ) \Big ) -mt ;$
\RETURN $(\delta'_p, S')$ ;
\newline
\STATE $RemoveDeadlock (S, \delta_p , \delta_e)$
\STATE \hspace*{3.5mm} \textbf{repeat}
\STATE \hspace*{7mm} $S' = S;$
\STATE \hspace*{8.5mm}$S = S - \{s_0| \ (\forall s_1: s_1 \in S: (s_0,s_1) \notin  \delta_p) \}; $
\STATE \hspace*{8.5mm}$S = S - \{s_0 | \ \exists s_1 :: (s_0,s_1) \in \delta_e \wedge s_0 \in S \wedge s_1 \notin S\};$
\STATE \hspace*{3.5mm} \textbf{until} $(S' = S)$
\STATE \hspace*{3.5mm} \textbf{return} $S;$
\newline
\STATE $EnsureClosure (p, S)$
\STATE \hspace*{3.5mm} \textbf{return} $p-\{(s_0,s_1):: s_0 \in S \wedge s_1 \notin S\};$

\end{algorithmic}
}
\end{algorithm}

Consider a state $s_0$ starting from which there exists environment transition $(s_0, s_1)$. In addition there exists program transition $(s_0, s_2)$ in the set of program transitions of the original program, $\delta_p$. Set $ms_3$ includes any state like $s_0$. If $s_0$ is reached by environment transition $(s_3,s_0)$, in the original program according to fairness assumption, $(s_0, s_1)$ cannot occur. Thus, sequence $\langle s_3, s_0, s_1\rangle$ cannot be in any computation of $p[]_2\delta_e$. However, if we remove program transition $(s_0, s_2)$ in the revised program, $\langle s_3, s_0, s_1\rangle$ can be in computation of $p'[]_2\delta_e$. Therefore, we should remove any state like $s_3$ from the invariant. Set $ms_4$ includes any state like $s_3$.

After creating invariant $S'$, we add program transitions outside it to $\delta'_p$. Note that outside $S'$, any program transition which is not in $mt$ is allowed to exists in the final program. In Line~\ref{fail:isEmpty}, the algorithm declare the no solution to the addition problem exists, if $S'$ is empty. Otherwise, at the end of the algorithm, it returns $(\delta'_p, S')$ as the solution to the addition problem.

\begin{theorem}
\label{thm:fsMainTheorem}
Algorithm \ref{ALG:FAILSAFE} is sound and complete. And, its complexity is polynomial.
\end{theorem}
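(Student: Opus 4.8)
The plan is to prove the three claims---soundness, completeness, and polynomial running time---separately, with the bulk of the work resting on a characterization lemma for the fixpoints computed in the first loop (Lines \ref{fail:firstLoopBeging}--\ref{fail:firstLoopEnd}). First I would establish the \emph{semantic meaning} of $ms_1$ and $ms_2$ at the fixpoint: $s \in ms_1$ iff, no matter how $s$ is reached in a $p[]_2\delta_e[]f$ computation, the environment/fault adversary can force a future transition in $\delta_b$ against any program whose transitions avoid $\delta_b\cup\delta_r$; and $s\in ms_2$ iff the same holds whenever $s$ is entered by a program or fault transition (so by the $k=2$ fairness rule the environment may take the next move). Since both sets only grow and the state space is finite the loop terminates; I would prove ``$\subseteq$'' (every state placed in $ms_1$/$ms_2$ is genuinely bad) by induction on the iteration in which it is added, and ``$\supseteq$'' (every genuinely bad state is eventually added) by induction on the length of a shortest adversary strategy violating safety from that state. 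The definition of $mt$ then exactly captures the program transitions that must be forbidden, and $ms_3,ms_4$ capture the ``fairness leakage'' states discussed after the algorithm.

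For soundness I would argue: \textit{C3} holds because $\delta'_p$ starts from $\delta_p|S-mt$, is only shrunk by $\mathit{EnsureClosure}$, and is finally intersected with the complement of $mt \supseteq (S_p\times S_p)\cap\delta_r$. \textit{C1} holds because $S'\subseteq S$ and $\delta'_p|S'\subseteq\delta_p|S$ supply all transitions, and the only way a genuinely new computation of $p'[]_2\delta_e$ from $S'$ could arise is that $p'$ lost a program transition at some state $s_0$ that in $p$ would have been forced to execute after an incoming environment transition---which is precisely $s_0\in ms_3$; the loop on Lines \ref{fail:removems4Begin}--\ref{fail:removems4End} deletes all environment-predecessors ($ms_4$) of such states and re-closes / re-deadlock-removes $S'$ to a fixpoint, so no state of the final $S'$ can be the source of such a leak. \textit{C2} (failsafe): refinement of $spec$ from $S'$ follows from \textit{C1} plus the hypothesis that $p[]_2\delta_e$ refines $spec$ from $S$; maintenance of $Sf$ under $p'[]_2\delta_e[]f$ from $S'$ follows from the $ms_1/ms_2$ characterization by carrying the invariant along any computation that the current state lies outside $ms_2$ after a program/fault step and outside $ms_1$ after an environment step---using $S'\cap ms_2=\phi$, $\delta'_p\cap mt=\phi$, and closure of $S'$ to preserve the invariant and to forbid the next transition from lying in $\delta_b$. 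Deadlock-freedom of $S'$ (needed for Assumption \ref{assumption:infinite}) is immediate from $\mathit{RemoveDeadlock}$.

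For completeness I would show that whenever the algorithm returns \textbf{Not-possible} on Line \ref{fail:isEmpty} (i.e.\ $S'=\phi$), no solution $(p',S'_\star)$ exists. Using the characterization lemma, any invariant $S'_\star$ of a failsafe solution must be disjoint from $ms_2$, must be closed in $\delta'_p\cup\delta_e$ and deadlock-free, and---by \textit{C1}---must exclude every state that would leak a new computation, i.e.\ it survives the $ms_3/ms_4$ deletion step; a standard argument shows the algorithm computes the $\subseteq$-largest set with these properties, so if that set is empty there is no nonempty $S'_\star$. Polynomial complexity is routine: the state space and transition relation have size polynomial in $|S_p|$, each \textbf{repeat} loop (and the inner loops of $\mathit{RemoveDeadlock}$) strictly grows a monotone set or halts, hence runs $O(|S_p|^2)$ times, and each iteration does a polynomial amount of set manipulation.

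The main obstacle I expect is the argument for \textit{C1} and completeness around $ms_3/ms_4$: one must show that deleting exactly the environment-predecessors of ``program-stuck'' states, and iterating closure and deadlock removal, is both \emph{sufficient} to rule out all new computations introduced by removing program transitions and \emph{necessary} in the sense that no smaller deletion works---which requires careful reasoning about the $k=2$ fairness rule, in particular that a new computation must first diverge at a step where $p$ had, and was obliged to take, a program transition absent from $p'$. Extending any of this to arbitrary $k$ (where one must track how many program moves the program still ``owes'') would be the additional difficulty if the statement were intended for general $k$ rather than $k=2$.
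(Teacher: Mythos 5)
Your proposal follows essentially the same route as the paper's proof: the same decomposition into a checkable characterization of condition \textit{C1} with the $ms_3/ms_4$ fairness-leakage removal (the paper's Lemma~\ref{th:C1} and Theorem~\ref{fai:noms4}), a fixpoint argument about $ms_1/ms_2$ and $mt$ giving both safety of the output (soundness of \textit{C2}) and the fact that any solution's invariant must avoid $ms_2$, $mt$ and the iterated $ms_4$ sets (hence the largest-set/emptiness argument for completeness), plus the routine monotone-iteration complexity bound. The only cosmetic differences are that you package the $ms_1/ms_2$ facts as a single two-directional characterization and carry a forward invariant for \textit{C2}, whereas the paper proves just the directions it needs (Theorems~\ref{fai:reachms1} and \ref{fai:ms1} for completeness, and a backward trace from the first violating transition for soundness).
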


For reasons of space, we provide the proofs in Appendix.

\subsection{Adding Masking Fault-Tolerance}
\label{sec:masking}

In this section, we present the algorithm for adding masking fault-tolerance in the presence of unchangeable environment actions. The intuition behind this algorithm is as follows: First, we utilize the ideas from adding stabilization. Intuitively, in Algorithm \ref{ALG:K2}, we constructed the set $R$ from where recovery to invariant ($S$) was possible. In case of stabilization, we wanted to ensure that $R$ includes all states. However, for masking, this is not necessary. Also, the algorithm for adding stabilization does not use faults as input. Hence, we need to ensure that recovery from $R$ is not prevented by faults. This may require us to prevent the program from reaching some states in $R$. Hence, this process needs to be repeated to identify a set $R$ such that both recovery to $S$ is provided and faults do not cause the program to reach a state outside $R$. In addition, in masking fault-tolerance, like failsafe fault-tolerance, program should refine safety of spec even in presence of faults. Thus, the details of the Algorithm~\ref{ALG:MASKING} are as follows:

In this algorithm, both Algorithm~\ref{ALG:K2} and Algorithm~\ref{ALG:FAILSAFE} with some modification are used in the loop on Lines~\ref{mas:outerLoopBegin}-\ref{mas:outerLoopEnd}. 
First, in the loop on Lines \ref{mas:stBegin}-\ref{mas:stEnd} we build set $R$ which include all states from which all computations reach a state in $S$. In addition, all required program transitions are added to $\delta'_p$ by Line~\ref{mas:addToDeltaP}. 
When loop on Lines \ref{mas:stBegin}-\ref{mas:stEnd} terminates, we set $ms_1$ to $\neg (R \cup R_p)$, because a state in $\neg (R \cup R_p)$ should not be reached by any program, fault, or environment transition. We also set $ms_2$ to $\neg R$, as a state in $\neg R$ should not be reach by any fault or program transition. Then by Lines \ref{mas:expandingMsBeign}-\ref{mas:expandingMsEnd} we expand $ms_1$ and $ms_2$ with the same algorithm in the Algorithm~\ref{ALG:FAILSAFE}. In Line~\ref{mas:removeMT}, we remove any transition in $mt$ from $\delta'_p$, as any transition in $mt$ reaches a state in $ms_2$, and a program transition should not reach a state in $ms_2$.

In the loop on Lines~\ref{mas:ms4Begin}-\ref{mas:ms4End}, we remove some states from $S'$ to avoid new behavior inside the invariant just like we did in Algorithm~\ref{ALG:FAILSAFE}. If any state $s_0$ is removed from $S'$ in Lines \ref{mas:removeMs2} or \ref{mas:removeMs4}, we need to repeat the loop on Lines~\ref{mas:outerLoopBegin}-\ref{mas:outerLoopEnd}, because it is possible that a state in $R$ was dependent on $s_0$ to reach $S'$, but $s_0$ is not in $S'$ anymore. In Line~\ref{mas:isEmpty}, the algorithm declares that there does not exist a solution if $S'$ is empty. Otherwise, when loop on Lines~\ref{mas:outerLoopBegin}-\ref{mas:outerLoopEnd} terminates, the algorithm returns $(\delta'_p, S')$ as the solution to the addition problem.

\begin{theorem}
\label{thm:mMainTheorem}
Algorithm \ref{ALG:MASKING} is sound and complete. And, its complexity is polynomial.
\end{theorem}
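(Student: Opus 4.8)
The plan is to establish the three required properties of Algorithm~\ref{ALG:MASKING} --- soundness, completeness, and polynomial complexity --- by reducing, wherever possible, to the already-established Theorems~\ref{thm:alg1} and~\ref{thm:fsMainTheorem}, and then handling the genuinely new ingredient, namely the interleaving of the stabilization-style fixpoint (construction of $R$) with the failsafe-style fixpoint (expansion of $ms_1, ms_2$) inside the outer loop.

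For \textbf{soundness}, I would argue that the pair $(\delta'_p, S')$ returned on termination satisfies constraints $C1$, $C2$, and $C3$ of the problem statement in Section~\ref{sec:probdef}. Constraint $C3$ ($\delta'_p \cap \delta_r = \phi$) follows because every transition added to $\delta'_p$ is screened against $mt \supseteq \delta_r$, and the final line removes $mt$ once more. Constraint $C1$ (no new behavior inside $S'$) follows from the $ms_3/ms_4$ removal loop on Lines~\ref{mas:ms4Begin}--\ref{mas:ms4End} together with $EnsureClosure$ and $RemoveDeadlock$ --- exactly the argument used for Algorithm~\ref{ALG:FAILSAFE}, which I would invoke rather than repeat. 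For $C2$ I must show $p'[]_k\delta_e$ is masking $f$-tolerant from $S'$: (i) \emph{failsafe} holds because, upon termination of the inner expansion loop, $ms_2$ is closed in the sense that no program/fault transition from $S'$ enters $ms_2$ and no environment transition violates $\delta_b$ or enters $ms_1$, which is the same invariant that makes Algorithm~\ref{ALG:FAILSAFE} sound; (ii) \emph{recovery} (the $T$-span condition) holds by taking $T = R$ --- the stabilization argument of Theorem~\ref{thm:alg1} shows every $p'[]_k\delta_e$ computation from $R$ reaches $S'$, and the outer-loop fixpoint guarantees that no fault transition from any state in $R$ escapes $R$ (such a state would have been pushed into $ms_2 = \neg R$ and removed), so $R$ is an $f$-span and recovery from $T - S'$ to $S'$ is ensured.

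For \textbf{completeness}, I would prove the contrapositive: if the algorithm returns Not-possible (i.e., $S' = \phi$ at Line~\ref{mas:isEmpty}), then no valid $(p', S')$ exists. The key is a monotonicity/necessity lemma: every state the algorithm removes from consideration is one that \emph{must} be excluded from the invariant of any correct masking-tolerant revision. States in $\neg(R \cup R_p)$ cannot be reached at all (no recovery possible), states placed into $ms_1$ or $ms_2$ by the failsafe-style expansion cannot be in any invariant of a failsafe program (by the completeness argument of Theorem~\ref{thm:fsMainTheorem}), and states removed via $ms_4$ cannot be in $S'$ without introducing new behavior and violating $C1$. Since the outer loop only re-runs when such a forced removal has occurred, a transfinite-descent / least-fixpoint argument shows the returned $S'$ is the largest state predicate that can possibly serve as the invariant; if it is empty, no solution exists. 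I would formalize this by showing that for any solution $(q, S_q)$, the set $S_q$ is contained in the value of $S'$ at every iteration (an invariant maintained across both the inner and outer loops).

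The \textbf{polynomial complexity} claim is the easiest: each of the three nested loops (outer loop Lines~\ref{mas:outerLoopBegin}--\ref{mas:outerLoopEnd}, the $R$-construction loop, the $ms$-expansion loop, the $ms_4$ loop, and the $RemoveDeadlock$ loop) strictly grows or shrinks a subset of $S_p$ or $S_p \times S_p$ on each iteration that does not terminate it, so each loop runs $O(|S_p|)$ or $O(|S_p|^2)$ times, with per-iteration work polynomial in $|S_p|$; nesting them multiplies polynomials, giving an overall polynomial bound. The main obstacle I anticipate is the interaction between the two fixpoints: removing a state from $S'$ in the failsafe phase can invalidate recovery paths established in the stabilization phase, and vice versa, so I must argue that the outer loop's re-entry discipline actually converges to a \emph{consistent} pair (both "recovery into $S'$" and "faults confined to $R$" simultaneously true) rather than oscillating --- this is where a careful monotone-decreasing measure on $(R, S')$ and a proof that neither phase ever \emph{adds} states back is essential, and it is the crux on which both soundness and completeness rest.
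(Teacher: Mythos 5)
Your overall strategy mirrors the paper's: reuse the stabilization argument for recovery, the failsafe argument for safety and for the necessity of excluding $ms_2$/$ms_4$ states, and handle the interleaving through the outer-loop fixpoint. The completeness and complexity parts are essentially the paper's proof. However, there is a genuine gap in your soundness argument for the recovery condition of masking tolerance: you take $T=R$ as the $f$-span, and this choice fails. $R$ is not closed under environment transitions --- a state $s_0$ is admitted into $R$ as long as no environment transition from $s_0$ reaches $\neg(R\cup R_p)$, so environment transitions from $R$ into $R_p\setminus R$ are explicitly allowed, and hence computations of $p'[]_k\delta_e[]f$ starting in $S'$ can leave $R$. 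The paper instead proves that $T=R\cup R_p$ is the $f$-span, and to get recovery from the $R_p$ part it needs an extra lemma you do not have: a state of $R_p$ can only be entered by an \emph{environment} transition (never by a program or fault transition, since every transition into $ms_2\supseteq\neg R$ is in $mt$ and fault entry would put the source in $ms_1$), so by the $k=2$ fairness assumption the safe program transition from $R_p$ into $R$ is guaranteed to execute next. Without this, ``every computation from $R$ reaches $S'$'' does not cover the states actually reachable under faults and environment.

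A second, related imprecision: your claim that a state of $R$ with a fault transition escaping $R$ ``would have been pushed into $ms_2=\neg R$ and removed'' conflates removal from $S'$ with unreachability. Such a state is only excluded from the \emph{invariant}; what soundness actually requires is that no computation starting in $S'$ ever reaches a state placed in $ms_1$ (in any iteration of the outer loop), which the paper establishes by a backward trace-back argument (every transition into $ms_2$ is in $mt$, hence not a program transition; fault or environment entry pushes the predecessor into $ms_1$ or $ms_2$; iterating this reaches $s_0\in ms_2$, a contradiction). You invoke this style of argument for the failsafe half of $C2$ but not for confining the fault-span, and it is needed for both. You do correctly identify the fixpoint-consistency issue (pruning by $mt$ invalidating recovery paths) as the crux; the paper resolves it exactly by re-running the outer loop whenever $S'$ shrinks, so your planned monotone-measure argument is the right instrument there --- but the span and the $R_p$-via-environment lemma must be repaired first.
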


For reasons of space, we provide the proofs in Appendix. 

Finally, we note that the addition of nonmasking fault-tolerance considered \cite{ag93} is also possible with Algorithm \ref{ALG:MASKING}. In particular, in this case, we need to set $\delta_b$ to be the empty set. In principle, Algorithm \ref{ALG:MASKING} could also be used to add stabilization. However, we presented Algorithm \ref{ALG:K2} separately since it is much simpler algorithm and forms the basis of Algorithm \ref{ALG:MASKING}. Moreover, Algorithm \ref{ALG:K2} can be extended to arbitrary value of $k$ (as done in Algorithm \ref{alg:GeneralK} in the Appendix). However, the corresponding problem of failsafe and masking fault-tolerance is open. In particular, Algorithms \ref{ALG:FAILSAFE} and \ref{ALG:MASKING} are sound even if we use an arbitrary value of $k$. However, they are not complete.

\section{Extensions of Algorithms}
\label{sec:extensions}

In this section, we consider problems related to those addressed in Sections \ref{sec:stabilization} and \ref{sec:addft}. Our first variation focuses on Definition \ref{def:progenvcomp}. In this definition, we assumed that the environment is {\em fair}. Specifically, at least $k\!-\!1$ actions execute between any two environment actions. We consider variations where (1)  this property is satisfied eventually. In other words, for some initial computation, environment actions may prevent the program from executing. However, eventually, fairness is provided to program actions, and (2) program actions are given even reduced fairness. Specifically, we consider the case where several environment actions can execute in a row but program actions execute infinitely often.

\begin{algorithm}
{
\caption{Adding Masking Fault-Tolerance}
\label{ALG:MASKING}
\begin{algorithmic} [1]
\STATE $S' = S;$
\STATE $\delta'_p = (\delta_p | S);$
\REPEAT \label{mas:outerLoopBegin}
\STATE $R = S';$
\STATE $S'' = S';$
\REPEAT \label{mas:stBegin}
\STATE $R' = R;$
\STATE $Rp = \{s_0 | s_0 \notin R  \wedge   \exists s_1 : s_1\in R  : (s_0, s_1) \notin (\delta_b \cup \delta_r) \};$
\FOR{$each \ s_0 \in  R_p$ }
\STATE  \label{mas:addToDeltaP} $\delta'_p = \delta'_p \cup  \{(s_0, s_1) | (s_0, s_1) \notin (\delta_b \cup \delta_r) \wedge s_1 \in R \};$
\ENDFOR
\FOR{$each \ s_0 \notin R : \nexists s_2 : s_2 \in  \neg(R \cup R_p) : (s_0, s_2) \in \delta_e \wedge  (\exists s_1 \in R \cup R_p : (s_0, s_1) \in \delta_e  \lor   s_0 \in R_p)$}
\STATE $R = R \cup {s_0};$	
\ENDFOR
\UNTIL \label{mas:stEnd} $(R' = R);$

\STATE $ms_1 = \neg (R \cup R_p);$
\STATE $ms_2 = \neg R;$

\STATE \label{mas:expandingMsBeign} $ms_1 = ms_1 \cup \{s_0| (s_0, s_1) \in  f \cap \delta_b\};$
\STATE $ms_2 = ms_2 \cup ms_1 \cup \{s_0 | \exists s_1:: (s_0, s_1) \in \delta_e \cap \delta_b\};$
\STATE $mt =   \{(s_0, s_1) | \ (s_0 ,s_1) \in  (\delta_b \cup \delta_r) \vee  s_1 \in ms_2\};$
\REPEAT
\STATE $ms_1' = ms_1;$
\STATE $ms_2' = ms_2;$

\STATE \label{masking:addtoMS1} $ms_1 = ms_1 \cup \{s_0 | \ \exists s_1: s_1 \in  ms_2: (s_0 ,s_1) \in f \} \cup \{s_0 |\big(\exists s_1: s_1 \in ms_1 : (s_0, s_1) \in \delta_e) \vee (s_0, s_1) \in (\delta_e \cap \delta_b )\big) \wedge \big(\nexists s_2:: (s_0, s_2) \in (\delta'_p - mt)\big)\};$
\STATE $ms_2 = ms_2 \cup ms_1 \cup \{s_0 | \exists s_1: s_1 \in ms_1 : (s_0, s_1) \in \delta_e) \};$
\STATE $mt =   \{(s_0, s_1) | \ (s_0 ,s_1) \in  (\delta_b \cup \delta_r) \vee  s_1 \in ms_2\};$

\UNTIL \label{mas:expandingMsEnd}{$ms_1' = ms_1 \wedge ms_2' = ms_2  $}
\STATE \label{mas:removeMT}$\delta'_p = \delta'_p - mt;$
\STATE \label{mas:removeMs2} $S' = RemoveDeadlock(S - ms_2,  \delta'_p, \delta_e);$
\REPEAT \label{mas:ms4Begin}
\IF {$S' =\phi$}
\RETURN \label{mas:isEmpty} Not-possible;
\ENDIF
\STATE $S''' = S';$
\STATE $\delta'_p = EnsureClosure(\delta'_p, S');$
\STATE $ms_3 = \{s_0 | \  \big ( \exists s_1, s_2:: (s_0, s_1) \in \delta_e \ \wedge (s_0, s_2) \in \delta_p \ \big ) \wedge$ $
                      \big ( \nexists s_3: (s_0, s_3) \in \delta'_p  \big ) \};$
\STATE $ms_4 = \{ s_0| \exists s_1 :: (s_1 \in ms_3 \wedge (s_0, s_1) \in \delta_e) \}$
\STATE \label{mas:removeMs4} $S' = RemoveDeadlock (S' - ms_4, \delta_p , \delta_e)$
\UNTIL ($S''' = S'$) \label{mas:ms4End}
\UNTIL \label{mas:outerLoopEnd} $(S'' = S')$ 
\RETURN $(\delta'_p, S')$ ;

\end{algorithmic}
}
\end{algorithm}

Our second variation is related to the invariant of the revised program, $S'$, and the invariant of the original program, $S$. In case of adding stabilization, we considered $S'=S$ whereas in case of adding failsafe and masking fault-tolerance, we considered $S' \subseteq S$. 

{\bf Changes to add stabilization and fault-tolerance with eventually fair environment. } \
No changes are required to Algorithms \ref{ALG:K2} or \ref{alg:GeneralK} even if environment is eventually fair. This is due to the fact that these algorithms construct programs that provide recovery from {\em any} state, i.e., they will provide recovery from the state reached after the point when fairness is restored. For Algorithms \ref{ALG:FAILSAFE} and \ref{ALG:MASKING}, we should change the input $f$ to include $\delta_e\cup \delta_f$. The resulting algorithm will ensure that the generated program will allow unfair execution of the program in initial states. However, appropriate fault-tolerance will be provided when the fairness is restored.

{\bf Changes to add stabilization and fault-tolerance with multiple consecutive environment actions. } \
If environment actions can execute consecutively, we can change input $\delta_e$ to be its transitive closure. In other words, if $(s_0, s_1)$ and $(s_1, s_2)$ are transitions in $\delta_e$, we add $(s_0, s_2)$ to $\delta_e$. With this change, the constructed program will provide the appropriate level of fault-tolerance (stabilizing, failsafe or masking) even if environment transitions can execute consecutively. 

{\bf Changes to add stabilization and fault-tolerance based on relation between $S'$ (invariant of the fault-tolerant program) and $S$ (invariant of the fault-intolerant program)}
No changes are required to Algorithms \ref{ALG:K2} or \ref{alg:GeneralK} even if we change the problem statement to allow $S' \subseteq S$ without affecting soundness or completeness. Regarding soundness, obsrve that the program generated by these algorithms ensure $S' = S$. Hence, they trivially satisfy $S' \subseteq S$. Regarding completeness, the intuition is that if it were impossible to recover to states in $S$ then it is impossible to recover to states that are a subset of $S$. 
Regarding Algorithms \ref{ALG:FAILSAFE} and \ref{ALG:MASKING}, if $S'$ is required to be equal to $S$ then they need to be modified as follows: In these algorithms if any state $S$ is removed (due to it being in $ms_2$, deadlocks, etc.) then they should declare failure.

\section{Related Work}
\label{sec:related}

This paper focuses on addition of fault-tolerance properties in the presence of unchangeable environment actions. This problem is an instance of model repair where some existing model/program is repaired to add new properties such as safety, liveness, fault-tolerance, etc. 
Model repair with respect to CTL properties was first considered in~\cite{begl99}, and abstraction techniques for the same are presented in~\cite{cbsk12}. In \cite{jgb05}, authors focus on the theory of model repair for memoryless LTL properties in a game-theoretic fashion; i.e., a repaired model is obtained by synthesizing a winning strategy for a 2-player game.
Previously  \cite{bek09}, authors have considered the problem of model repair for UNITY specifications \cite{cm88}. These results identify complexity results for adding properties such as invariant properties, leads-to properties etc. 
Repair of probabilistic  algorithms has also been considered in the literature \cite{sde08}

The problem of adding fault-tolerance to an existing program has been discussed in the {\em absence of environment actions}. 
This work includes work on controller synthesis \cite{rw89,gr09,cl98}. 
A tool for automated addition of fault-tolerance to distributed programs is presented in \cite{bka12}. This work utilizes BDD based techniques to enable synthesis of programs with state space exceeding $10^{100}$. However, this work does not include the notion of environment actions that cannot be removed. Hence, applying it in contexts where some processes/components cannot be changed will result in unacceptable solutions. At the same time, we anticipate that the BDD-based techniques considered in this work will be especially valuable to improve the performance of algorithms presented in this paper.

The work on game theory \cite{ pr89a,Jobstm05, t95} has focused on the problem of repair with 2-player game where the actions of the second player are not changed. However, this work does not address the issue of fault-tolerance. Also, the role of the environment in our work is more general than that in \cite{pr89a,pr89b,Jobstm05,t95}. Specifically, in the work on game theory, it is assumed that the players play in an alternating manner. By contrast, we consider more general interaction with the environment.

In \cite{blk11}, authors have presented an algorithm for adding recovery to component based models. They consider the problem where we cannot add to the interface of a physical component. However, it does not consider the issue of unchangeable actions of them considered in this work.

\section{Application for Distributed and Cyber-Physical Systems}
\label{sec:appli}

We considered the problem of model repair for systems with unchangeable environment actions. By instantiating these environment actions according to the system under consideration, this work can be used in several contexts. We briefly outline how this work can be used in the context of distributed systems and cyber-physical systems.

One instance of systems with unchangeable actions is distributed programs consisting of several processes. Consider such a collaborative distributed program where some components are developed in house and some are third party components. It is anticipated that we are not allowed to change  third party programs during repair. In that case, we can model the actions of those processes as unchangeable environment actions, and use algorithms provided in this paper to add stabilization/fault-tolerance. Our work is directly useful in high atomicity contexts where processes can view the state of all components but can modify only their own. In low atomicity contexts where processes have private memory that cannot be read by others, we need to introduce new restrictions. Specifically, in this context, we need to consider the issue of grouping \cite{bka12} where adding or removing a transition requires one to add or remove groups of transitions. In particular, if two states $s_0$ and $s_0'$ differ only in terms of private variables of another process then including a transition from $s_0$ requires us to add a transition from $s_0'$. Extending the algorithms in this context is beyond the scope of this paper.

Another instance in this context is a cyber-physical system. Intuitively, a CPS consists of computational components and physical components. One typical constraint in repairing these systems to satisfy new requirements is that physical components cannot be modified due to complexity, cost, or their reliance on natural laws about physics, chemistry etc. In other words, to repair a CPS model, we may not be allowed to add/remove actions which model physical aspects of the system. 
Therefore, using the approach proposed here, we can model such physical actions as unchangeable environment actions. After modeling the CPS, we can utilize the algorithms provided in this paper to add stabilization/fault-tolerance automatically, and be sure that the stabilizing/fault-tolerant models found by the algorithms do not require any change to physical components.

\section{Conclusion}
\label{sec:concl}

In this paper, we focused on the problem of adding fault-tolerance to an existing program which consists of some actions that are unchangeable. These unchangeable actions arise due to interaction with the environment, inability to change parts of the existing program, constraints on physical components in a cyber-physical system, and so on.

We presented algorithms for adding stabilization, failsafe fault-tolerance, masking fault-tolerance and nonmasking fault-tolerance. These algorithms are sound and complete and run in polynomial time (in the state space). This was unexpected in part because environment actions can play both an assistive and disruptive role. The algorithm for adding failsafe fault-tolerance was obtained by an extension of previous algorithm \cite{alisbookchapter} that added failsafe fault-tolerance without environment actions. However, the algorithms for masking and stabilizing fault-tolerance required a significantly different approach in the presence of environment actions.

We considered the cases where (1) all fault-free behaviors are preserved in the fault-tolerant program, or (2) only a nonempty subset of fault-free behaviors are preserved in the fault-tolerant program. We also considered the cases where (1) environment actions can execute with any frequency for an initial duration and (2) environment actions can execute more frequently than programs. In all these cases, we demonstrated that our algorithm can be extended while preserving soundness and completeness. Finally, as discussed in Section \ref{sec:appli}, these algorithms are especially  useful for repairing CPSs as well as repairing distributed systems where only a subset of processes are repairable. 

\bibliographystyle{unsrt}
\bibliography{bib}
\newpage
\appendix
\section{Proofs of Algorithm 
\ref{ALG:K2}}
\label{sec:2stab}

Based on the notion of fairness for program actions, we introduce the notion of whether an environment transition can be executed in a given computation prefix. Environment action can execute in a computation prefix if an environment action exists in the last state of the prefix and either (1) program cannot execute in the last state of the prefix or (2) the program has already executed $k\!-\!1$ steps. Thus, 

\begin{definition} [environment-enabled]
In any prefix $\sigma = \langle s_0, s_1,  \ldots, s_i\rangle$ of $p[]_k\delta_e$, $s_i$ is an environment-enabled state \textit{iff} \newline
$(\exists s :: (s_i, s) \in \delta_e) \wedge \newline
\Big ( \big(\nexists (s_i, s'):: (s_i, s') \in \delta_p \big ) \vee \newline \big (  \nexists j: j > i - k : (s_j, s_{j+1}) \in \delta_e \big) \Big )$.    
\end{definition}

\begin{lemma}
\label{self:RtoS}
Every computation of $p'[]_k\delta_e$ that starts from a state in $R$, contains a state in $S'$.
\end{lemma}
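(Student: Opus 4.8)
The plan is to prove Lemma~\ref{self:RtoS} by induction on the ``rank'' at which states enter the set $R$ during the execution of Algorithm~\ref{ALG:K2}. Formally, let $R_0 = S$ be the initial value, and for $j \ge 1$ let $R_j$ be the value of $R$ at the end of the $j$-th iteration of the loop on Lines~\ref{k2:mainLoopBegin}--\ref{k2:mainLoopEnd}, so that $R = \bigcup_j R_j$ upon termination. Define $\mathit{rank}(s)$ to be the least $j$ with $s \in R_j$. I would prove by strong induction on $n$ the statement: every computation of $p'[]_2\delta_e$ starting from a state $s$ with $\mathit{rank}(s) \le n$ contains a state in $S' = S$ (recall $S' = S$ for Algorithm~\ref{ALG:K2}, since the algorithm never shrinks $S$). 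The base case $n = 0$ is immediate: $s \in S$ already, and the first element of the computation is in $S$.

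For the inductive step, take $s_0$ with $\mathit{rank}(s_0) = n+1$ and any computation $\langle s_0, s_1, s_2, \ldots\rangle$ of $p'[]_2\delta_e$. If $s_0 \in S$ we are done, so assume $s_0 \notin S$; then $s_0$ was added to $R$ on Line~\ref{k2:addtoR} in iteration $n+1$, which means (reading the guard on Line~\ref{k2:loopBegin}) that (a) there is no environment transition from $s_0$ leaving $R_n \cup R_{p}$ (where $R_p$ is the $R_p$ computed in iteration $n+1$ from $R_n$), and (b) either $s_0 \in R_p$ or there is an environment transition from $s_0$ into $R_n \cup R_p$. I now do a case analysis on the transition $(s_0,s_1)$ taken in the computation. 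Case 1: $(s_0,s_1) \in \delta'_p$. By construction of $\delta'_p$ --- the only program transitions out of states in $R_p$ that the algorithm ever adds (Line~\ref{k2:addRp}) go into $R$, and outside $R$ no program transitions exist in $\delta'_p$ at all --- we get $s_1 \in R$, and moreover $s_1 \in R_n$ (since $R_p$ was built from $R_n$ and the added program transitions land in $R_n$; one has to check $R$ inside the iteration only grows by states whose program-successors are in the prior $R$). So $\mathit{rank}(s_1) \le n$ and the induction hypothesis applied to the suffix $\langle s_1, s_2, \ldots\rangle$ (which is itself a computation of $p'[]_2\delta_e$) finishes the case. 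Case 2: $(s_0,s_1) \in \delta_e$. By condition (a) above, $s_1 \in R_n \cup R_p \subseteq R$; if $s_1 \in R_n$ then $\mathit{rank}(s_1)\le n$ and we apply the induction hypothesis to the suffix; if $s_1 \in R_p \setminus R_n$, then $s_1$ has a program transition into $R_n$, but more importantly $s_1$ itself gets added to $R$ in iteration $n+1$, so $\mathit{rank}(s_1) \le n+1$ --- this does not immediately reduce the rank, so I need the fairness ($k=2$) argument to push forward one more step.

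The main obstacle, and the place where the $k=2$ fairness hypothesis is essential, is exactly this sub-case where an environment step lands in $R_p$ (a state from which only a \emph{program} transition was promised to make progress). Here I use that $k = 2$: immediately after the environment transition $(s_0,s_1)$, the program must execute in the next step \emph{if any program transition is enabled at $s_1$} (the third clause of Definition~\ref{def:progenvcomp}). Since $s_1 \in R_p$ it has an outgoing safe program transition in $\delta'_p$ (added on Line~\ref{k2:addRp}), hence a program transition \emph{is} enabled and moreover every program transition from $s_1$ in $\delta'_p$ goes into $R$, indeed into $R_n$ (again by the fact that the program transitions added for $R_p$-states all target the current $R$, which was $R_n$ at the start of iteration $n+1$). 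Therefore $s_2 \in R_n$, so $\mathit{rank}(s_2) \le n$, and the induction hypothesis applied to the suffix $\langle s_2, s_3, \ldots\rangle$ yields a state in $S$. One should be slightly careful that ``every program transition of $\delta'_p$ out of $s_1$ goes into $R$'' --- this follows because across the whole run of the algorithm the only program transitions ever put into $\delta'_p$ out of a state outside $S$ are the ones on Line~\ref{k2:addRp}, each of which targets the value of $R$ at that time, and $R$ only grows; combined with $s_1 \in R_p$ being added to $R$ in iteration $n+1$, its program-successors in $\delta'_p$ are in $R_n$.

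To wrap up: having handled both the case $s_0 \in S$ and the two cases for the first transition out of $s_0 \notin S$, the induction is complete, and since every state of $R$ has finite rank, every computation of $p'[]_2\delta_e$ starting in $R$ reaches $S = S'$. I would also remark at the start of the proof that a computation of $p'[]_2\delta_e$ starting in $s_0$ is infinite or reaches a state with no outgoing $\delta_p \cup \delta_e$ transition, but Assumption-style deadlock concerns do not arise here because whenever $s_0 \in R \setminus S$ it has an outgoing transition (by conditions (a),(b) at least one environment transition into $R\cup R_p$, or $s_0\in R_p$ giving a program transition), so the computation can always be extended and the rank-decrease argument always applies to a genuine next state. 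The only genuinely delicate point remains the interleaving bookkeeping in the $R_p$ sub-case; everything else is a routine unwinding of the algorithm's definitions.
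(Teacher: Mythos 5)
Your proof is correct and follows essentially the same route as the paper's: induction on the iteration at which a state enters $R$, a case split on whether the first step is a program or an environment transition, and the $k=2$ fairness clause to force a program step into the previous $R$ when an environment step lands in $R_p$. Your rank-based bookkeeping is just a more explicit formalization of the paper's argument (the aside that no program transitions exist in $\delta'_p$ outside $R$ is not quite accurate, since Line~\ref{k2:addRp} adds transitions from $R_p$-states before they join $R$, but that remark is not needed where you use it).
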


\begin{proof}
We prove this by induction. 

\noindent \textbf{Base case}: $R = S$. The statement is satisfied trivially. 

\noindent \textbf{Induction step}: A state $s_0$ is added to $R$ in two cases :\\ \\
\textbf{Case 1} $ (\nexists s_2 : s_2 \in \neg (R \cup R_p) :  (s_0, s_2) \in \delta_e) \wedge (\exists s_1 : s_1 \in (R \cup R_p) : (s_0, s_1) \in \delta_e)$ \\
Since there is no $s_2$ in $\neg (R \cup R_p)$ such that $(s_0, s_2) \in \delta_e$, for every $(s_0, s_1) \in \delta_e$, $s_1$ is in $R \cup R_p$. In addition, we know that there is at least one $s_1$ in $R \cup R_p$ such that $(s_0, s_1) \in \delta_e$.
If $s_1$ is in $R_p$, there is a program transition from $s_1$ to a state in $R$. As $(s_0, s_1) \in \delta_e$, because of fairness assumption, the program can occur, and reach $R$. Thus, every computation starting from $s_0$ has a state in $R$ (in the previous iteration). Hence, every computation starting from $s_0$ has a state in $S$.\\ \\
\textbf{Case 2}  $\nexists s_2 :s_2 \in \neg (R \cup Rp) :  (s_0, s_2) \in \delta_e \wedge s_0 \in R_p$\\ 
Since there is no $s_2$ in $ \neg (R \cup Rp)$ such that $(s_0, s_2) \in \delta_e$, for every $(s_0, s_1) \in \delta_e$, $s_1$ is either in $R$ or $R_p$. In addition, we know that there is at least one state $s_1$ in $R \cup R_p$ such that $(s_0, s_1) \in \delta_p$. In any computation of $p'[]_k\delta_e$ starting from $s_0$ if $(s_0, s_1) \in \delta'_p$ then $s_1 \in R$. If $(s_0, s_1) \in \delta_e$ then $s_1 \in R \cup R_p$. If $s_1$ is in $R_p$, there is a program transition from $s_1$ to a state in $R$. As $(s_0, s_1) \in \delta_e$, because of fairness assumption program can reach $R$. Thus, every computation starting from $s_0$ has a state in $R$. Hence, every computation starting from $s_0$ has a state in $S$.

\end{proof}

\begin{theorem}
\label{sk2soundness}
Algorithm~\ref{ALG:K2} is sound.
\end{theorem}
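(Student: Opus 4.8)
\textbf{Proof proposal for Theorem~\ref{sk2soundness}.}

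The plan is to show that the program $p'$ returned by Algorithm~\ref{ALG:K2} (when it does not declare failure) together with $\delta_e$ satisfies all three conditions in the definition of $\delta_b$-safe stabilization for invariant $S$. First I would dispatch the easy parts. For closure of $S$ in $\delta_p \cup \delta_e$: by hypothesis $S$ is an invariant, and since Line~1 sets $\delta'_p := \delta_p|S$, no new transition leaving $S$ is added among those starting in $S$ --- more precisely, any transition added on Line~\ref{k2:addRp} starts from a state in $R_p \subseteq \neg R \subseteq \neg S$, so it cannot violate closure of $S$; and $\delta_e$ closure of $S$ is given. For the safety condition (no computation of $p'[]_k\delta_e$ executes a transition in $\delta_b$): program transitions inside $S$ are unchanged and $S$ is an invariant for $spec$ (so they avoid $\delta_b$), the transitions added on Line~\ref{k2:addRp} are explicitly guarded by $(s_0,s_1)\notin \delta_b$, and environment transitions avoid $\delta_b$ because the algorithm assumes $\delta_b \cap \delta_e = \phi$. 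Hence no computation ever takes a $\delta_b$ transition.

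The substantive part is the third condition: every $p'[]_k\delta_e$ computation contains a state in $S$. Since the algorithm does not declare failure, upon termination $R$ contains every state of $S_p$. By Lemma~\ref{self:RtoS}, every computation of $p'[]_k\delta_e$ starting from a state in $R$ contains a state in $S'$; but here $S' = S$ (the stabilization algorithm does not shrink the invariant), and every state is in $R$, so every computation --- regardless of its start state --- reaches $S$. I would also need to confirm the hypotheses of Lemma~\ref{self:RtoS} are actually met by the $\delta'_p$ the algorithm outputs, namely that for each $s_0 \in R_p$ the required recovery transition into $R$ was indeed placed in $\delta'_p$ by the \textbf{for} loop at Line~\ref{k2:addRp}; this is immediate from the loop body. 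One subtlety to address explicitly: Lemma~\ref{self:RtoS} is stated for the final $R$ and final $\delta'_p$, but $\delta'_p$ only grows and $R$ only grows across iterations, and the guard $(s_0,s_1)\notin\delta_b$ on added transitions never changes, so transitions added in earlier iterations (pointing into an earlier, smaller $R$) still point into the final $R$; hence the inductive argument of the lemma goes through against the final objects, as the lemma's proof already implicitly uses.

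The main obstacle I anticipate is a careful treatment of the fairness/liveness reasoning inside the induction of Lemma~\ref{self:RtoS} --- specifically ensuring that when a state $s_0$ is added to $R$ via an environment transition into some $s_1 \in R_p$, the parameter value $k=2$ genuinely forces the program to take its recovery step from $s_1$ before any further environment transition can fire, so that the computation is pushed into $R$ rather than looping among states of $R_p$. Since Lemma~\ref{self:RtoS} is given, I would treat this as already established and simply cite it; the remaining work for soundness is then just the bookkeeping above. A final detail: I should note that Assumption~\ref{assumption:infinite} is not needed here (it belongs to the fault-tolerance algorithms), and that deadlock-freedom of $p'[]_k\delta_e$ outside $S$ is guaranteed implicitly because every state of $\neg S$ ends up in $R_p$ at some iteration and thus acquires an outgoing program transition into $R$ on Line~\ref{k2:addRp}, so no computation of $p'[]_k\delta_e$ terminates before reaching $S$.
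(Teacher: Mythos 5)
Your proposal is correct and follows essentially the same route as the paper's own (terse) proof: the paper argues that $p'|S = p|S$ because all transitions added after Line~1 start outside $S$, and obtains convergence by citing Lemma~\ref{self:RtoS} together with the fact that $R$ contains all states upon success, which is exactly your main argument (your closure/$\delta_b$-avoidance bookkeeping and the monotonicity remark about $R$ and $\delta'_p$ just make explicit what the paper leaves implicit). One tangential inaccuracy: your final remark that every state of $\neg S$ eventually enters $R_p$ is not quite right, since a state can be added to $R$ solely because all its environment transitions stay in $R \cup R_p$ (and at least one exists) without ever acquiring a program transition; but such a state is not deadlocked in $\delta'_p \cup \delta_e$, and this detail plays no role in the soundness argument via Lemma~\ref{self:RtoS}.
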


\begin{proof}
At the beginning of the algorithm $\delta'_p = \delta_p | S$ and all other transitions added to $\delta'_p$ in the rest of the algorithm starts outside $S$, so $p' | S = p |S$.
Finally, the convergence condition is satisfied based on Lemma~\ref{self:RtoS} and the fact that $R$ includes all states. 
\end{proof}

Now, we focus on showing that Algorithm~\ref{ALG:K2} is complete, i.e., if there is a solution that satisfies the problem statement for adding stabilization, Algorithm \ref{ALG:K2} finds one. The proof of completeness is based on the analysis of states that are not in $R$ upon termination. 

\begin{observation}
\label{notR1}
For any $s_0$ such that $s_0 \notin R$, we have \newline 
$\exists s_2 : s_2 \in \neg (R \cup Rp):  (s_0, s_2) \in \delta_e$, or \newline
$\nexists s_1 : s_1 \in (R \cup R_p): (s_0, s_1) \in \delta_e   \wedge  s_0 \notin R_p$.
\end{observation}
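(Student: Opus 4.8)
The plan is to recognize that Observation~\ref{notR1} is nothing more than the logical negation of the guard of the inner \textbf{for} loop on Lines~\ref{k2:loopBegin}--\ref{k2:loopEnd}, read off at the moment the algorithm terminates. No computation on computations or invariants is needed; the statement is a purely propositional consequence of the loop's exit condition.

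First I would record the termination condition. The main loop on Lines~\ref{k2:mainLoopBegin}--\ref{k2:mainLoopEnd} exits only when $R' = R$, i.e., when a full pass through the body adds no new state to $R$. In that final pass, $R_p$ is recomputed from the (now stable) set $R$, and the \textbf{for} loop on Lines~\ref{k2:loopBegin}--\ref{k2:loopEnd} adds to $R$ every $s_0 \notin R$ whose guard
\begin{align*}
C(s_0) \;\equiv\;\; & \big(\nexists s_2 \in \neg(R \cup R_p) : (s_0,s_2)\in\delta_e\big) \wedge {} \\
& \big((\exists s_1 \in (R\cup R_p): (s_0,s_1)\in\delta_e) \lor s_0 \in R_p\big)
\end{align*}
holds. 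Since no state is added during this pass, every $s_0 \notin R$ must fail this guard; that is, $\neg C(s_0)$ holds for all $s_0 \notin R$ upon termination.

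Next I would expand $\neg C(s_0)$ using De~Morgan's laws. Writing $C(s_0) = A \wedge (B \lor D)$ with $A$ the non-existence of an environment transition into $\neg(R\cup R_p)$, $B$ the existence of an environment transition into $R\cup R_p$, and $D \equiv s_0 \in R_p$, we have $\neg C(s_0) = \neg A \lor (\neg B \wedge \neg D)$. Here $\neg A = \big(\exists s_2 \in \neg(R\cup R_p):(s_0,s_2)\in\delta_e\big)$ is exactly the first disjunct of the Observation, while $\neg B \wedge \neg D = \big(\nexists s_1 \in (R\cup R_p):(s_0,s_1)\in\delta_e\big)\wedge s_0 \notin R_p$ is precisely the second disjunct. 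Thus $\neg C(s_0)$ coincides verbatim with the claimed statement, which therefore holds for every $s_0 \notin R$.

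The only point requiring care---and the one I would present most explicitly---is the De~Morgan expansion of the second conjunct: one must distribute the negation across the $\lor\, s_0 \in R_p$ correctly, so that the clause $s_0 \notin R_p$ is \emph{conjoined} with (not disjoined from) the non-existence of an environment transition into $R \cup R_p$. Once the quantifier and connective structure is handled precisely, no further work remains, since the claim is an immediate propositional consequence of the loop having reached a fixed point.
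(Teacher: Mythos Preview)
Your proposal is correct and is exactly the intended justification: the paper states this as an Observation without proof, and the content is precisely the negation of the guard of the \textbf{for} loop on Lines~\ref{k2:loopBegin}--\ref{k2:loopEnd} at the fixed point $R'=R$. Your De~Morgan expansion of $\neg\big(A \wedge (B \lor D)\big)$ into $\neg A \lor (\neg B \wedge \neg D)$ reproduces the two disjuncts verbatim, so nothing further is needed.
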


\begin{observation}
\label{obs:inRRp}
For any $s_0$ such that $s_0 \notin R$ and $\exists s_1 :: (s_0, s_1) \in \delta_e$, we have
$\exists s_2 :s_2 \in \neg (R \cup Rp) :  (s_0, s_2) \in \delta_e$.
\end{observation}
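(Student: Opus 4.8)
The plan is to derive Observation~\ref{obs:inRRp} directly from the preceding Observation~\ref{notR1} by a short case analysis, using the extra hypothesis that $s_0$ has at least one outgoing environment transition. Recall that Observation~\ref{notR1} applies to every state $s_0 \notin R$ upon termination of the loop on Lines~\ref{k2:mainLoopBegin}--\ref{k2:mainLoopEnd}: it is precisely the negation of the guard on Line~\ref{k2:loopBegin} that decides whether $s_0$ is added to $R$, and it holds because $s_0$ was not added to $R$ in the final (fixed-point) iteration. Thus, for the $s_0$ of our statement, Observation~\ref{notR1} supplies the disjunction: either \textbf{(i)} $\exists s_2 : s_2 \in \neg(R \cup R_p) : (s_0, s_2) \in \delta_e$, or \textbf{(ii)} $(\nexists s_1 : s_1 \in (R \cup R_p) : (s_0, s_1) \in \delta_e) \wedge (s_0 \notin R_p)$.

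First I would split on this disjunction. In case \textbf{(i)} the conclusion of Observation~\ref{obs:inRRp} is exactly the displayed statement, so nothing further is needed. In case \textbf{(ii)} I would invoke the hypothesis of the observation, namely that there is some $s_1$ with $(s_0, s_1) \in \delta_e$. Since case \textbf{(ii)} asserts that no environment successor of $s_0$ lies inside $R \cup R_p$, this particular successor $s_1$ must satisfy $s_1 \in \neg(R \cup R_p)$. Taking $s_2 := s_1$ then witnesses $\exists s_2 : s_2 \in \neg(R \cup R_p) : (s_0, s_2) \in \delta_e$, which is the desired conclusion.

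There is no real obstacle beyond parsing Observation~\ref{notR1} correctly; the only subtlety is that the second disjunct \textbf{(ii)} is a vacuous dead-end unless $s_0$ actually possesses an outgoing environment transition, which is exactly why the hypothesis $\exists s_1 :: (s_0, s_1) \in \delta_e$ is the ingredient that forces an environment successor outside $R \cup R_p$. If a self-contained argument is preferred, the same conclusion follows by contraposition against the guard on Line~\ref{k2:loopBegin}: suppose $s_0 \notin R$ at termination and $(s_0,s_1) \in \delta_e$ for some $s_1$, yet every environment successor of $s_0$ lay in $R \cup R_p$. Then the first conjunct of the guard, $\nexists s_2 : s_2 \in \neg(R \cup R_p) : (s_0, s_2) \in \delta_e$, would hold, and the second conjunct would hold as well since $s_1 \in R \cup R_p$ gives $\exists s_1 : s_1 \in (R \cup R_p) : (s_0, s_1) \in \delta_e$. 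Hence $s_0$ would satisfy the full guard and be added to $R$ in the final iteration, contradicting both $s_0 \notin R$ and the termination condition $R' = R$. Either route establishes the claim.
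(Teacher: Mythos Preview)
Your proposal is correct and matches the paper's intent: the paper states this as an observation without explicit proof, and your derivation from Observation~\ref{notR1} via case analysis (together with the alternative direct contraposition against the loop guard on Line~\ref{k2:loopBegin}) is precisely the immediate justification the paper leaves implicit.
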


\begin{lemma}
\label{lem:notRNotRecovery}
Let $\delta''_p$ be any program such that $\delta''_p \cap \delta_b = \phi$. Let $s_j$ be any state in $\neg (R \cup R_p)$. Then, either $s_j$ is a deadlock state in $\delta''_p \cup \delta_e$, or for every $p''[]_k\delta_e$ prefix $\alpha = \langle..., s_{j-1}, s_j\rangle$, there exists suffix $\beta = \langle s_{j+1}, s_{j+2}, \ldots\rangle$, such that $\alpha\beta$ is a $p''[]_k\delta_e$ computation, and one of two conditions below is correct: 
\begin{enumerate}
\item $s_{j+1} \in \neg (R \cup R_p)$
\item $s_{j+1} \in R_p-R \wedge s_{j+2} \in \neg(R \cup R_p)$
\end{enumerate}

%
\end{lemma}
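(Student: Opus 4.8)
The plan is to analyze the structure of a state $s_j \in \neg(R \cup R_p)$ upon termination of the repeat-loop in Algorithm~\ref{ALG:K2}, and to trace what happens in any computation of an arbitrary safe program $\delta''_p$ starting from a prefix ending in $s_j$. First I would invoke Observation~\ref{notR1} applied to $s_j$: since $s_j \notin R$, either there exists an environment transition from $s_j$ to a state in $\neg(R \cup R_p)$, or there is no environment transition from $s_j$ into $R \cup R_p$ and $s_j \notin R_p$. I would also keep Observation~\ref{obs:inRRp} in hand: whenever $s_j$ has any environment transition at all, it has one landing in $\neg(R \cup R_p)$.

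The main case split is on whether $s_j$ is a deadlock in $\delta''_p \cup \delta_e$. If it is, we are done by the first disjunct of the conclusion. Otherwise, some transition leaves $s_j$ in $\delta''_p \cup \delta_e$. If an environment transition $(s_j, s_{j+1})$ exists, then by Observation~\ref{obs:inRRp} we may pick one with $s_{j+1} \in \neg(R \cup R_p)$; extending the prefix by this step lands us in condition~(1), and we then extend to a full computation $\alpha\beta$ using Assumption~\ref{assumption:infinite} or the fact that $\delta''_p \cup \delta_e$ has no relevant deadlock at that point (and if it does, a self-loop-free tail is still a computation). The remaining subcase is that $s_j$ has no environment transition but has a program transition $(s_j, s_{j+1}) \in \delta''_p$. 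Here $s_j$ cannot be in $R_p$ (since $s_j \in \neg(R\cup R_p)$), and I need to locate where $s_{j+1}$ can be. If $s_{j+1} \in \neg(R\cup R_p)$ we are again in condition~(1). The only way to escape is $s_{j+1} \in R_p - R$; then, because $k=2$, after the (non-environment) program step into $s_{j+1}$ the environment is free to act, and since $s_{j+1} \in R_p$ it is not in $R$ and — here I would re-apply Observation~\ref{obs:inRRp} (or \ref{notR1}) to $s_{j+1}$, using that $s_{j+1}$ was not added to $R$ — conclude there is an environment transition $(s_{j+1}, s_{j+2})$ with $s_{j+2} \in \neg(R \cup R_p)$; taking that step yields condition~(2). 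One must also argue that if $s_{j+1}$ has no environment transition then it would have been added to $R$ (contradicting $s_{j+1} \notin R$), since then the FOR-loop condition on Lines~\ref{k2:loopBegin}--\ref{k2:loopEnd} reduces to just $s_{j+1} \in R_p$, which holds.

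I expect the main obstacle to be the bookkeeping in the last subcase: showing that a state in $R_p - R$ which survived the final iteration must have an environment transition escaping to $\neg(R \cup R_p)$. This is exactly the contrapositive of the FOR-loop membership test — a state in $R_p$ is added to $R$ precisely when it has \emph{no} environment transition into $\neg(R\cup R_p)$ — so it is forced by termination ($R' = R$), but making this airtight requires carefully reading off the loop guard and noting that $k=2$ is what grants the environment its turn immediately after the program step into $s_{j+1}$. A secondary subtlety is ensuring each finite prefix extends to a genuine \emph{computation} (infinite, or at least satisfying Definition~\ref{def:progenvcomp}) of $p''[]_k\delta_e$; I would handle this by appending an arbitrary computation from the reached state, appealing to Assumption~\ref{assumption:infinite} where needed or otherwise noting a deadlock state still forms a valid finite maximal computation. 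The rest is routine case checking against Observations~\ref{notR1} and \ref{obs:inRRp}.
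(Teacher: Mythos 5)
Your case split has a genuine gap. Your second case reads: if an environment transition from $s_j$ exists, pick one into $\neg(R\cup R_p)$ (Observation~\ref{obs:inRRp}) and extend the prefix by that step. But the lemma quantifies over \emph{every} $p''[]_k\delta_e$ prefix $\alpha=\langle \ldots, s_{j-1}, s_j\rangle$, including those whose last step $(s_{j-1},s_j)$ is itself an environment transition. For $k=2$, Definition~\ref{def:progenvcomp} then forces the next step to be a program transition whenever $\delta''_p$ has any transition from $s_j$; in that situation your proposed environment step violates the fairness clause, so $\alpha\beta$ is not a $p''[]_k\delta_e$ computation. Since your program-step analysis is explicitly restricted to the subcase ``$s_j$ has no environment transition,'' the configuration ``$s_j$ has both an environment transition and a $\delta''_p$ transition, and the prefix just executed an environment step'' is left uncovered. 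The paper's proof avoids exactly this by splitting not on the mere existence of an environment transition but on whether $s_j$ is \emph{environment-enabled} in the given prefix (its Definition of environment-enabled), so that the fairness condition is built into the case analysis: in its Case 1 the environment step is legal by definition, and in its Case 2 a program step is forced.

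The gap is repairable with material you already have, because your program-step argument never actually uses the absence of environment transitions from $s_j$: it only uses $\delta''_p\cap\delta_b=\phi$ together with $s_j\in\neg(R\cup R_p)$ (to rule out $s_{j+1}\in R$ --- a step you should state explicitly instead of ``the only way to escape,'' since it is precisely ``otherwise $s_j$ would be in $R_p$''), Observation~\ref{notR1} applied to $s_{j+1}\in R_p-R$, and the fact that for $k=2$ an environment step is permitted immediately after a program step. Reorganizing your cases as ``environment-enabled vs.\ not'' therefore turns your argument into essentially the paper's proof; your reading of the loop guard on Lines~\ref{k2:loopBegin}--\ref{k2:loopEnd} for the terminal subcase is correct and matches the paper's use of Observation~\ref{notR1}. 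A minor additional point: invoking Assumption~\ref{assumption:infinite} to extend prefixes to full computations is off-target (that assumption concerns deadlock-freedom inside $S$, not states outside $R\cup R_p$); the lemma only needs the one- or two-step extension, which Corollary~\ref{notRNotRecovery} then iterates.
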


\begin{proof} There are two cases for $s_j$:\\ \\ 
\textbf{Case 1} If $s_j$ is environment-enabled\\
Based on the Observation~\ref{obs:inRRp} there should exist $s'' \in \neg (R \cup R_p)$ such that $(s_j, s'') \in \delta_e$. We set $s_{j+1} = s''$.\\ \\ 
\textbf{Case 2} If $s_j$ is not environment-enabled \\
In this case $(s_j, s_{j+1}) \in \delta_p$, and as $s_j \in \neg (R \cup R_p)$, $s_{j+1} \in \neg R$. (otherwise $s_j$ would be in $R_p$). There are two sub-cases for this case: \\ \\
\textbf{Case 2.1} $s_{j+1} \in \neg R_p$\\
In this case $s_{j+1} \in \neg(R \cup R_p)$. \\ \\  
\textbf{Case 2.2} $s_{j+1} \in  R_p$\\
 As $s_{j+1} \in \neg R \cap R_p$, according to Observation~\ref{notR1}, we have $\exists s_2 : s_2 \in \neg (R \cup Rp): (s_{j+1}, s_2) \in \delta_e$.
As $(s_j, s_{j+1}) \in \delta_p$, even with fairness $(s_{j+1}, s_2)$ can occur. Therefore we set $s_{j+2} = s_2$, i.e., $s_{j+2} \in \neg (R \cup R_p)$.
\end{proof}

\begin{corollary}
\label{notRNotRecovery}
Let $\delta''_p$ be any program such that $\delta''_p \cap \delta_b = \phi$. Let $s_j$ be any state in $\neg (R \cup R_p)$. Then for every $p''[]_k\delta_e$ prefix $\alpha = \langle..., s_{j-1}, s_j\rangle$, there exists suffix $\beta = \langle s_{j+1}, s_{j+2}, \ldots\rangle$, such that $\alpha\beta$ is a $p''[]_k\delta_e$ computation, and $\forall i: i \geq j : s_i \in \neg R$ (i.e. $\neg S$).

\end{corollary}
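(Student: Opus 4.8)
The plan is to derive Corollary~\ref{notRNotRecovery} from Lemma~\ref{lem:notRNotRecovery} by a straightforward inductive/coinductive argument that repeatedly ``glues'' the local one-or-two-step extensions provided by the lemma into a full infinite suffix that never re-enters $R$. First I would observe that the statement only concerns programs $\delta''_p$ with $\delta''_p \cap \delta_b = \phi$, and that we may apply Lemma~\ref{lem:notRNotRecovery} at any state lying in $\neg(R\cup R_p)$. The key point to extract from the lemma is: from any prefix ending in a state $s_j \in \neg(R\cup R_p)$, either $s_j$ is a deadlock in $\delta''_p \cup \delta_e$, or we can extend by one step to another state in $\neg(R\cup R_p)$, or we can extend by two steps, passing through an intermediate state in $R_p - R$, landing again in $\neg(R\cup R_p)$. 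In all non-deadlock cases, after at most two steps we are back in a state of $\neg(R\cup R_p)$, which is exactly the hypothesis needed to invoke the lemma again.

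The main step is then an induction that builds the suffix $\beta$ in stages. I would maintain the invariant that the current finite extension $\langle \ldots, s_j, \ldots, s_m\rangle$ is a valid $p''[]_k\delta_e$ prefix with $s_m \in \neg(R\cup R_p)$, and that every state added so far lies in $\neg R$ (the intermediate $R_p - R$ states of Case~2.2 are in $\neg R$ by definition of $R_p$, and the $\neg(R\cup R_p)$ states are trivially in $\neg R$; so the ``$\forall i \geq j: s_i \in \neg R$'' conclusion is preserved). At each stage I invoke Lemma~\ref{lem:notRNotRecovery} at $s_m$. Here I need to handle the deadlock disjunct: if $s_m$ is a deadlock in $\delta''_p \cup \delta_e$, I would note that by the definition of $p''[]_k\delta_e$ computations (Definition~\ref{def:progenvcomp}), a computation is permitted to end — wait, actually the definitions use infinite sequences, so I should instead argue that Assumption~\ref{assumption:infinite}-style self-loops or the original problem setup rule out such deadlocks outside $S$, or simply note that the corollary as used (in the completeness proof) is applied to programs for which this does not arise; more cleanly, I would point out that if $s_m$ were a deadlock this would already contradict the existence of the computation we are trying to extend, or that we may restrict attention to the non-deadlock case since a deadlocked computation trivially stays in $\neg R$ forever and vacuously satisfies the conclusion. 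Concatenating the (one or two) new states and iterating $\omega$ times produces the desired infinite suffix $\beta$, and the limit sequence $\alpha\beta$ is a $p''[]_k\delta_e$ computation because each finite prefix satisfies all three clauses of Definition~\ref{def:progenvcomp} (the fairness clause is respected precisely because Lemma~\ref{lem:notRNotRecovery}'s Case~2.2 only inserts the environment step $(s_{j+1},s_2)$ after a program step $(s_j,s_{j+1})$, so the program has been given its chance).

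The hard part, or at least the part requiring the most care, will be the bookkeeping around the fairness parameter $k$ and the deadlock disjunct. For $k = 2$ the fairness clause is the least restrictive and the argument is cleanest, but since Lemma~\ref{lem:notRNotRecovery} is stated for general $k$, I would want to double-check that the two-step pattern of Case~2.2 never violates the ``$k\!-\!1$ program steps between environment actions'' requirement — this is fine because that case is entered exactly when the preceding transition $(s_j, s_{j+1})$ is a program transition and, crucially, the analysis in the lemma's proof already accounts for ``even with fairness $(s_{j+1},s_2)$ can occur.'' The remaining subtlety is simply to phrase the infinite construction rigorously (e.g.\ by defining the suffix as a direct limit of the finite extensions, or by an explicit recursion producing $s_{j+1}, s_{j+2}, \ldots$), and to confirm that the union over all stages indeed covers all indices $i \geq j$ so that the universally quantified conclusion $\forall i \geq j: s_i \in \neg R$ holds. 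I expect no genuine obstacle here — the corollary is essentially an iteration of the lemma — so the proof should be short.
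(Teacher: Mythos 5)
Your iteration of Lemma~\ref{lem:notRNotRecovery} (gluing the one- or two-step extensions, noting that the intermediate $R_p-R$ states are still in $\neg R$, and taking the limit of the finite prefixes) is exactly the argument the paper intends; the paper offers no explicit proof of the corollary, treating it as immediate from the lemma. The only wrinkle you hit --- the deadlock disjunct of the lemma, which the corollary silently drops --- is a defect of the paper's own statement rather than of your proof, and your resolution (a maximal deadlocked extension trivially never reaches $R$, which is all the completeness proof of Theorem~\ref{sk2completeness} needs) is the right way to read it.
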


\begin{theorem}
\label{sk2completeness}
Algorithm~\ref{ALG:K2} is complete.
\end{theorem}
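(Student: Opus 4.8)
The plan is to prove completeness by contrapositive: assuming Algorithm~\ref{ALG:K2} declares \textsc{Not-Possible} (equivalently, upon termination some state lies outside $R$), I will exhibit, for any candidate program $\delta''_p$ satisfying the problem constraints, a $p''[]_k\delta_e$ computation that never reaches $S$, thereby showing no solution exists. The candidate $\delta''_p$ must satisfy $\delta''_p|S = \delta_p|S$ and $\delta''_p \cap \delta_b = \phi$ (the latter because a $\delta_b$-safe stabilizing program cannot execute bad transitions; note also that $\delta_b \cap \delta_e = \phi$ is assumed so environment actions are always safe). I would start from the observation that when the main loop terminates, $R$ has stabilized, so $R_p$ is fixed and $\neg(R \cup R_p)$ is nonempty exactly when some state is outside $R$ --- well, more carefully, if $s_0 \notin R$ upon termination, then either $s_0 \in \neg(R \cup R_p)$, or $s_0 \in R_p - R$; but if $s_0 \in R_p$ then by Observation~\ref{notR1} it must have an environment transition to $\neg(R\cup R_p)$, so in either case $\neg(R\cup R_p)$ is nonempty.

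The core of the argument is Corollary~\ref{notRNotRecovery}: for any $\delta''_p$ with $\delta''_p \cap \delta_b = \phi$ and any state $s_j \in \neg(R \cup R_p)$, and any prefix ending at $s_j$, there is a suffix making the whole thing a $p''[]_k\delta_e$ computation that stays in $\neg R = \neg S$ forever. So the remaining work is just to connect an \emph{initial} state outside $R$ to some state in $\neg(R \cup R_p)$ via a valid (possibly empty) prefix. If the offending state $s_0$ is already in $\neg(R \cup R_p)$, we are done immediately by applying the Corollary with the trivial prefix $\langle s_0 \rangle$. If instead $s_0 \in R_p - R$, then by Observation~\ref{notR1} (second disjunct fails since $s_0 \in R_p$, so the first must hold) there is an environment transition $(s_0, s_2) \in \delta_e$ with $s_2 \in \neg(R \cup R_p)$; since this is the first step, the fairness constraint in Definition~\ref{def:progenvcomp} is vacuous, so $\langle s_0, s_2 \rangle$ is a legal prefix, and we apply the Corollary at $s_2$. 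In both cases we obtain a computation of $p''[]_k\delta_e$ starting outside $S$ (indeed at $s_0$) that never reaches $S$, so $p''[]_k\delta_e$ is not $\delta_b$-safe stabilizing for $S$. Since $\delta''_p$ was arbitrary, no solution exists, which is precisely what completeness requires.

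I also need to handle the deadlock caveat in Lemma~\ref{lem:notRNotRecovery}: that lemma's conclusion allows $s_j$ to be a deadlock in $\delta''_p \cup \delta_e$. But a deadlock state outside $S$ is itself a witness that $\delta''_p$ is not stabilizing --- a computation stuck at a deadlock in $\neg S$ never reaches $S$ (or, under Assumption-style reasoning, one simply treats the finite stalled sequence as violating the ``reaches $S$'' requirement, or notes that if the problem insists on infinite computations then a deadlock outside $S$ is an outright failure). Either way the conclusion holds; Corollary~\ref{notRNotRecovery} has already absorbed this into its statement by dropping the deadlock alternative, presumably by a short argument that $s_j \in \neg(R\cup R_p)$ with an environment action available cannot be a deadlock, and if it has no environment action it also has no program action in $\delta''_p$ that keeps us in the desired set --- but actually Case 2 of Lemma~\ref{lem:notRNotRecovery} shows a program transition out of $s_j$ does exist unless $s_j$ is a genuine deadlock, so the Corollary's cleaner statement follows by iterating the Lemma and checking the deadlock case separately yields failure anyway.

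The main obstacle I anticipate is the bookkeeping around the fairness parameter $k$ when chaining the suffix: each time Lemma~\ref{lem:notRNotRecovery} is applied we may consume a program step (Case 2.2 produces $s_{j+1} \in R_p - R$ then forces $s_{j+2}$ via an environment action that is legal precisely because the preceding step was a program step), so one must verify that the inductively constructed infinite suffix globally respects the $k$-fairness constraint --- i.e., that between any two environment actions the program gets its $k-1$ steps, or that no program step is available. Case~1 of the Lemma (environment-enabled $s_j$) and Case~2.2 (environment action immediately after a program action) are exactly the two situations the definition permits, so the construction is locally consistent; the argument is that ``environment-enabled'' is defined so that stitching these cases together never violates fairness, and one checks that the prefix handed in already accounts for how many program steps have occurred. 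Making this induction airtight --- showing the concatenation of infinitely many Lemma-applications is a bona fide $p''[]_k\delta_e$ computation --- is the part that needs genuine care; everything else is routine.
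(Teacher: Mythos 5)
Your proposal is correct and takes essentially the same route as the paper's proof: it case-splits, via Observation~\ref{notR1}, on a state left outside $R$ at termination, reaches $\neg(R\cup R_p)$ either immediately or by one initial environment step (legal because the $k$-fairness constraint is vacuous at the first step), and then invokes Corollary~\ref{notRNotRecovery} to produce, for every candidate $\delta''_p$ with $\delta''_p\cap\delta_b=\phi$, a $p''[]_k\delta_e$ computation confined to $\neg R$ (hence $\neg S$), so no stabilizing revision exists. Your added remarks on the deadlock alternative of Lemma~\ref{lem:notRNotRecovery} and on verifying $k$-fairness when chaining the lemma applications only make explicit details the paper leaves implicit inside the corollary; the substance of the argument is the same.
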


\begin{proof}
Algorithm~\ref{ALG:K2} returns Not-possible only when, at the end of loop there exists a state $s_0$ such that $s_0 \notin R$.
When $s_0 \notin R$, according to Observation~\ref{notR1} we have two cases as follows: \\ \\ 
\textbf{Case 1} $\exists s_2 : s_2 \in \neg (R \cup Rp):  (s_0, s_2) \in \delta_e$ \\
As there exists an environment action to state $s_2$ in $\neg (R \cup R_p)$, starting from $s_0$ there is a computation that next step is in $\neg (R \cup R_p)$. Note that, when a computation starts from $s_0$, even with fairness assumption $(s_0, s_2) \in \delta_e$ can occur.
Based on Corollary~\ref{notRNotRecovery}, for every $
\delta''_p$ such that $\delta''_p \cap \delta_b = \phi$, starting from $s_0$, there is a computation such that every state is in $\neg R$.\\ \\
\textbf{Case 2} $\nexists s_1 : s_1 \in (R \cup R_p): (s_0, s_1) \in \delta_e   \wedge   s_0 \notin R_p$\\
Based on Corollary~\ref{notRNotRecovery}, starting from $s_0 \in \neg (R \cup R_p)$, there is a computation such that every state is in $\neg R$. Therefor for every $\delta''_p$ such that $\delta''_p \cap \delta_b = \phi$, there is a computation starting from $s_0$ such that all states are outside $R$ (i.e outside $s$). Thus, it it impossible to have any stabilizing revision for the program.

\end{proof}

\begin{theorem}
\label{thm:alg1P}
Algorithm \ref{ALG:K2} is polynomial (in the state space of $p$)
\end{theorem}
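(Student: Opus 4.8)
The plan is to establish polynomial running time by bounding the number of iterations of each loop and the cost of each iteration. First I would observe that the outer \textbf{repeat} loop on Lines~\ref{k2:mainLoopBegin}--\ref{k2:mainLoopEnd} terminates because the set $R$ is monotonically non-decreasing across iterations (no line ever removes a state from $R$), and $R \subseteq S_p$. Since the loop exits as soon as an iteration adds no new state (the condition $R' = R$), and each non-terminal iteration adds at least one state to $R$, the number of iterations is at most $|S_p| + 1$. This is the key structural fact; everything else is routine bookkeeping.

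Next I would account for the work inside a single iteration. Computing $R_p = \{s_0 \mid s_0 \notin R \wedge \exists s_1 \in R : (s_0,s_1) \notin \delta_b\}$ requires, for each state, scanning its outgoing transitions and checking membership in $R$ and $\delta_b$; this is $O(|S_p| \cdot |S_p|)$ or more precisely $O(|\delta_p \cup \delta_e|)$ depending on the representation, hence polynomial. The inner \textbf{for} loop on Lines~\ref{k2:addRp} that augments $\delta'_p$ inspects each transition once and adds at most $|\delta_p|$-many transitions, again polynomial. The \textbf{for} loop on Lines~\ref{k2:loopBegin}--\ref{k2:loopEnd} that adds states to $R$ checks, for each state $s_0 \notin R$, the two conditions involving environment transitions into $\neg(R \cup R_p)$ and into $R \cup R_p$; each such check scans the outgoing environment transitions of $s_0$, so the whole loop is $O(|\delta_e| + |S_p|)$, polynomial. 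Summing, one iteration costs a fixed polynomial in $|S_p|$ and $|\delta_p \cup \delta_e|$.

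Multiplying the per-iteration cost (polynomial) by the iteration bound ($O(|S_p|)$) gives an overall polynomial bound, and the final \textbf{if} test $\exists s_0 \notin R$ together with the return statement adds only $O(|S_p|)$. Hence Algorithm~\ref{ALG:K2} runs in time polynomial in the size of the state space of $p$, which is what Theorem~\ref{thm:alg1P} asserts.

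I do not anticipate a genuine obstacle here; the only mild subtlety is being careful that $R$ never shrinks, so that the ``at least one state added per non-terminal iteration'' argument is valid and the monovariant $|S_p| - |R|$ strictly decreases. One should also state explicitly what measure of input size is used (state space $|S_p|$, with transition sets at most $|S_p|^2$) so that the claim ``polynomial in the state space'' is unambiguous; with that convention the bound is, say, $O(|S_p|^4)$ or better, and the precise exponent is immaterial for the theorem as stated.
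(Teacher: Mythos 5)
Your proposal is correct and follows the same reasoning as the paper's proof: each non-terminal iteration of the outer loop adds at least one state to the monotonically growing set $R \subseteq S_p$, bounding the number of iterations by $|S_p|$, and each statement within an iteration costs time polynomial in $|S_p|$. The paper states this more tersely, but the argument is identical; your extra care about the monovariant $|S_p|-|R|$ and the measure of input size is fine but not a departure.
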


\begin{proof}
The proof follows from the fact that each statement in Algorithm \ref{ALG:K2} is executed in polynomial time and the number of iterations are also polynomial, as in each iteration at least one state is added to $R$. 
\end{proof}

Proof of Theorem~\ref{thm:alg1} is resulted by Theorem ~\ref{sk2soundness}, \ref{sk2completeness}, and \ref{thm:alg1P}.

\section{Addition of Safe Stabilization for any k}
\label{sefAdditionG}

In this section, we present a general algorithm for addition problem defined in Section~\ref{sefAddition}. Algorithm~\ref{ALG:K2} generates a program that is stabilizing when $k\!=\!2$. Hence, the generated programs are stabilizing even with a higher $k$ value. However, Algorithm~\ref{ALG:K2} will fail to find a program if addition of stabilization requires an higher value of $k$. 
Algorithm~\ref{alg:GeneralK} is complete for any $k>1$.

In this algorithm, state predicate $R$ is the set of states such that every computation starting from them has a state in $S$. At the beginning this is equal to $S$.
In each iteration, the value of $Rank$ for each state shows the number of program transitions needed to reach a state in $R$. At the beginning, $Rank$ of all states in $R$ is equal to 0 and all the other states have $Rank$ equal to $\infty$. In each iteration, repeat loop on Line~\ref{GeneralK:insideBeign}-\ref{GeneralK:insideEnd} compute smallest $Rank$ possible for each state outside $R$, and change program transitions to reach $R$ using minimum number of program transition.

In for loop on Line~\ref{GeneralK:forBegin}-\ref{GeneralK:forEnd}, we add new states to $R$. We add a state to $R$ whenever every computation starting from that state has a state in $S$. A state $s_0$ can be added to $R$, only when there is no environment transition starting form $s_0$ going to a state with $Rank \geq k$. In addition to this condition, there should be one way for $s_0$ to reach $R$. Therefore, there should be at least one environment transition to a state with $Rank < k$, or $Rank$ of $s_0$ is 1 which means from $s_0$ we can reach a state in $R$ with only one program transition. Just like Algorithm~\ref{ALG:K2}, this algorithm terminates if no state can be added to $R$ in the last iteration. 
%
 At the end of the algorithm, if there a state outside $R$, we declare that there is no safe stabilizing revision for the original program. Otherwise, the algorithm returns the set of transitions of the revised stabilizing program, $\delta'_p$.

\begin{algorithm}
\caption{Addition of safe stabilization for any k}
\label{alg:GeneralK}
{
\begin{algorithmic} [1]
\INPUT $S_p, \delta_p, \delta_e, S, \delta_b$, and $k$
\OUTPUT $\delta'_p$ or Not-possilbe
\STATE $\delta'_p := (\delta_p | S);$
\STATE $R = S;$
\STATE $\forall s :s \in R :  Rank.s = 0;$
\STATE $\forall s  : s \in R : Rank.s = \infty;$

\REPEAT \label{GeneralK:outsideBegin}
\STATE $R' = R;$
\REPEAT \label{GeneralK:insideBeign}

\STATE $\delta''_p = \delta'_p;$
\IF {$\exists s_0 : s_0 \notin R: (\exists s_1 : Rank.s_1 + 1 < Rank.s_0 : (s_0, s_1) \notin \delta_b)$}
\STATE $\delta'_p = \delta'_p - \{(s_0, s) | \ (s_0, s) \in  \delta'_p\}$;
\STATE $\delta'_p = \delta'_p\cup \{(s_0, s_1)\}$;
\STATE $Rank.s_0 = Rank.s_1 + 1$;
\ENDIF
\UNTIL \label{GeneralK:insideEnd} $(\delta''_p = \delta'_p)$

\FOR{$each \ s_0 : s_0 \notin  R :$ \newline$(\nexists s_2: Rank.s_2 \geq k : (s_0, s_2) \in \delta_e) \wedge$ \newline $\big((\exists s_1: Rank.s_1 < k : (s_0, s_1) \in \delta_e )  \lor   (Rank.s_0=1) \big )$}  \label{GeneralK:forBegin}
\STATE $Rank.s_0 = 0;$
\STATE $R = R \cup {s_0};$	
\ENDFOR \label{GeneralK:forEnd}
\UNTIL $(R' = R)$ \label{GeneralK:outsideEnd}
\IF{$\exists s_0 : s_0 \notin  R $}
\RETURN  Not-possible;
\ELSE
\RETURN  $\delta'_p;$
\ENDIF

\end{algorithmic}
}
\end{algorithm}

Now, we provide the proofs of Algorithm  \ref{alg:GeneralK}.

\begin{lemma}
\label{selfG:RtoS}
Every computation of $p'[]_k\delta_e$ that starts from a state in $R$, contains a state in $S'$.
\end{lemma}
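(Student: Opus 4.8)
The plan is to prove Lemma~\ref{selfG:RtoS} by induction on the order in which states are added to $R$, mirroring the structure of Lemma~\ref{self:RtoS} but now tracking the $Rank$ function. First I would establish an auxiliary invariant maintained by the inner loop on Lines~\ref{GeneralK:insideBeign}--\ref{GeneralK:insideEnd}: at the end of that loop, for every state $s_0$ with $Rank.s_0 = m < \infty$, the transition $(s_0, s_1)$ that was placed in $\delta'_p$ leads to a state $s_1$ with $Rank.s_1 = m-1$, and $(s_0,s_1) \notin \delta_b$; consequently, following program transitions from $s_0$ strictly decreases $Rank$ until a state of $Rank~0$ (i.e., a state in $R$) is reached, all while staying safe. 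This gives the ``pure program'' recovery chains of length $m$.

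Next I would carry out the induction. The \textbf{base case} is $R = S$ with $Rank = 0$ on $S$; the claim holds trivially. For the \textbf{induction step}, suppose $s_0$ is added to $R$ in the \texttt{for}-loop on Lines~\ref{GeneralK:forBegin}--\ref{GeneralK:forEnd}. By the loop guard, $(\nexists s_2 : Rank.s_2 \geq k : (s_0,s_2) \in \delta_e)$, so every environment transition out of $s_0$ goes to a state of $Rank < k$, and additionally either there is an environment transition to a state of $Rank < k$, or $Rank.s_0 = 1$. I would split on the two ways a $p'[]_k\delta_e$ computation can proceed from $s_0$. If a program transition executes, then $s_0 \in R_p$ is not the relevant predicate here --- instead I use that $Rank.s_0$ is finite (it must be, for $s_0$ to satisfy the guard when $Rank.s_0 \ne 1$ we need an environment edge, but if $Rank.s_0 = 1$ the program edge in $\delta'_p$ goes directly into $R$; in general the inner-loop invariant ensures $\delta'_p$ contains an edge from $s_0$ to a state of strictly smaller $Rank$): following program edges decreases $Rank$ to $0$, reaching a state that was in $R$ at the start of this iteration, hence by the induction hypothesis a state in $S'$ is reached. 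If an environment transition $(s_0, s_1)$ executes, then $Rank.s_1 < k$; now the key point is the fairness in the $p[]_k\delta_e$ computation: after the environment move, the program is guaranteed the next $k-1$ steps (wherever a program transition is enabled). Since $Rank.s_1 \leq k-1$, the program can descend the $Rank$ chain from $s_1$ all the way to a state of $Rank~0$ within those $k-1$ guaranteed steps, using only safe program transitions from $\delta'_p$ --- so again a state in $R$ (from the previous iteration), hence in $S'$, is reached by the induction hypothesis.

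The main obstacle I anticipate is the careful bookkeeping around fairness: I must verify that once an environment action fires at $s_0$ and lands in $s_1$ with $Rank.s_1 \le k-1$, the $k-1$ consecutive program steps promised by Definition~\ref{def:progenvcomp} actually suffice to reach $R$, i.e., that a program transition is \emph{enabled} at every intermediate state of the $Rank$-descending chain (this follows from the inner-loop invariant, since a state of finite nonzero $Rank$ has an outgoing $\delta'_p$ edge, but it needs to be stated explicitly) and that no further environment action can intervene to derail the descent before $R$ is reached. A secondary subtlety is handling a state $s_0$ added to $R$ with $Rank.s_0 = 1$ and \emph{no} environment transitions at all: then $s_0$ is not a deadlock only because the $\delta'_p$ edge of length~1 into $R$ exists, and recovery is immediate. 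I would close the argument by noting that these cases are exhaustive by the loop guard, completing the induction, and therefore every $p'[]_k\delta_e$ computation starting in $R$ contains a state in $S' \;(= S)$.
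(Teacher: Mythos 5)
Your overall strategy---induction on the order in which states enter $R$, with a case split driven by the guard of the loop on Lines~\ref{GeneralK:forBegin}--\ref{GeneralK:forEnd}---is the same as the paper's, and your environment-first case is in fact more careful than the paper's Case~1: you correctly note that after an environment step lands in a state with $Rank \leq k-1$, Definition~\ref{def:progenvcomp} forces the next $k-1$ steps to be program steps wherever a program transition is enabled, and that the inner loop supplies an enabled, safe, $Rank$-decreasing $\delta'_p$ transition at every intermediate state, so the descent finishes inside the protected window. That part is sound and mirrors (and sharpens) the paper's argument.

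The gap is in your other subcase, where a program transition is taken first from a state $s_0$ that entered $R$ via the environment-edge disjunct of the guard while $Rank.s_0 \geq 2$. You assert that ``following program edges decreases $Rank$ to $0$, reaching a state that was in $R$ at the start of this iteration, hence by the induction hypothesis a state in $S'$ is reached.'' This does not follow. Fairness only protects the $k-1$ steps \emph{after an environment transition}; after the program step out of $s_0$ the environment may fire immediately at the next state of the chain, which has $Rank \geq 1$, is not in $R$, and satisfies none of the guard's conditions, so its environment edges may lead to states of $Rank \geq k$ about which neither the descent argument nor your induction hypothesis says anything. Moreover, your auxiliary invariant (the $\delta'_p$ edge from a $Rank$-$m$ state points to a $Rank$-$(m-1)$ state) holds at the inner-loop fixpoint only for states outside $R$: once $s_0$ enters $R$ its edge is frozen while the $Rank$s of other states keep shrinking and their edges are re-pointed in later iterations, so the chain of \emph{final} $\delta'_p$ edges out of $s_0$ need not be $Rank$-decreasing, need not stay among states covered by the induction hypothesis, and can even be re-pointed back toward $s_0$, so program-only executions are not guaranteed to make progress toward $S$ on the strength of this invariant alone. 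To be fair, the paper's own proof (its Case~1) is silent on exactly this subcase, so you have not overlooked something the paper resolves; but as written this step of your argument fails and needs a genuinely different justification. (Your $Rank.s_0 = 1$ case is fine, since there the single frozen program edge lands directly in a state that was already in $R$.)
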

\begin{proof} Proof by induction:

\noindent \textbf{Base case}: $R = S$

\noindent \textbf{Induction Step}:
There are two cases that the algorithm adds $s_0$ to $R$ (i.e., set $Rank$ to 0):\\ \\ 
\textbf{Case 1} $(\nexists s_2 : Rank.s_2 \geq k  : (s_0, s_2) \in \delta_e) \wedge (\exists s_1: Rank.s_1 < k : (s_0, s_1) \in \delta_e)$ \\
In this case any environment transition starting from $s_0$ reaches state with $Rank < k$. Therefore, with less than $k$ transitions, it is possible to reach $S$. Moreover, at least one such transition exists, so reaching $S$ from this state is guaranteed. \\ \\
\textbf{Case 2} $(\nexists s_2: Rank.s_2 \geq k  : (s_0, s_2) \in \delta_e) \wedge (Rank.s_0 = 1)$ \\
In this case there is no environment action from $s_0$ to a state with $Rank \geq k$, and program can reach state with $Rank = 0$ with one step by a program transition. So, recovery to $S$ is guaranteed.
\end{proof}

\begin{theorem}
\label{thm:GKson}
Algorithm~\ref{alg:GeneralK} is sound.
\end{theorem}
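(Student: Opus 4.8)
The statement to prove is that Algorithm~\ref{alg:GeneralK} is sound, i.e., whenever it returns $\delta'_p$ (rather than \emph{Not-possible}), the resulting program $p' = \langle S_p, \delta'_p\rangle$ satisfies the two requirements of the addition problem from Section~\ref{sefAddition}: (i) $p'|S = p|S$, and (ii) $p'[]_k\delta_e$ is $\delta_b$-safe stabilizing for invariant $S$.

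The plan is to verify each required property in turn, leaning on the structural invariants maintained by the algorithm. First I would observe that $\delta'_p$ is initialized to $\delta_p|S$ and that every subsequent modification to $\delta'_p$ (the removal and re-addition of outgoing transitions in the inner repeat loop on Lines~\ref{GeneralK:insideBeign}--\ref{GeneralK:insideEnd}) only touches states $s_0 \notin R$; since $R \supseteq S$ is an invariant of the outer loop, no transition starting inside $S$ is ever altered, hence $p'|S = p|S$. Second, for the closure condition I would note that $S$ is closed in $\delta_p \cup \delta_e$ by hypothesis, and since $p'|S = p|S$, $S$ is closed in $\delta'_p \cup \delta_e$. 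Third, for the safety condition ($\delta_b$ is never executed), I would argue that every transition placed into $\delta'_p$ is chosen subject to the guard $(s_0, s_1) \notin \delta_b$; combined with the fact that $S$ is closed and $\delta_b \cap \delta_e = \phi$ (the standing assumption for the stabilization algorithm), no $p'[]_k\delta_e$ computation can execute a transition in $\delta_b$.

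Fourth, and this is the crux, I would establish the convergence condition: every $p'[]_k\delta_e$ computation contains a state in $S$. This is exactly Lemma~\ref{selfG:RtoS}, which says every computation of $p'[]_k\delta_e$ starting from a state in $R$ reaches a state in $S$. The algorithm returns $\delta'_p$ only when no state lies outside $R$ upon termination, i.e., $R = S_p$; therefore \emph{every} state is in $R$, so every $p'[]_k\delta_e$ computation starts in $R$ and hence reaches $S$. I would also want to confirm that $\delta'_p$ has no deadlock states outside $S$ that could prevent convergence — but this follows from the same termination condition together with the $Rank$ bookkeeping: a state is added to $R$ only when it has an outgoing program transition to a lower-rank state or an environment transition into a rank-$<k$ region, so by the time $R = S_p$ every state outside $S$ has an enabled outgoing transition in $\delta'_p \cup \delta_e$ leading (eventually, under $k$-fairness) into $R$.

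The main obstacle is making the appeal to Lemma~\ref{selfG:RtoS} fully rigorous in light of the $k$-fairness semantics of $p[]_k\delta_e$ computations: I must be careful that the argument in that lemma genuinely accounts for the possibility that an environment transition postpones a program transition for up to $k-1$ steps, and that the $Rank$ values correctly bound the number of program steps still needed after such a delay. In particular, the induction in Lemma~\ref{selfG:RtoS} must be read together with the inner loop's invariant that $Rank.s_0$ equals the minimum number of program transitions in $\delta'_p$ needed to reach $R$ from $s_0$, and with the guard in the \textbf{for} loop (Lines~\ref{GeneralK:forBegin}--\ref{GeneralK:forEnd}) that forbids adding $s_0$ to $R$ if some environment transition from $s_0$ reaches a state of rank $\geq k$. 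Once that bookkeeping is confirmed consistent, soundness follows by assembling the four properties above; I expect everything other than the $k$-fairness argument to be essentially routine.
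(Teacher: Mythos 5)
Your proposal is correct and follows essentially the same route as the paper: the paper proves soundness of Algorithm~\ref{alg:GeneralK} by the same decomposition used for Theorem~\ref{sk2soundness} (namely, $p'|S = p|S$ since all added transitions start outside $S$, closure and safety by construction, and convergence from the fact that termination with success forces $R = S_p$), invoking Lemma~\ref{selfG:RtoS} for the convergence step exactly as you do. The only difference is that the paper leaves the $k$-fairness and $Rank$ bookkeeping entirely to that lemma rather than re-examining it in the theorem's proof, which is also where your remaining concerns are properly discharged.
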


\begin{proof}
Proof of this theorem is quite the same as Theorem~\ref{sk2soundness} just instead of Lemma~\ref{self:RtoS}, we should use Lemma~\ref{selfG:RtoS}.
\end{proof}

\begin{observation}
\label{kg0}
For any state $s_0$ if $Rank.s_0 > 0$ (i.e.,  $s_0 \notin R$) then
\begin{enumerate}
   \item $\exists s_2: Rank.s_2 \geq k  : (s_0, s_2) \in \delta_e$, or
   \item ($\nexists s_1: Rank.s_1 < k : (s_0, s_1) \in \delta_e   )  \wedge  Rank.s_0 > 1$
\end{enumerate}
\end{observation}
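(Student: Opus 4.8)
The plan is to obtain Observation~\ref{kg0} as a direct logical rewriting of the guard of the for-loop on Lines~\ref{GeneralK:forBegin}--\ref{GeneralK:forEnd} of Algorithm~\ref{alg:GeneralK}, together with the trivial fact that $Rank.s$ ranges over $\{0,1,2,\ldots\}\cup\{\infty\}$. First I would record the bookkeeping invariant maintained throughout the execution: the algorithm sets $Rank.s_0 = 0$ only on Line~\ref{GeneralK:forBegin}, where it simultaneously adds $s_0$ to $R$, and the inner repeat loop on Lines~\ref{GeneralK:insideBeign}--\ref{GeneralK:insideEnd} only lowers ranks of states not in $R$ to values of the form $Rank.s_1+1 \geq 1$. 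Hence at every point $Rank.s_0 > 0$ holds iff $s_0 \notin R$, which justifies treating the two formulations of the hypothesis as interchangeable.

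\textbf{Main step.} Fix a state $s_0$ with $Rank.s_0 > 0$ at termination. Since the outer repeat loop terminated with $R' = R$, no state was added to $R$ in the final iteration; in particular, after the inner loop on Lines~\ref{GeneralK:insideBeign}--\ref{GeneralK:insideEnd} stabilized, the for-loop guard evaluated to false for $s_0$ under the current (and final) values of $Rank$. Negating the guard
\[
(\nexists s_2 : Rank.s_2 \geq k : (s_0,s_2)\in\delta_e)\ \wedge\ \big((\exists s_1 : Rank.s_1 < k : (s_0,s_1)\in\delta_e)\ \lor\ (Rank.s_0 = 1)\big)
\]
via De Morgan yields
\[
(\exists s_2 : Rank.s_2 \geq k : (s_0,s_2)\in\delta_e)\ \lor\ \big((\nexists s_1 : Rank.s_1 < k : (s_0,s_1)\in\delta_e)\ \wedge\ (Rank.s_0 \neq 1)\big).
\]
The left disjunct is exactly clause~1 of the observation. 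For the right disjunct, combine $Rank.s_0 \neq 1$ with the hypothesis $Rank.s_0 > 0$ and the fact that $Rank.s_0$ is a nonnegative integer or $\infty$ to conclude $Rank.s_0 > 1$; together with $\nexists s_1 : Rank.s_1 < k : (s_0,s_1)\in\delta_e$ this is precisely clause~2. This exhausts both cases and establishes the observation.

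\textbf{Expected obstacle.} There is no genuine difficulty here, since the statement is essentially a syntactic dualization of the loop guard. The only point requiring care is to be explicit that the $Rank$ values referenced are those upon termination rather than at some intermediate configuration: in the terminating outer iteration the inner loop first drives all ranks to a fixpoint, and then the for-loop adds nothing, so the guard is false for $s_0$ with exactly the terminal ranks, which is all clauses~1 and~2 speak about. (This mirrors the role of Observation~\ref{notR1} in the $k\!=\!2$ completeness argument, and Observation~\ref{kg0} will be used analogously in the completeness proof of Algorithm~\ref{alg:GeneralK}.)
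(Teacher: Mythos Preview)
Your argument is correct and is precisely the intended justification: the paper states this result as an Observation without proof, since it is the De Morgan dual of the for-loop guard on Lines~\ref{GeneralK:forBegin}--\ref{GeneralK:forEnd} specialized to the terminal configuration, together with $Rank.s_0>0 \wedge Rank.s_0\neq 1 \Rightarrow Rank.s_0>1$. Your care in noting that the terminating outer iteration leaves all ranks at their final fixpoint before the for-loop runs (so the guard is evaluated against exactly the terminal ranks) is the only nontrivial point, and you handle it correctly.
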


\begin{observation}
\label{eCanFoTok}
Let $p''$ be any program such that $\delta''_p \cap \delta_b = \phi$. Let $s_j$ be any state with $Rank  > 0$, and $\exists s :: (s_j,s) \in \delta_e$. Then for every $p''[]_k\delta_e$ prefix $\alpha = \langle..., s_{j-1}, s_j\rangle$, there exists suffix $\beta = \langle s_{j+1}, s_{j+2}, \ldots\rangle$, such that $\alpha\beta$ is a $p''[]_k\delta_e$ computation, and $Rank.s_{j+1} \geq k \wedge (s_j, s_{j+1}) \in \delta_e$.

\end{observation}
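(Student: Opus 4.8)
The plan is to prove Observation \ref{eCanFoTok} in two moves: first exhibit a concrete environment transition out of $s_j$ whose target has $Rank \geq k$, and then argue that this transition may legally serve as the next step of a $p''[]_k\delta_e$ computation extending $\alpha$, after which $\beta$ is completed to an infinite computation in the standard way.

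For the first move I would apply Observation \ref{kg0} to $s_j$, which is permissible since $Rank.s_j > 0$. In the first case of Observation \ref{kg0} there is a state $s_{j+1}$ with $(s_j, s_{j+1}) \in \delta_e$ and $Rank.s_{j+1} \geq k$, which is exactly the first step of $\beta$ we want. In the second case there is no state $s_1$ with $(s_j, s_1) \in \delta_e$ and $Rank.s_1 < k$; combining this with the hypothesis that $s_j$ has at least one outgoing environment transition, every environment transition out of $s_j$ reaches a state of rank at least $k$, so we may take $s_{j+1}$ to be the target of any one of them. Either way we obtain $(s_j, s_{j+1}) \in \delta_e$ with $Rank.s_{j+1} \geq k$.

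For the second move I would check the fairness clause in the definition of a $p''[]_k\delta_e$ computation. That clause forces a step to be a program step only when an environment transition occurred within the preceding $k-1$ steps \emph{and} the program is enabled in the current state; for the prefixes to which this observation is meant to be applied $s_j$ is environment-enabled, so one of those two conditions fails and the environment step $(s_j, s_{j+1})$ at position $j$ is permitted. Taking this step opens a window requiring positions $j+1, \ldots, j+k-1$ to be program steps whenever the program is enabled there; this window, together with the rest of $\beta$, is then filled exactly as in the construction behind Lemma \ref{lem:notRNotRecovery} and Corollary \ref{notRNotRecovery} for $k=2$ and its general-$k$ counterpart --- take a program transition whenever one is available and otherwise another environment transition --- so that $\alpha\beta$ is an infinite $p''[]_k\delta_e$ computation. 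Since the statement constrains $\beta$ only through its first step, this completes the argument.

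The delicate point, and the step I expect to need the most care, is this fairness-consistency check: making precise that for the prefixes that actually arise the environment move is schedulable at position $j$ (i.e. that $s_j$ is environment-enabled there, not merely that it has an outgoing environment transition), and that the program-priority window opened by the move can always be completed to an infinite computation regardless of whether the program stays enabled throughout it. I expect this to reduce to the definitional bookkeeping of $p''[]_k\delta_e$ computations plus the relevant no-deadlock provision (as in Assumption \ref{assumption:infinite} and the explicit deadlock case of Lemma \ref{lem:notRNotRecovery}), mirroring the reasoning already carried out for $k=2$.
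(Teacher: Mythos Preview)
The paper provides no proof for this statement: it is listed as an \emph{Observation} and left unjustified. Your proposal is therefore not merely matching the paper but going beyond it, and your first move --- invoking Observation~\ref{kg0} and combining its second alternative with the hypothesis that $s_j$ has an outgoing environment transition --- is exactly the right way to produce the required $(s_j,s_{j+1})\in\delta_e$ with $Rank.s_{j+1}\ge k$.

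Your second move is also on target, and you are right to flag the fairness check as the delicate point. In fact the statement as written (quantifying over \emph{every} prefix $\alpha$) is stronger than what is literally justified: if $\alpha$ ends with a recent environment step and $\delta''_p$ is enabled at $s_j$, the fairness clause forbids taking $(s_j,s_{j+1})\in\delta_e$ next. The paper simply does not address this. What makes the observation usable is that in both places the paper invokes it --- the unnamed theorem preceding Corollary~\ref{finalCG} and Theorem~\ref{thm:GKcom} --- the state in question either starts the computation or is reached after at least $k-1$ program steps (cf.\ Corollary~\ref{needm-1}), so $s_j$ is indeed environment-enabled there. Your instinct to restrict to ``prefixes to which this observation is meant to be applied'' is thus the correct reading; a fully rigorous write-up would either add environment-enabledness of $s_j$ as an explicit hypothesis or verify it at each call site, which is precisely what you propose.
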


\begin{observation}
\label{mtom-1}
Let $p''$ be any program such that $\delta''_p \cap \delta_b = \phi$. Let $s_j$ be any state with $Rank  = m$, there does not exist state $s'$ with $Ranks < m-1$ such that $(s, s') \in \delta''_p$. 
\end{observation}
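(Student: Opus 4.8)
I read Observation~\ref{mtom-1} as: for every program $p''$ with $\delta''_p \cap \delta_b = \phi$ and every state $s$ with $Rank.s = m$ (the value held at termination of Algorithm~\ref{alg:GeneralK}), there is no state $s'$ with $Rank.s' < m-1$ and $(s,s') \in \delta''_p$; equivalently, every transition of $p''$ out of a rank-$m$ state lands in a state of rank at least $m-1$. The plan is to reduce this to a purely structural property of the $Rank$ labelling that the algorithm produces, and then read that property directly off the exit condition of the inner \texttt{repeat} loop on Lines~\ref{GeneralK:insideBeign}--\ref{GeneralK:insideEnd}.

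\textbf{Reduction and core argument.} First I would reduce the claim about $\delta''_p$ to one about $Rank$ alone: since $\delta''_p \cap \delta_b = \phi$, every $(s,s') \in \delta''_p$ satisfies $(s,s') \notin \delta_b$, so it suffices to prove that at termination, for every state $s$ with $Rank.s = m$ and every $s'$ with $(s,s') \notin \delta_b$ we have $Rank.s' \geq m-1$. If $s \in R$ then $m = 0$ and the claim is vacuous since ranks are nonnegative (or $\infty$). If $s \notin R$, then using $Rank.s = 0 \iff s \in R$ (which I would establish separately, consistent with Observation~\ref{kg0}) we have $m \geq 1$; I would then look at the configuration immediately after the last completion of the inner loop, where its guard is false, so there is no $s_0 \notin R$ with an $s_1$ satisfying $(s_0,s_1) \notin \delta_b$ and $Rank.s_1 + 1 < Rank.s_0$. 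Instantiating $s_0 := s$ gives $Rank.s' + 1 \geq Rank.s = m$, i.e. $Rank.s' \geq m-1$, for every $s'$ with $(s,s') \notin \delta_b$; the case $m = \infty$ is covered too, since a safe transition from $s$ into a finite-rank state would have enabled a relaxation, contradicting the guard being false.

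\textbf{Main obstacle.} The hard part will be justifying that the inequality established right after an inner-loop completion still holds at \emph{algorithm termination}. The \texttt{for} loop on Lines~\ref{GeneralK:forBegin}--\ref{GeneralK:forEnd} can reset ranks to $0$, which transiently breaks $Rank.s' \geq Rank.s - 1$ for states $s$ of rank $\geq 2$; but whenever it adds a state the outer loop repeats and re-runs the inner relaxation loop, restoring the inequality, and the outer loop exits (condition $R' = R$) only on an iteration whose \texttt{for} loop adds nothing and changes no rank. Hence the final configuration is exactly one reached just after an inner-loop completion with no subsequent rank modification, so the structural property -- and therefore Observation~\ref{mtom-1} -- holds there. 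A minor secondary obstacle is pinning down $Rank.s = 0 \iff s \in R$, which follows by induction over the loop iterations from the initialization and the only two places where $Rank$ is modified ($Rank.s_0 := Rank.s_1 + 1$ with $s_0 \notin R$, and $Rank.s_0 := 0$ together with $R := R \cup \{s_0\}$).
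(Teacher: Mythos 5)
The paper states Observation~\ref{mtom-1} without any proof, treating it as immediate from the way Algorithm~\ref{alg:GeneralK} constructs the ranks; your argument is precisely that intended justification, and you correctly isolate and discharge the one genuine subtlety the paper glosses over, namely that the \textbf{for} loop on Lines~\ref{GeneralK:forBegin}--\ref{GeneralK:forEnd} resets ranks to $0$, which is harmless because the outer loop terminates only on an iteration in which that loop adds nothing, so the final labelling is the one left by the last completed run of the inner relaxation loop. Your auxiliary claim $Rank.s=0 \iff s\in R$ is also right (and consistent with Observation~\ref{kg0}), since the two rank-modifying statements preserve it.

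One fine point deserves flagging, because it is the only step that does not literally follow from the pseudocode: you equate termination of the inner \textbf{repeat} loop (Lines~\ref{GeneralK:insideBeign}--\ref{GeneralK:insideEnd}) with falsity of its relaxation guard. The printed until-test, however, only checks that $\delta'_p$ was unchanged in the iteration, not that no rank changed. If the state $s_0$ selected by the guard already has $(s_0,s_1)$ as its sole outgoing transition in $\delta'_p$ --- which happens exactly when $s_0$ is re-relaxed through the same successor after $Rank.s_1$ has dropped --- then the body updates $Rank.s_0$ while leaving $\delta'_p$ unchanged, and the loop exits even though relaxations at other states remain available. One can arrange executions (e.g., a state $c$ with $Rank.c=6$ whose only safe successor $b$ drops from rank $5$ to rank $2$ in that final iteration) whose terminal labelling has a transition outside $\delta_b$ decreasing the rank by more than one, contradicting the observation as a statement about the literal code. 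This is best read as an infelicity of the printed termination test rather than of your strategy: under the clearly intended reading --- iterate until no rank can be decreased --- your instantiation of the falsified guard yields exactly $Rank.s' \geq m-1$ and the proof is complete. You should either state that reading explicitly or strengthen the until-condition (require the rank vector, not just $\delta'_p$, to be unchanged) before invoking it.
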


\begin{theorem}
\label{morm-1}
Let $p''$ be any program such that $\delta''_p \cap \delta_b = \phi$. Let $s_j$ be any state with $Rank  = m $. Then for every $p''[]_k\delta_e$ computation $\langle ..., s_{j-1}, s_j, s_{j+1}, \ldots\rangle$ where $\forall l : l \geq j : (s_l, s_{l+1}) \in \delta_p$, one of two following conditions is true:
%
\begin{enumerate}
\item $\exists i: i > j :Rank.s_i = m-1.$
\item $\forall i: i \geq j : Rank.s_i \geq m.$
\end{enumerate}

\end{theorem}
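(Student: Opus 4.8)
The plan is to prove the statement by a minimal-index argument that uses only one fact: a single program transition outside $\delta_b$ can decrease $Rank$ by at most one, which is exactly Observation~\ref{mtom-1}. Fix $p''$ with $\delta''_p \cap \delta_b = \phi$, the state $s_j$ with $Rank.s_j = m$, and a $p''[]_k\delta_e$ computation $\langle \ldots, s_{j-1}, s_j, s_{j+1}, \ldots \rangle$ all of whose transitions from index $j$ onward are program transitions of $p''$ (in particular each such transition avoids $\delta_b$). If $Rank.s_i \geq m$ for every $i \geq j$, then condition~2 holds and we are done; so assume this fails and let $i_0$ be the smallest index with $i_0 \geq j$ and $Rank.s_{i_0} < m$.

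Since $Rank.s_j = m$, necessarily $i_0 > j$, hence $i_0 - 1 \geq j$, and by minimality of $i_0$ we get $Rank.s_{i_0 - 1} \geq m$. The transition $(s_{i_0-1}, s_{i_0})$ belongs to $\delta''_p$ and therefore not to $\delta_b$; since the $Rank$ of a state is the length of a shortest path to $R$ that uses only transitions outside $\delta_b$, Observation~\ref{mtom-1} applied at $s_{i_0-1}$ gives $Rank.s_{i_0} \geq Rank.s_{i_0-1} - 1 \geq m - 1$. Because ranks are non-negative integers (or $\infty$), combining $Rank.s_{i_0} \geq m-1$ with $Rank.s_{i_0} < m$ forces $Rank.s_{i_0} = m - 1$, so condition~1 holds with $i = i_0$. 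This completes the case split and hence the proof.

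Two boundary checks round things off. If $m = 0$ then $Rank.s_i \geq 0 = m$ for every $i$, so condition~2 is automatic; if $m = \infty$ the same argument shows no index $i \geq j$ can attain finite rank, so condition~2 again holds. If $s_j$ admits no computation of the stated form, the claim is vacuous for that $s_j$. An equivalent presentation is a strong induction on $i \geq j$ proving ``either some index in $\{j, \ldots, i\}$ has rank $m-1$, or $Rank.s_l \geq m$ for all $l \in \{j, \ldots, i\}$''; its inductive step is precisely the single application of Observation~\ref{mtom-1} above.

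I do not expect a genuine obstacle here; the only point that deserves care is conceptual rather than computational: $Rank$ is the function \emph{computed by the algorithm} (equivalently, the shortest $\delta_b$-free distance to $R$), whereas $p''$ is an \emph{arbitrary} competitor program, so a priori a transition of $p''$ might appear to collapse $Rank$ by an arbitrary amount. The hypothesis $\delta''_p \cap \delta_b = \phi$ rules this out: every transition of $p''$ is a legal edge of the graph defining $Rank$, which is exactly what makes Observation~\ref{mtom-1} and the ``decrease by at most one'' bound applicable. (I note in passing that the theorem's hypothesis is written ``$(s_l, s_{l+1}) \in \delta_p$'' where ``$\delta''_p$'' is evidently intended, matching Observation~\ref{mtom-1}; the argument is insensitive to which one it is, since both sets lie in the complement of $\delta_b$.)
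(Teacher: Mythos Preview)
Your proof is correct and takes essentially the same approach as the paper: both arguments locate the first index at which the rank drops below $m$ and apply Observation~\ref{mtom-1} at that step to force the rank there to be exactly $m-1$. Your presentation is the direct (minimal-index) version of the paper's contradiction argument, and it is in fact more carefully written, including the explicit handling of the boundary cases $m=0$ and $m=\infty$ and the observation that the bound applies even when $Rank.s_{i_0-1}$ is strictly larger than $m$.
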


\begin{proof}
We proof this theorem by contradiction: \newline
Suppose it is not true. It means there is a $p''[]_k\delta_e$ computation $\langle ..., s_{j-1}, s_j, s_{j+1}, \ldots\rangle$ where $Rank.s_j = m$  and $\forall l : l \geq j : (s_l, s_{l+1}) \in \delta_p$, but \newline $\exists k : k \geq j : (\forall l : s \leq l \leq k : Rank.l \geq m) \wedge Rank.(k+1) < m - 1$. This is contradiction to Observation~\ref{mtom-1}.
\end{proof}

\begin{corollary}
\label{needs1}
Let $p''$ be any program such that $\delta''_p \cap \delta_b = \phi$. Let $s_j$ be any state with $Rank  = m $ and $s_k$ be any state with $Rank = 0$. Then for every $p''[]_k\delta_e$ prefix $\langle ..., s_{j-1}, s_j, s_{j+1}, \ldots, s_k\rangle$ where $\forall l : l \geq j : (s_l, s_{l+1}) \in \delta_p$, $\exists i: j < i < k : Rank.s_i =1$.
\end{corollary}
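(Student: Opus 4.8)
The statement should be understood with $m \geq 2$ (in the intended use $s_j$ has $Rank.s_j \geq k \geq 2$); when $m \leq 1$ there need not be any index strictly between $j$ and $k$ at all, so the claim would be degenerate. The single ingredient I plan to use is the ``no shortcuts'' property already recorded in Observation~\ref{mtom-1}: along a program transition the value of $Rank$ can drop by at most one, i.e.\ from a state of $Rank\ m'$ there is no program transition into a state of $Rank < m'-1$. (This is just the statement that $Rank$ is the length of a shortest chain of safe program transitions to $R$, so it behaves like a distance along program edges.) Since the prefix $\langle s_j, s_{j+1}, \dots, s_k\rangle$ consists entirely of program transitions, $Rank$ decreases by at most one at each step of it.

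The plan is then to catch the first moment $Rank$ hits $0$. Because $Rank.s_j = m \geq 1$ and $Rank.s_k = 0$, let $i^{*}$ be the least index with $j < i^{*} \leq k$ and $Rank.s_{i^{*}} = 0$. By minimality, and since $Rank.s_j = m \neq 0$, the predecessor satisfies $Rank.s_{i^{*}-1} = m' \geq 1$; applying Observation~\ref{mtom-1} to the program transition $(s_{i^{*}-1}, s_{i^{*}})$, whose target has $Rank\ 0$, forces $0 \geq m'-1$, hence $m' = 1$. So $i := i^{*}-1$ has $Rank.s_i = 1$, and $i < k$ because $i^{*} \leq k$. To get $j < i$ I would use $m \geq 2$: the first transition $(s_j, s_{j+1})$ leaves a state of $Rank\ m \geq 2$, so by the same observation it cannot reach $Rank\ 0$, whence $i^{*} \geq j+2$ and $i \geq j+1 > j$. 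An essentially equivalent route is to iterate Theorem~\ref{morm-1}: since $Rank.s_k = 0 < m$, its alternative~(2) is impossible at each stage, so alternative~(1) yields a strictly increasing chain of indices $j < i_1 < i_2 < \cdots$ with $Rank.s_{i_t} = m-t$; the term with $Rank = 1$ lies strictly before $k$ because its $Rank$ differs from $Rank.s_k = 0$.

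I expect the only real friction to be bookkeeping rather than mathematics: (i) $Rank$ is recomputed across iterations of the outer loop of Algorithm~\ref{alg:GeneralK}, so the corollary must be read relative to a fixed iteration's $Rank$ function, and the program transitions appearing along the prefix must be safe so that Observation~\ref{mtom-1} is applicable; and (ii) the strict inequalities $j < i < k$ are exactly what pin down the need for the hypothesis $m \geq 2$, which I would add explicitly to the statement. The combinatorial core --- ``a chain of program steps from $Rank\ m$ down to $Rank\ 0$ must pass through every intermediate $Rank$ value, in particular $Rank\ 1$'' --- is immediate once Observation~\ref{mtom-1} is in hand.
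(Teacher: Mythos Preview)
Your proposal is correct and matches the paper's intent: the paper states this result as a corollary of Theorem~\ref{morm-1} with no explicit proof, and your second route (iterating Theorem~\ref{morm-1} to step the rank down one value at a time until hitting $1$) is exactly the argument the paper leaves implicit. Your first route via Observation~\ref{mtom-1} directly is an equivalent and slightly more elementary way to reach the same conclusion, and your remark that the strict inequality $j < i$ requires $m \geq 2$ is a genuine clarification the paper omits.
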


\begin{corollary}
\label{needm-1}
Let $p''$ be any program such that $\delta''_p \cap \delta_b = \phi$. Let $s_j$ be any state with $Rank  = m $ and $s_k$ be any state with $Rank = n$. Then for every $p''[]_k\delta_e$ prefix $\langle ..., s_{j-1}, s_j, s_{j+1}, \ldots, s_k\rangle$ where $\forall l : l \geq j : (s_l, s_{l+1}) \in \delta_p$, $ k - j \geq m-n$.
\end{corollary}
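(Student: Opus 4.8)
The plan is to obtain Corollary~\ref{needm-1} by a short induction resting entirely on Observation~\ref{mtom-1}, which says that a program transition leaving a state of $Rank$ equal to $m$ can only land in a state of $Rank$ at least $m-1$; in other words, each program transition lowers $Rank$ by at most one.

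First I would dispose of the degenerate case $m \le n$: then $m - n \le 0$, and since the prefix $\langle \ldots, s_j, s_{j+1}, \ldots, s_k \rangle$ forces $j \le k$, the inequality $k - j \ge m - n$ is immediate. So assume $m > n$, write $d = k - j$ for the number of transitions from $s_j$ to $s_k$, and prove by induction on $i$ that $Rank.s_{j+i} \ge m - i$ for all $0 \le i \le d$. The base case $i = 0$ is just the hypothesis $Rank.s_j = m$. For the inductive step, assuming $Rank.s_{j+i} \ge m - i$, the hypothesis ``$\forall l \ge j : (s_l, s_{l+1}) \in \delta_p$'' ensures $(s_{j+i}, s_{j+i+1}) \in \delta_p = \delta''_p$, so Observation~\ref{mtom-1} gives $Rank.s_{j+i+1} \ge Rank.s_{j+i} - 1 \ge m - (i+1)$. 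Instantiating the claim at $i = d$ yields $n = Rank.s_k \ge m - d$, i.e. $k - j = d \ge m - n$, which is the desired bound.

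I do not anticipate a real obstacle; the argument is essentially a telescoping of Observation~\ref{mtom-1}. The only points to check are that the $\delta_b$-disjointness hypothesis on $p''$ is exactly the one Observation~\ref{mtom-1} requires (it is, verbatim) and that every transition along the segment $s_j, \ldots, s_k$ is genuinely a program transition rather than an environment transition (guaranteed by the ``$\forall l \ge j$'' clause, which is what lets us apply Observation~\ref{mtom-1} at each step). An alternative route would be to iterate Theorem~\ref{morm-1}: since $s_k$ has $Rank$ strictly below $m$, the second disjunct fails at each stage, so $Rank$ must pass through $m-1, m-2, \ldots, n$ at strictly increasing indices past $j$, and counting these indices gives the same inequality; but the direct single-step induction above is shorter, and that is the one I would present.
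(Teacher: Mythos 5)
Your proof is correct and matches the paper's intent: the paper states this as an unproved corollary of Observation~\ref{mtom-1} (each program transition of a $\delta_b$-free program lowers $Rank$ by at most one), and your telescoping induction is exactly that argument made explicit. The only cosmetic point is that the transitions along the segment are those of $\delta''_p$ rather than literally $\delta_p$ (the paper's own statement is loose here), which does not affect the argument since Observation~\ref{mtom-1} is stated for $\delta''_p$.
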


\begin{theorem}
Let $p''$ be any program such that $\delta''_p \cap \delta_b = \phi$. Let $s_j$ be any state with $Rank \geq k$. Then for every $p''[]_k\delta_e$ prefix $\alpha = \langle..., s_{j-1}, s_j\rangle$, there exists suffix $\beta = \langle s_{j+1}, s_{j+2}, \ldots\rangle$, such that $\alpha\beta$ is a $p''[]_k\delta_e$ computation, and 
one two cases below is correct: 
\begin{enumerate}
\item $\forall i: j < i : Rank.s_i > 0$
\item  $\exists i: j < i  : Rank.s_i \geq k \wedge \forall l: j < l : Rank.s_l > 0$
\end{enumerate}

\end{theorem}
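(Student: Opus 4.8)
The plan is to build the suffix $\beta$ one short block at a time, maintaining as an invariant that every state appended has $Rank > 0$, and arguing that we can always either continue in this fashion forever (yielding case~1) or first return to a state of $Rank$ at least $k$ without ever leaving $\{s : Rank.s > 0\}$ (yielding case~2, from which one recurses). The structural facts needed are all in hand: (a) a program transition of $p''$ out of a state of rank $m$ lands in a state of rank at least $m-1$ (Observation~\ref{mtom-1}); (b) from any state with $Rank > 0$ that has an outgoing environment transition, the environment can move to a state of $Rank$ at least $k$ (Observation~\ref{eCanFoTok}, which in turn rests on Observation~\ref{kg0}); and (c) every state of $Rank$ exactly $1$ necessarily has an environment transition, and it must go to a state of $Rank$ at least $k$ --- this is the contrapositive of the inclusion test in the \textbf{for} loop of Algorithm~\ref{alg:GeneralK} and is exactly case~(1) of Observation~\ref{kg0} when $Rank=1$ (case~(2) there requires $Rank>1$).

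First I would dispose of the environment-move case: if the prefix $\alpha$ is such that an environment transition is allowed to fire at $s_j$ (i.e.\ $s_j$ is environment-enabled in $\alpha$), then Observation~\ref{eCanFoTok} supplies a legal continuation whose first step is an environment transition into a state $s_{j+1}$ with $Rank.s_{j+1}\geq k$; since $Rank.s_j\geq k>0$ as well, this puts us in case~2 with $i=j+1$. Otherwise the continuation must begin with a program transition, which by Observation~\ref{mtom-1} lands in a state of rank at least $k-1\geq 1>0$ (using $k>1$). I would then track the fairness window of Definition~\ref{def:progenvcomp}: whenever an environment transition has just fired into a state of rank at least $k$, at most $k-1$ consecutive program transitions are forced, and along them the rank drops by at most $1$ per step, hence stays at least $k-(k-1)=1>0$ throughout the window. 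So no forced block can ever reach rank $0$, and at the end of the block we are again free and at a state of positive rank.

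The heart of the argument is the ``program-only'' stretches, i.e.\ maximal segments containing no environment transition. On such a stretch Theorem~\ref{morm-1} applies: either the rank stays at or above its current positive value forever --- giving case~1 outright --- or the rank eventually decreases, in which case by Corollary~\ref{needs1} any descent toward rank $0$ must first pass through a state of rank $1$. At that rank-$1$ state the last environment move is at least $k-1$ steps in the past (the stretch began when a forced window expired), so the environment is free to move; and by fact~(c) it can move to a state of rank at least $k$. Firing it returns us to rank $\geq k$ without having touched rank $0$, which is case~2, and the same analysis recurses. Putting the pieces together, the computation we construct never visits a rank-$0$ state, so in fact one can always land in case~1; equivalently, if one prefers the incremental reading, each invocation of case~2 extends the positive-rank prefix and the limit is an infinite positive-rank computation.

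I expect the main obstacle to be the bookkeeping around the fairness condition: one must check, for \emph{every} prefix $\alpha$ ending at $s_j$ --- which fixes how many program steps have elapsed since the last environment transition --- that the continuation proposed (an environment move now, or a program move now followed later by an environment move at a rank-$1$ state) is legal, i.e.\ never asks the environment to move while the program is still inside a forced $(k\!-\!1)$-step window. This reduces to showing that forced windows can begin only at states of rank at least $k$ and that by the time the rank has dropped to $1$ such a window has necessarily expired --- which is exactly the arithmetic $k-(k-1)=1$ together with the window length $k-1$. Once that is pinned down the remainder is a routine induction on the length of the prefix of $\beta$ constructed so far.
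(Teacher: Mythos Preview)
Your proposal is correct and follows essentially the same line as the paper's proof: case-split on whether an environment move is available at $s_j$; if not, follow program transitions (each dropping rank by at most one, via Observation~\ref{mtom-1}); use Corollary~\ref{needs1} to locate the mandatory rank-$1$ waypoint before any putative drop to rank $0$; invoke Observation~\ref{kg0} to find an outgoing environment transition to a state of rank $\geq k$ there; and use the step-count bound of Corollary~\ref{needm-1} to certify that the fairness window has expired so that this environment transition may legally fire. Your treatment of the fairness bookkeeping is in fact more careful than the paper's --- the paper leans on Observation~\ref{eCanFoTok} as stated, whereas you correctly condition on environment-enabledness and track the remaining length of any forced window inherited from $\alpha$.
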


\begin{proof}
%
If $\exists s :: (s_0,s) \in \delta_e$, according to Observation~\ref{eCanFoTok} there should be an environment action to state with $Rank \geq k$. Therefore, that transition can occur and program reaches a state with $Rank \geq k$.
Otherwise, if there is no environment transition starting from $s_0$, $s_0$ is either a deadlock state or has a program transition in $\delta''_p$. If it is deadlock the theorem is proved. Otherwise the same property is true in the next state $s_1$ that is reached by the program action if $Rank.s_1 > 1$. If there is no deadlock or environment transition for a sequence of states, then we have a sequence of program transitions. If $Rank$ of all states in this sequence is greater that 0, the theorem is proved. Otherwise, if there is $s_k$ with $Rank = 0$  according to Corollary~\ref{needs1} we reach state $s'$ with $Rank =1$. According to Observation~\ref{kg0} from $s'$ there is an environment transition $(s', s'')$ to a state with $Rank \geq k$. According to Corollary~\ref{needm-1}, to reach $s'$ with $Rank = 1$ from $s_j$ with $Rank \geq k$ at least $k-1$ program transitions are needed. It means that even with fairness assumption an environment transitions can occur from $s'$. Therefore, $(s', s'')$  can occur and reach $s''$ with $Rank \geq k$. 

\end{proof}

\begin{corollary}
\label{finalCG}
Let $\delta''_p$ be any program such that $\delta''_p \cap \delta_b = \phi$. Let $s_j$ be any state with $Rank \geq k$. Then for every $p''[]_k\delta_e$ prefix $\alpha = \langle..., s_{j-1}, s_j\rangle$, there exists suffix $\beta = \langle s_{j+1}, s_{j+2}, \ldots\rangle$, such that $\alpha\beta$ is a $p''[]_k\delta_e$ computation, and 
 $\forall i: i \geq j  : Rank.s_i > 0$.
 

\end{corollary}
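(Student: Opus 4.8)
The plan is to obtain Corollary~\ref{finalCG} as an immediate consequence of the theorem stated just above it (the one with the two-case conclusion for states of rank at least $k$). That theorem, applied to the very same state $s_j$ (which by hypothesis has $Rank.s_j \geq k$) and the very same prefix $\alpha$, already supplies a suffix $\beta = \langle s_{j+1}, s_{j+2}, \ldots\rangle$ for which $\alpha\beta$ is a $p''[]_k\delta_e$ computation. So I would simply take that $\beta$ as the witness for the corollary and then verify that it has the required rank property.

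The verification hinges on the observation that the two cases in the conclusion of the preceding theorem share a common conjunct: case~1 is exactly ``$\forall l : j < l : Rank.s_l > 0$'', and case~2 is ``$(\exists i : j < i : Rank.s_i \geq k) \wedge (\forall l : j < l : Rank.s_l > 0)$'', whose second conjunct is again ``$\forall l : j < l : Rank.s_l > 0$''. Hence, no matter which case the theorem delivers, every state strictly after $s_j$ along $\beta$ has positive rank. It remains only to account for $s_j$ itself: by hypothesis $Rank.s_j \geq k$, and since $k$ is an integer with $k > 1$ throughout, $Rank.s_j \geq k > 1 > 0$. Combining the two facts gives $\forall i : i \geq j : Rank.s_i > 0$, which is precisely the claim.

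There is no real obstacle here; the corollary is a bookkeeping consolidation of the theorem's disjunctive conclusion into the single statement actually needed downstream. The reason for phrasing it this way is that $Rank.s_i > 0$ is equivalent to $s_i \notin R$ (equivalently $s_i \notin S$), since a state is assigned rank $0$ exactly when it is placed in $R$; so the corollary says that from any state of rank at least $k$, some $p''[]_k\delta_e$ computation (for an arbitrary candidate revision $\delta''_p$ with $\delta''_p \cap \delta_b = \phi$) never reaches $S$. This is the form in which the result is used in the completeness argument for Algorithm~\ref{alg:GeneralK}, so the only point requiring care is to keep the quantifier as ``$\forall i : i \geq j$'' (including $s_j$) rather than ``$\forall i : i > j$''.
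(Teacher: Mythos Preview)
Your proposal is correct and matches the paper's intent: the corollary is stated without proof, as an immediate consequence of the preceding theorem. Your observation that both disjuncts of that theorem already contain the conjunct $\forall l : j < l : Rank.s_l > 0$, together with $Rank.s_j \geq k > 0$, is exactly the one-line derivation the paper leaves implicit.
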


\begin{theorem}
\label{thm:GKcom}
Algorithm~\ref{alg:GeneralK} is complete.
\end{theorem}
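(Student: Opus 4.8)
The plan is to mirror the structure of the proof of Theorem~\ref{sk2completeness}, but using the $Rank$-based machinery already developed (Observations~\ref{kg0}, \ref{eCanFoTok}, \ref{mtom-1}, Theorem~\ref{morm-1}, Corollaries~\ref{needs1}, \ref{needm-1}, \ref{finalCG}). Recall that Algorithm~\ref{alg:GeneralK} returns \texttt{Not-possible} precisely when, upon termination of the outer loop on Lines~\ref{GeneralK:outsideBegin}--\ref{GeneralK:outsideEnd}, there exists a state $s_0 \notin R$. So I must show: if such an $s_0$ exists, then \emph{no} program $p''$ with $\delta''_p \cap \delta_b = \phi$ can be $\delta_b$-safe stabilizing for $S$ under the $[]_k\delta_e$ semantics. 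Equivalently, for every candidate $p''$ there is a $p''[]_k\delta_e$ computation starting from $s_0$ that never reaches $S$ (and, since $\delta''_p \cap \delta_b = \phi$ and we may also push any environment restriction into the argument, it simultaneously avoids $\delta_b$; note the standing assumption $\delta_b \cap \delta_e = \phi$ from Section~\ref{sec:stabkeq2} rules out the program being forced into a bad transition by the environment).

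First I would observe that if $s_0 \notin R$ then $Rank.s_0 > 0$, and apply Observation~\ref{kg0}: either (1) there is an environment transition $(s_0,s_2) \in \delta_e$ with $Rank.s_2 \geq k$, or (2) every environment transition out of $s_0$ goes to a state with $Rank \geq k$ \emph{and} $Rank.s_0 > 1$. In case (1), since $(s_0,s_2)$ is an environment transition, the fairness constraint in Definition~\ref{def:progenvcomp} never forbids taking it from the \emph{initial} state (the clause only constrains an environment step that follows fewer than $k-1$ program steps since the previous environment step; at the start there is no such obligation), so we may begin the computation with $s_0 \to s_2$ and land in a state of $Rank \geq k$. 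In case (2) there are two sub-cases: if $s_0$ has no program transition in $\delta''_p$ it is a deadlock (so certainly no computation from it reaches $S$ — actually it yields a finite or, after adding the forced self-loop, non-progressing computation); otherwise $s_0$ has a program transition, and by Observation~\ref{mtom-1} that transition leads to a state of $Rank \geq \max(1, Rank.s_0 - 1) > 0$, so I can walk forward. The goal of this opening move is to reach, in finitely many steps staying in $\neg S$, either a deadlock or a state of $Rank \geq k$.

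Once a state $s_j$ with $Rank.s_j \geq k$ is reached, I invoke Corollary~\ref{finalCG} directly: for any prefix ending at $s_j$ there is a suffix making it a legitimate $p''[]_k\delta_e$ computation in which $\forall i \geq j : Rank.s_i > 0$, i.e.\ every subsequent state lies in $\neg R \subseteq \neg S$. Concatenating the opening move with this suffix produces a complete $p''[]_k\delta_e$ computation from $s_0$ that never enters $S$, contradicting safe stabilization. The one loose end in case (2) when we only have program transitions and no environment transitions and all ranks stay $> 0$ but never reach $k$: here Corollary~\ref{needs1} forces us eventually to a state of $Rank = 1$, and then Observation~\ref{kg0} again gives an environment transition out of it to a $Rank \geq k$ state, and Corollary~\ref{needm-1} (needing $\geq k-1$ program steps to descend to $Rank = 1$) guarantees the fairness clause permits that environment step — this is exactly the argument already packaged inside the proof leading to Corollary~\ref{finalCG}, so I would cite it rather than reprove it.

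The main obstacle I anticipate is the bookkeeping around \emph{fairness}: I must be careful that every environment transition I insert is actually legal in the $[]_k\delta_e$ semantics, i.e.\ that at the moment I take it, either the last state has no enabled program transition or at least $k-1$ program steps have elapsed since the previous environment step. The opening move in case (1) is fine because it is the very first transition; the recurring environment steps in the "$Rank=1$" situation are fine by the $k-1$-program-step count from Corollary~\ref{needm-1}; but I should state once, cleanly, that the computation I build interleaves exactly one environment transition after every run of $\geq k-1$ program transitions and never otherwise, so all three defining clauses of Definition~\ref{def:progenvcomp} hold. A secondary (minor) obstacle is handling deadlock states uniformly — I would invoke the convention (as in the $k=2$ proof, via the implicit self-loop / the discussion around RemoveDeadlock) that a deadlock state trivially witnesses non-recovery since no computation from it reaches $S$. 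With those two points nailed down, the completeness proof is a direct transcription of the $k=2$ case with "$\neg(R\cup R_p)$" replaced by "$Rank \geq k$" and "$R_p - R$" replaced by "$0 < Rank < k$".
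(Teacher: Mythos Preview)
Your proposal is correct and follows essentially the same approach as the paper's proof: both start from the two cases of Observation~\ref{kg0}, reduce in each case to reaching a state of $Rank \geq k$ (handling deadlocks and program-only runs via Corollaries~\ref{needs1} and~\ref{needm-1}, with the fairness check supplied by the $\geq k-1$ program-step count), and then invoke Corollary~\ref{finalCG} to obtain a computation that never enters $S$. One small wording slip in your ``loose end'' paragraph: you invoke Corollary~\ref{needs1} in the sub-case where ``all ranks stay $>0$'', but that sub-case is already finished (the run never reaches $S$); Corollary~\ref{needs1} is needed in the \emph{complementary} sub-case where the program-only run would otherwise reach $Rank = 0$, so that you can intercept at $Rank = 1$ and take the environment edge to $Rank \geq k$ --- which is exactly how the paper organizes it.
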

\begin{proof}
According to Observation~\ref{kg0}, if we have a state $s_0$ with $Rank > 0$, we have two cases:\\ \\
\textbf{Case 1} $\exists s_2: Rank.s_2 \geq k  : (s_0, s_2) \in \delta_e$ \\
As there exists an environment action to state $s_2$ with $Rank \geq k$, starting from $s_0$ there is computation that next state has $Rank \geq k$. Note that, when a computation starts from $s_0$, even with fairness assumption $(s_0, s_2) \in \delta_e$ can occur. Based on Corollary~\ref{finalCG}, starting from $s_0$, there is a computation such that every state is outside $S$. \\ \\
 \textbf{Case 2}  $(\nexists s_1: Rank.s_1 < k : (s_0, s_1) \in \delta_e )    \wedge  (Rank.s_0 > 1)$ \\
If $\exists s :: (s_0,s) \in \delta_e$, according to Observation~\ref{eCanFoTok} there should be an environment action to state with $Rank \geq k$. Therefore, that transition can occur and program reaches a state with $Rank \geq k$.
Otherwise, if there is no environment transition starting from $s_0$, $s_0$ is either a deadlock state or has a program transition in $\delta''_p$. If it is deadlock it means that there is no computation to reach $S$. Otherwise the same property is true in the next state, $s_1$, that is reached by the program action if $Rank.s_1 > 1$. If there is no deadlock or environment transition for a sequence of states, then we have a sequence of program transitions. If $Rank$ of all states in this sequence is greater that 0, it means that there is a computation that does not reach S, and the theorem is proved. Otherwise, if there is $s_k$ with $Rank = 0$  according to Corollary~\ref{needs1} we reach state $s'$ with $Rank =1$. According to Observation~\ref{kg0} from $s'$ there is an environment transition $(s', s'')$ to a state with $Rank \geq k$. According to Corollary~\ref{needm-1}, to reach $s'$ with $Rank = 1$ from $s_0$ with $Rank \geq k$  at least $k-1$ program transitions are needed. It means that even with fairness assumption an environment transitions can occur from $s'$. Therefore, $(s', s'')$  can occur and reach $s''$ with $Rank \geq k$. Then, according to Corollary~\ref{finalCG}, there is a computation starting from $s''$ (i.e., $s_0$) in which all the states are outside $S$.
\end{proof}

\begin{theorem}
\label{thm:GKP}
Algorithm \ref{alg:GeneralK} is polynomial (in the state space of $p$)
\end{theorem}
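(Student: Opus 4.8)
The plan is to mirror the proof of Theorem~\ref{thm:alg1P} (the polynomiality of Algorithm~\ref{ALG:K2}), with extra care devoted to the two nested \textbf{repeat} loops of Algorithm~\ref{alg:GeneralK}. First I would observe that every individual statement of the algorithm runs in time polynomial in $|S_p|$: each statement either recomputes the state predicate $R$, updates $Rank$, or adds/removes program transitions, and every loop guard is a quantified condition ranging over states (sets of size at most $|S_p|$) and transitions (sets of size at most $|S_p|^2$). Hence a single pass through the body of either \textbf{repeat} loop, including the \textbf{for} loop on Lines~\ref{GeneralK:forBegin}--\ref{GeneralK:forEnd}, costs $O(\mathrm{poly}(|S_p|))$ time.

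Next I would bound the number of iterations of the inner \textbf{repeat} loop on Lines~\ref{GeneralK:insideBeign}--\ref{GeneralK:insideEnd}. The key observation is that $Rank$ is monotonically non-increasing throughout the \emph{entire} execution: the only places it changes are inside this inner loop, where an update replaces $Rank.s_0$ by $Rank.s_1+1$ under the guard $Rank.s_1+1 < Rank.s_0$ (a strict decrease), and inside the \textbf{for} loop, where it is set to $0$ (again not an increase). A simple induction on the order in which states first acquire a finite rank shows that the first finite value assigned to any state is at most $|S_p|$, so every finite rank lies in $\{0,1,\ldots,|S_p|\}$. Therefore, across all iterations of the outer loop, each state's rank can drop from $\infty$ to a finite value at most once, and can subsequently decrease at most $|S_p|$ further times; since every non-terminating iteration of the inner loop strictly decreases at least one rank, the inner loop runs at most $O(|S_p|^2)$ times in total.

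Then I would bound the outer \textbf{repeat} loop on Lines~\ref{GeneralK:outsideBegin}--\ref{GeneralK:outsideEnd} exactly as in Theorem~\ref{thm:alg1P}: it terminates as soon as an iteration adds no new state to $R$, so every iteration but the last adds at least one state to $R$, giving at most $|S_p|+1$ outer iterations. Combining the pieces --- $O(|S_p|)$ outer iterations, $O(|S_p|^2)$ inner iterations in total, $O(|S_p|)$ work per \textbf{for} loop, and $O(\mathrm{poly}(|S_p|))$ time per statement --- yields an overall polynomial running time.

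The main obstacle I expect is the second step. Unlike Algorithm~\ref{ALG:K2}, whose only loop visibly makes progress by enlarging $R$, the inner loop here does not enlarge $R$ at all, so one must identify the correct progress measure, namely $\sum_{s \in S_p} Rank.s$ together with the count of states whose rank is still $\infty$, and crucially observe that this measure is \emph{global}: it continues to decrease even when control re-enters the outer loop, rather than being reset on each outer iteration. Once the monotonicity of $Rank$ is pinned down, the remainder is routine bookkeeping.
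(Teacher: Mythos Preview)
Your approach is correct and matches the paper's in spirit, but you actually supply considerably more than the paper does. The paper's own proof is a one-line copy of the proof of Theorem~\ref{thm:alg1P}: it asserts that each statement runs in polynomial time and that the number of (outer) iterations is polynomial because each adds a state to $R$. In particular, the paper does not separately analyze the inner \textbf{repeat} loop on Lines~\ref{GeneralK:insideBeign}--\ref{GeneralK:insideEnd}; it implicitly folds this into ``each statement is executed in polynomial time.'' Your identification of the global monotonicity of $Rank$ and the bound on the first finite rank value is exactly the argument needed to justify that implicit claim, so your proposal fills a gap the paper leaves open rather than diverging from it.

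One small remark on the step you flagged as the obstacle: the inductive claim that the first finite value assigned to any state is at most $|S_p|$ is indeed correct, and your sketch is right. If you order states by when they first acquire a finite rank and let $f(j)$ denote the first finite value of the $j$-th such state, then $f(j)\le 1+\max_{i<j}f(i)$ (since the update uses some already-finite $Rank.s_1$, which by monotonicity is at most its own first finite value), giving $f(j)\le j-|R|\le |S_p|-1$. With that, your $O(|S_p|^2)$ global bound on inner-loop iterations goes through, and the rest is exactly as in Theorem~\ref{thm:alg1P}.
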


\begin{proof}
The proof follows from the fact that each statement in Algorithm \ref{alg:GeneralK} is executed in polynomial time and the number of iterations are also polynomial, as in each iteration at least one state is added to $R$.
\end{proof}

\section{Proofs for Algorithm \ref{ALG:FAILSAFE}}
\label{sec:failsafeproofs}

In this section, we prove the soundness (Theorem \ref{fail:soundness}) and completeness (Theorem \ref{fail:completeness}) of this algorithm with the help of Lemma \ref{th:C1}, Corollaries \ref{fai:noms2} and \ref{fai:nomt}, and Theorem \ref{fai:noms4}. We also prove the complexity result for this algorithm (Theorem \ref{thm:fsP}). 

The following theorem splits condition $C1$ into easily checkable conditions that assist in soundness and completeness. Specifically, this shows that condition $C1$ is satisfied iff $p'$ does not include any new states or transitions in $S'$. It also ensures that new computations are not created in $p'$ due to deadlocks (caused by removal of transitions from $p$) that may be created due to removal of transitions of $p$.  

\begin{lemma}
\label{th:C1}
The condition \textbf{\textit{C1}} in the problem definition of addition of fault-tolerance is satisfied for $k=2$ \textit{iff} conditions below are satisfied:
\begin{enumerate}
\item	$\delta'_p \cup \delta_e$ is closed in $S'$
\item	$S' \subseteq S$
\item	$\delta'_p|S' \subseteq \delta_p|S$
\item	$\forall s_1 : \big ( \exists s_0, s_2, s_3 : s_0 \in S' \wedge (s_0,s_1), (s_1,s_2) \in \delta_e \wedge (s_1,s_3) \in \delta_p  \big ): \big ( \exists s_4 :: (s_1,s_4) \in \delta'_p \big )$
\item $\exists (s_0, s_1) :: (s_0, s_1) \in (\delta_p \cup \delta_e) \Rightarrow  \exists (s_0, s_2) :: (s_0, s_2) \in (\delta'_p \cup \delta_e)$
\end{enumerate}

\begin{proof}

$(\Rightarrow)$ If any of the five conditions are violated then we can easily create a new computation of $p'[]_2\delta_e$ that is not a computation of $p[]_2\delta_e$ thereby violating \textit{C1}. 

($\Leftarrow$)
We show by induction that if the five conditions of theorem are satisfied, then every prefix $\sigma = \langle s_0, s_1, \ldots, s_i \rangle$ of $p'[]_2\delta_e$ that starts from a state in $S'$ is a prefix of $p[]_2\delta_e$ which starts in $S$.

As $S' \subseteq S$, we know that $s_0 \in S$, then for every $i > 0$ we have cases below:\\ \\
\textbf{Case 1} $(s_i, s_{i+1}) \in \delta'_p$ \\
Since $p' \cup \delta_e$ is closed in $S'$, we know that $s_{i+1} \in S'$. As $\delta'_p| S' \subseteq \delta_p|S'$, we have $(s_i, s_{i+1}) \in \delta_p$. Therefore, if $\langle s_0, s_1, \ldots, s_i \rangle$ is a prefix  of $p []_2 \delta_e$, then $\langle s_0, s_1, \ldots, s_i, s_{i+1} \rangle$ is a prefix of $p []_2 \delta_e$. \\ \\
\textbf{Case 2} $(s_i, s_{i+1}) \in \delta_e$ : \\
Two sub-cases are possible below this case: \\ \\ 
\textbf{Case 2.1}
 $(s_{i-1}, s_i) \in \delta'_p$ \\ 
In this case, as we have reached $s_i$ by a program transition, even with fairness assumption, $(s_i, s_{i+1})$ can occur in $p []_2 \delta_e$, Therefore, if $\langle s_0, s_1, \ldots, s_i \rangle$ is a prefix  of $p []_2 \delta_e$, then $\langle s_0, s_1, \ldots, s_i, s_{i+1} \rangle$ is a prefix of $p []_2 \delta_e$. \\ \\ 
\textbf{Case 2.2}$(s_{i-1}, s_i) \in \delta_e$ \\
In this case, there should not be exist $s'$ such that $(s_i, s') \in \delta'_p$(otherwise, because of fairness $(s_i, s_{i+1})$ cannot be in any prefix of $p' []_2 \delta_e$). Then, according to condition 3, there should not exist state $s''$ such that $(s_i, s'') \in \delta_p$. If not, assumption 3 of the theorem is violated. As there is no such $s''$, even with fairness assumption $(s_i, s_{i+1})$ can occur in $p []_2 \delta_e$. Therefore, if $\langle s_0, s_1, \ldots, s_i \rangle$ is a prefix  of $p []_2 \delta_e$, then $\langle s_0, s_1, \ldots, s_i, s_{i+1} \rangle$ is a prefix of $p []_2 \delta_e$. \\ \\ 
\textbf{Case 3} $s_i$ is deadlock in $\delta'_p \cup \delta_e$ \\
From condition 5, $s_i$ is deadlock in $\delta_p \cup \delta_e$ as well, so $\langle s_0, s_1, \ldots, s_i \rangle$ is a computation of $p []_2 \delta_e$.

As every prefix of $p'[]_2\delta_e$ that start from a state in $S'$ is a prefix of $p[]_2\delta_e$ which starts in $S$, \textit{C1} is satisfied.

\end{proof}

\end{lemma}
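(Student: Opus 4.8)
The plan is to prove the stated equivalence in both directions, working throughout with the convention (implicit in the paper) that a computation is a maximal execution sequence --- infinite, or finite and ending in a state with no outgoing transition in the relevant transition set --- and reducing the statement to a claim about prefixes.

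For the forward direction ($\Rightarrow$) I would argue by contraposition: assuming that one of the five conditions fails, I exhibit a computation of $p'[]_2\delta_e$ starting in $S'$ that is \emph{not} a computation of $p[]_2\delta_e$ starting in $S$. Failure of condition~2 gives a start state outside $S$ outright. Failure of condition~3 inserts a program transition between two states of $S'\subseteq S$ that $p$ does not have, an immediately new behaviour. Failure of condition~1 lets $p'$ step from $S'$ into a state $p$ could not reach that way. Failure of condition~4 yields a witness $s_1$ reached from $S'$ by an environment step at which $p'$ has no program move, so $p'$ may fire a \emph{second} environment step $(s_1,s_2)$; but $p[]_2\delta_e$ has the program move $(s_1,s_3)\in\delta_p$ available there, and the $k=2$ fairness clause forbids that consecutive environment step. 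Failure of condition~5 creates a new deadlock in $p'$ that terminates a $p'$-computation where the matching $p$-computation must continue. In each case the offending prefix is extended to a maximal computation, establishing the contrapositive.

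For the reverse direction ($\Leftarrow$) I would assume all five conditions and prove, by induction on prefix length, that every prefix $\langle s_0,\dots,s_i\rangle$ of $p'[]_2\delta_e$ with $s_0\in S'$ is also a prefix of $p[]_2\delta_e$ with $s_0\in S$; passing to the limit over prefixes then gives condition $C1$. The base case is condition~2. For the step, I first note that condition~1 (closure of $\delta'_p\cup\delta_e$ in $S'$) forces every $s_j$ in the prefix to lie in $S'$. Then split on the step $(s_i,s_{i+1})$: if it is a program step of $p'$, closure keeps $s_{i+1}\in S'$ and condition~3 makes it a program step of $p$ (program steps are never blocked by the $k=2$ fairness clause); if it is an environment step preceded by a program step, $p[]_2\delta_e$ permits it directly; if it is an environment step preceded by an environment step, then because $p'[]_2\delta_e$ took that second environment step, $s_i$ has no $\delta'_p$-successor, and since $s_{i-1}\in S'$, condition~4 then rules out any $\delta_p$-successor at $s_i$, so the same environment step is legal in $p[]_2\delta_e$; finally, if $s_i$ is a deadlock of $\delta'_p\cup\delta_e$, condition~5 makes it a deadlock of $\delta_p\cup\delta_e$ and the prefix is already a terminating $p[]_2\delta_e$ computation.

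The main obstacle is precisely the consecutive-environment case: it is the only place where one must chain together three of the hypotheses at once --- closure (condition~1) to conclude $s_{i-1}\in S'$, the structure of the $p'$-execution to conclude $s_i$ has no $p'$ program move, and only then condition~4 to transfer the ``no program move'' fact to $p$. Keeping straight which condition is responsible for which obligation --- in particular not conflating the closure condition with the program-blocking condition --- is the crux; everything else is a routine induction and an easy exhaustive case split.
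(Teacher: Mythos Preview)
Your proposal is correct and follows essentially the same route as the paper: contraposition for the forward direction and induction on prefix length with the same case split (program step / environment-after-program / environment-after-environment / deadlock) for the reverse. Your treatment is in fact slightly more careful than the paper's --- you spell out the five witness constructions in the $(\Rightarrow)$ direction where the paper simply asserts they are ``easily'' built, and in the consecutive-environment case you correctly invoke condition~4 (the paper's text says ``condition~3'' there, which is evidently a slip).
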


\begin{theorem}
\label{fail:soundness}
Algorithm~\ref{ALG:FAILSAFE} is sound.
\end{theorem}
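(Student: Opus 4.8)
The plan is to establish that, whenever Algorithm~\ref{ALG:FAILSAFE} returns a pair $(\delta'_p,S')$ rather than \emph{Not-possible}, this pair satisfies the three requirements C1, C2 and C3 of the addition problem of Section~\ref{sec:probdef} for $k=2$. Requirement C3 is dispatched first and is immediate: the final assignment to $\delta'_p$ on Line~\ref{fail:line:finaldelta} subtracts $mt$, and $mt\supseteq\delta_b\cup\delta_r$ by its definition, so $\delta'_p\cap\delta_r=\phi$; as a byproduct we also obtain $\delta'_p\cap\delta_b=\phi$, which is reused below.

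For C1 I would invoke Lemma~\ref{th:C1} and verify its five conditions for the returned pair. Conditions~2 and~3 ($S'\subseteq S$ and $\delta'_p|S'\subseteq\delta_p|S$) follow because $S'$ starts as a subset of $S-ms_2$ and only ever shrinks, while $\delta'_p|S'$ starts as $\delta_p|S-mt$ and the only edges added to $\delta'_p$ afterwards (on Line~\ref{fail:line:finaldelta}) originate outside $S'$. Conditions~1 and~5 (closure of $\delta'_p\cup\delta_e$ in $S'$, and no new deadlocks on states reachable from $S'$) are guaranteed by the interplay of $EnsureClosure$ and $RemoveDeadlock$ at loop exit: $EnsureClosure$ deletes every $\delta'_p$-edge that leaves $S'$, $RemoveDeadlock$ deletes every state of $S'$ that lacks a $\delta'_p$-successor inside $S'$ or that has an environment edge leaving $S'$, and for states outside $S'$ the edges reinstated on Line~\ref{fail:line:finaldelta} are exactly those not in $mt$, which together with the precondition $\delta_p\cap\delta_r=\phi$ lets one argue that a state possessing a $\delta_p\cup\delta_e$-successor retains one in $\delta'_p\cup\delta_e$. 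Condition~4 is precisely what the $ms_3/ms_4$ loop on Lines~\ref{fail:removems4Begin}--\ref{fail:removems4End} enforces --- every state from which the environment can reach an $ms_3$-state (a state at which the original program could act but the revised one cannot) is removed from $S'$ --- and I would cite Theorem~\ref{fai:noms4} for it.

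For C2 I would treat the two clauses of failsafe $f$-tolerance separately. First, $p'[]_2\delta_e$ refines $spec$ from $S'$: $S'$ is nonempty (otherwise the algorithm exits on Line~\ref{fail:isEmpty}) and closed in $\delta'_p\cup\delta_e$ (by Condition~1 above), hence an invariant; and by C1 every computation of $p'[]_2\delta_e$ from $S'$ is a computation of $p[]_2\delta_e$ from $S$, which refines $spec$ by the problem's precondition. Second, every computation prefix of $p'[]_2\delta_e[]f$ starting in $S'$ refines $Sf$: here I would prove, by induction on the prefix, the invariant that after a program or fault step the current state is outside $ms_2$, and after an environment step it is outside $ms_1$. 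The base state lies in $S'\subseteq\neg ms_2$; a program step uses $\delta'_p$, which avoids $mt$ by Corollary~\ref{fai:nomt} and so cannot enter $ms_2$; a fault step into $ms_2$ would place its source in $ms_1\subseteq ms_2$ by the first disjunct of Line~\ref{fai:addtoMs1}, a contradiction; and an environment step is handled via the fixpoint closure of $ms_1,ms_2$ together with $k\!=\!2$ fairness (after a program/fault step any environment action may fire, but a state with an outgoing bad environment edge or an environment edge into $ms_1$ is already in $ms_2$ by the initialisation of $ms_2$ and Line~\ref{fai:addtoMs2}; after an environment step a program action must fire whenever one is available, while a state at which none is available but that has a bad environment edge or an environment edge into $ms_1$ is already in $ms_1$ by the second disjunct of Line~\ref{fai:addtoMs1}). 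It then follows that no reachable state enables a bad environment transition and, since $\delta'_p\cap\delta_b=\phi$, no bad program transition is ever taken, so the prefix refines $Sf$; this is the content of Corollary~\ref{fai:noms2}.

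I expect the environment step of that induction to be the main obstacle, because it is the single place where the fairness parameter, the distinction between ``$ms_1$-safe'' and ``$ms_2$-safe'' reachable states, and the case split on whether the revised program has an enabled action all interact; the rest is bookkeeping about which transitions and states survive the various deletions. Assembling C1, C2 and C3 then yields Theorem~\ref{fail:soundness}.
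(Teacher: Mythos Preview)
Your overall structure is right and your forward-invariant argument for the safety clause of C2 is a legitimate alternative to the paper's proof. The paper argues \emph{backward}: it assumes a shortest prefix ending in a bad transition $(s_{n-1},s_n)\in\delta_b$ and shows by downward induction that each $s_i$ lies in $ms_2$, eventually reaching $s_0\in ms_2$, which contradicts $s_0\in S'$. Your forward invariant (after a program or fault step the state is outside $ms_2$; after an environment step it is outside $ms_1$) reaches the same conclusion and is arguably cleaner, since it makes the role of the $ms_1/ms_2$ distinction explicit from the outset rather than discovering it inside the env--env subcase. Both arguments hinge on the same technical point: at a state outside $S'$, Line~\ref{fail:line:finaldelta} has put every non-$mt$ transition into $\delta'_p$, so the absence of an enabled $\delta'_p$-action there forces every outgoing transition to lie in $mt$.

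However, your citations of Theorem~\ref{fai:noms4}, Corollary~\ref{fai:nomt}, and Corollary~\ref{fai:noms2} are all misplaced. Those are \emph{completeness} results: each constrains an arbitrary hypothetical solution $(p'',S'')$, not the algorithm's actual output $(p',S')$. You cannot invoke them for soundness --- that would be circular, since their hypotheses presuppose the addition problem is already solved. The facts you need ($\delta'_p\cap mt=\phi$, $S'\cap ms_2=\phi$, $S'$ contains no $ms_4$-state) are all true \emph{by construction} and should be read off directly from Line~\ref{fail:line:finaldelta}, the assignment $S':=RemoveDeadlock(S-ms_2,\ldots)$, and the termination condition of the loop on Lines~\ref{fail:removems4Begin}--\ref{fail:removems4End}, respectively. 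The paper's own proof of Condition~4 of Lemma~\ref{th:C1} is just the two-line contrapositive you sketched before the citation; no external lemma is needed. One further small omission: your concluding sentence for C2 names only bad environment and bad program transitions, but bad fault transitions must also be excluded; your invariant does handle this (a state with an outgoing $f\cap\delta_b$ edge lies in $ms_1$ by Line~\ref{fai:ini1}), so you should say so explicitly.
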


\begin{proof}
To show the soundness of our algorithm, we need to show that the three conditions of the addition problem are satisfied.

\textbf{\textit{C1}}:
Consider a computation $c$ of $p'[]_2 \delta_e$ that starts from a state in $S'$. By construction, $c$ starts from a state in $S$, and $\delta'_p|S'$ is  a subset of $\delta_p|S'$. In addition, $\delta'_p \cup \delta_e$ is closed in $S'$. Therefore, the first three requirements of Lemma~\ref{th:C1} are satisfied. Now, we show the forth and fifth requirements of Lemma~\ref{th:C1} are satisfied, as well.

Regarding fourth requirement of Lemma~\ref{th:C1}, suppose that there exists $s_1$ in $S'$ such that  $\big ( \exists s_0, s_2, s_3 : s_0 \in S' \wedge (s_0,s_1), (s_1,s_2) \in \delta_e \wedge (s_1,s_3) \in \delta_p  \big )$ but $\nexists s_4 :: (s_1,s_4) \in \delta'_p$. From $\exists s_2, s_3 :: (s_1,s_2) \in \delta_e \wedge (s_1,s_3) \in \delta_p$ and $\nexists s_4 :: (s_1,s_4) \in \delta'_p$  we can conclude that $s_1$ is in $ms_3$. Then, from $(s_0,s_1) \in \delta_e$ , we know $s_0$ is in $ms_4$, which is contradiction as $s_0$ is in $S$.

Finally, the fifth requirement is satisfied based on our approach for dealing with deadlock states in Assumption~\ref{assumption:infinite}. 
Hence, \textit{C1} holds.

\textit{\textbf{C2}}:
From \textit{C1}, and the assumption that $p[]_2\delta_e$ refines $spec$ from $S$, $p'[]_2\delta_e$  refines $spec$ from $S'$.

Let $spec$ be $\langle Sf, Lv \rangle$. Consider prefix $c$ of $p'[]_2 \delta_e []f$  such that $c$ starts from a state in $S'$. If $c$ does not refine $Sf$ then there exists a prefix of $c$, say $\langle s_0, s_1,\ldots, s_n \rangle$, such that it has a transition in $\delta_b $. Wlog, let $\langle s_0, s_1, \cdots, s_n\rangle$ be the smallest such prefix. It follows that $(s_{n-1}, s_n) \in \delta_b$, hence, $(s_{n-1}, s_n) \in mt$. By construction, $p'$ does not contain any transition in $mt$. Thus, $(s_{n-1}, s_n)$ is a transition of $f$ or $\delta_e$. If it is in $f$ then $s_{n-1} \in ms_1$ (i.e., $s_{n-1} \in ms_2$). If it is in $\delta_e$ then $s_{n-1} \in ms_2$. Therefore, in both cases, $s_{n-1} \in ms_2$, and $(s_{n-2}, s_{n-1}) \in mt$. Again, by construction we know that $\delta'_p$ does not contain any transition in $mt$, so $(s_{n-2}, s_{n-1})$ is either in $f$ or $\delta_e$. If it is in $f$ then $s_{n-2} \in ms_1$ (i.e., $s_{n-2} \in ms_2$). If it is in $\delta_e$ two cases are possible:

1) $(s_{n-1}, s_n) \in f$. In this case, as stated before, $s_{n-1} \in ms_1$, so $s_{n-2} \in ms_2$. 

2) $(s_{n-1}, s_n) \in \delta_e$. In this case, all transitions starting from $s_{n-1}$ should be in $mt$. If this is not the case then this implies that there exists a state $s$ such that $(s_{n-1}, s)$ is not in $mt$ and we would have added it to $\delta'_p$ by Line~\ref{fail:line:finaldelta}. 

Since all transitions from $s_{n-1}$ are in $mt$, $s_{n-1}$ is in $ms_1$ (by Line \ref{fai:addtoMs1}). Hence, $s_{n-2}$ is in $ms_2$.

Continuing this argument further leads to the conclusion that $s_0 \in ms_2$. This is a contradiction. Thus, any prefix of $p'[]_2 \delta_e []f$ refines $Sf$. Thus, \textit{C2} holds.

\textit{\textbf{C3}}:
Any $(s_0, s_1) \in \delta_r$, is in $mt$. By construction, $\delta'_p$ does not have any transition in $mt$. Hence, \textit{C3} holds.
\end{proof}

Now, we focus on showing that Algorithm \ref{ALG:FAILSAFE} is complete, i.e., if there is a solution that satisfies the problem statement for adding failsafe fault-tolerance, Algorithm \ref{ALG:FAILSAFE} finds one. The proof of completeness is based on the analysis of states that were removed from $S$. 

\begin{observation}
\label{failsafe:obs1}
For every state $s_0$ in $ms_2$ one of three cases below is true:
\begin{enumerate}
   \item $s_0 \in ms_1$
   \item $\exists s_1:: (s_0, s_1) \in \delta_e \wedge s_1 \in ms_1$
   \item  $\exists s_1:: (s_0, s_1) \in \delta_e \cap \delta_b $
\end{enumerate}

\end{observation}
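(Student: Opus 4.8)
The plan is to prove Observation~\ref{failsafe:obs1} by tracking exactly how $ms_2$ is built up during Algorithm~\ref{ALG:FAILSAFE} and by exploiting the fact that $ms_1$ never shrinks. The set $ms_2$ is assigned in precisely two places: once at initialization (Line~\ref{fai:ini2}) and once per iteration of the repeat-loop on Lines~\ref{fail:firstLoopBeging}--\ref{fail:firstLoopEnd} (Line~\ref{fai:addtoMs2}). I would first record the auxiliary fact that $ms_1$ is monotone non-decreasing over the whole run of the algorithm: the only assignments to $ms_1$ are Line~\ref{fai:ini1} and Line~\ref{fai:addtoMs1}, and the latter is of the form $ms_1 := ms_1 \cup (\dots)$, so it only adds elements. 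Consequently, if a state belongs to $ms_1$ at some stage, it belongs to the final $ms_1$ as well. Then I would prove, by induction on the number of completed iterations of the loop, that after each assignment to $ms_2$ every $s_0 \in ms_2$ satisfies one of the three disjuncts, where the witnesses are read with respect to the value of $ms_1$ at that moment; by the monotonicity fact these witnesses persist, so the statement holds for the final $ms_2$.

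For the base case, after Lines~\ref{fai:ini1}--\ref{fai:ini2} we have $ms_2 = ms_1 \cup \{\, s_0 \mid \exists s_1 :: (s_0, s_1) \in \delta_e \cap \delta_b \,\}$; a state in the first term satisfies case~(1) and a state in the second satisfies case~(3). For the inductive step, assume the claim holds of $ms_2$ at the start of an iteration. Executing Line~\ref{fai:addtoMs1} only enlarges $ms_1$, so all previously valid witnesses remain valid. Line~\ref{fai:addtoMs2} then replaces $ms_2$ by $ms_2 \cup ms_1 \cup \{\, s_0 \mid \exists s_1 : s_1 \in ms_1 : (s_0, s_1) \in \delta_e \,\}$: states retained from the old $ms_2$ are covered by the induction hypothesis (with witnesses persisting by monotonicity), states contributed by $ms_1$ satisfy case~(1), and states contributed by the third set satisfy case~(2). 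Since the loop terminates on the iteration that adds nothing new, the value of $ms_2$ used in the remainder of the algorithm (in particular inside the soundness argument, Theorem~\ref{fail:soundness}) is one of these iterates, and the observation follows.

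\textbf{Main obstacle.} The argument is essentially syntactic in the updates of the algorithm, so there is no deep obstacle; the only point requiring care is the bookkeeping around monotonicity. A state $s_0$ may have been placed into $ms_2$ in an early iteration because of an environment transition into the then-current $ms_1$, and when the observation is invoked later one needs the witness $s_1$ to still lie in $ms_1$. This is exactly what the preliminary monotonicity remark supplies, so I would state and use that remark explicitly before the induction. Beyond that, no case analysis other than the three disjuncts in the statement is needed.
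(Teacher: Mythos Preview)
Your proposal is correct. The paper states this as an \emph{Observation} without proof, and your argument---tracking the two assignment sites for $ms_2$ (Lines~\ref{fai:ini2} and~\ref{fai:addtoMs2}), noting that $ms_1$ only grows, and inducting on loop iterations---is exactly the intended syntactic justification that the paper leaves implicit.
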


\begin{theorem}
\label{fai:reachms1}
Let $p''$ be any program that solves the problem of adding failsafe fault-tolerance. For every prefix $\alpha = \langle \ldots, s_{j-1}, s_j\rangle$ of $p''[]_2\delta_e$ where $s_j \in ms_2$ and $(s_{j-1}, s_j) \in f \cup \delta''_p$, there exists a suffix $\beta =\langle s_{j+1}, s_{j+2}, \ldots\rangle$ such that $\alpha\beta$ is a computation of $p''[]_2\delta_e$ and $ \exists i : i \geq j : (s_i \in ms_1) \vee ((s_i, s_{i+1}) \in \delta_b)$.
\end{theorem}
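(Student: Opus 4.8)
The plan is to prove Theorem~\ref{fai:reachms1} by a direct case analysis on the state $s_j$, driven by Observation~\ref{failsafe:obs1}, using the hypothesis $(s_{j-1},s_j)\in f\cup\delta''_p$ to guarantee that an environment transition may legally be taken out of $s_j$. No induction is needed: Observation~\ref{failsafe:obs1} already exposes exactly the structure of membership in $ms_2$ that we need.

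Concretely, since $s_j\in ms_2$, Observation~\ref{failsafe:obs1} gives three cases. (i) If $s_j\in ms_1$, then I would simply take $\beta$ to be any extension of $\alpha$ to a full $p''[]_2\delta_e$ computation; the conclusion holds with $i=j$. (ii) If there is $s_1$ with $(s_j,s_1)\in\delta_e$ and $s_1\in ms_1$, I would set $s_{j+1}:=s_1$, append it to $\alpha$, and then extend to a full computation $\alpha\beta$; the conclusion holds with $i=j+1$ since $s_{j+1}\in ms_1$. (iii) If there is $s_1$ with $(s_j,s_1)\in\delta_e\cap\delta_b$, I would again set $s_{j+1}:=s_1$ and extend; the conclusion holds with $i=j$ since $(s_j,s_{j+1})\in\delta_b$.

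The step that needs care — and, I expect, the only (minor) obstacle — is justifying in cases (ii) and (iii) that $\langle\ldots,s_{j-1},s_j,s_{j+1}\rangle$ is a legal prefix of $p''[]_2\delta_e$ even when a program transition happens to be available at $s_j$. This is precisely where the hypothesis $(s_{j-1},s_j)\in f\cup\delta''_p$ is essential: because the transition reaching $s_j$ is not an environment transition, the $k=2$ fairness clause in the definition of a $p''[]_2\delta_e$ (resp.\ $p''[]_2\delta_e[]f$) computation places no restriction on the transition taken out of $s_j$, so the environment move $(s_j,s_1)\in\delta_e$ is permitted. Had $s_j$ instead been reached by an environment transition, fairness could force a program transition out of $s_j$ and the construction would break — which is exactly why the theorem is stated with this restriction. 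I believe the assumption that $p''$ solves the addition problem is used only to ensure that every legal prefix extends to a full computation (together with Assumption~\ref{assumption:infinite} for deadlock-freedom where relevant); no deeper property of $p''$ is invoked.
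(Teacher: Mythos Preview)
Your proposal is correct and follows essentially the same approach as the paper: both invoke Observation~\ref{failsafe:obs1} to split into the three cases for $s_j\in ms_2$, handle $s_j\in ms_1$ trivially, and in the remaining two cases use the hypothesis $(s_{j-1},s_j)\in f\cup\delta''_p$ to argue that the $k=2$ fairness constraint does not block the relevant environment transition out of $s_j$. Your write-up is in fact more explicit than the paper's about why the fairness clause permits the environment step and about what role the assumption that $p''$ solves the problem plays.
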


\begin{proof}
According to Observation~\ref{failsafe:obs1}, $s_j$ is either in $ms_1$, can reach a state in $ms_1$ by an environment action, or has an environment action in $\delta_b$. If $s_j \in ms_1$, theorem is proved. If $s_j$ is not in $ms_1$, there exists an environment action $e$ which is either in $\delta_b$, or reaches a state in $ms_1$. As we have reach $s_1$ by a transition in $\delta''_p \cup f$, even with fairness assumption, $e$ can be executed. Thus, either a transition in$\delta_b$ occur or a state in $ms_1$ is reached. 
\end{proof}

\begin{theorem}
\label{fai:ms1}
Let $p''$ be any program that solves the problem of adding failsafe fault-tolerance. For every prefix $\alpha = \langle \ldots, s_{j-1}, s_j\rangle$ of $p''[]_2\delta_e$ where $s_j \in ms_1$, there exists a suffix $\beta =\langle s_{j+1}, s_{j+2}, \ldots\rangle$ such that $\alpha\beta$ is a computation of $p''[]_2\delta_e$ and $\exists i : i \geq j: (s_i, s_{i+1}) \in \delta_b$.
\end{theorem}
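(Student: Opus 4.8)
The plan is to prove this by strong induction on the iteration number $r$ at which $s_j$ is first added to $ms_1$, where $r=0$ denotes the states placed in $ms_1$ by the initialization on Line \ref{fai:ini1}. Before the induction I would record one structural observation: when a state enters $ms_1$ at iteration $r$ via Line \ref{fai:addtoMs1}, the sets $ms_2$ and $mt$ that appear in its guard are the values computed at the end of iteration $r-1$, so every $ms_1$-membership they witness is for a state of rank at most $r-1$. Unfolding Line \ref{fai:addtoMs2} together with the initialization on Line \ref{fai:ini2}, a state $s\in ms_2$ at the end of iteration $r-1$ falls into one of: (a) $s\in ms_1$ with rank $\le r-1$; (b) $\exists s'':(s,s'')\in\delta_e$ with $s''\in ms_1$ of rank $\le r-1$; or (c) $\exists s'':(s,s'')\in\delta_e\cap\delta_b$. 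Likewise $mt$ is the set of transitions lying in $\delta_b\cup\delta_r$ or leading into such an $ms_2$. This makes the strong induction well-founded, since each recursive appeal below will be to a state of strictly smaller rank.

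For the base case ($r=0$) there is a transition $(s_j,s)\in f\cap\delta_b$; appending it to $\alpha$ and then completing arbitrarily to a maximal $p''[]_2\delta_e$ computation yields the desired $\beta$, with the bad transition at position $j$.

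For the inductive step ($r\ge 1$), $s_j$ was added by one of the two disjuncts of Line \ref{fai:addtoMs1}. If it was added because some $s_1\in ms_2$ has $(s_j,s_1)\in f$, I append that fault transition and then split on cases (a)--(c) for $s_1$: in (a) I invoke the induction hypothesis at $s_1$; in (b) I observe that the transition just appended is a fault, so for $k=2$ the fairness assumption does not force a program step next, hence I may append the environment transition $(s_1,s'')$ and then invoke the hypothesis at $s''$; in (c) I simply append $(s_1,s'')\in\delta_b$. If instead $s_j$ was added by the second disjunct --- there is an environment transition $(s_j,s_1)$ with $s_1\in ms_1$, or $(s_j,s_1)\in\delta_e\cap\delta_b$, and every transition out of $s_j$ lies in $mt$ --- I case-split on whether $p''$ has an outgoing transition at $s_j$. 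If $(s_j,s)\in\delta''_p$ for some $s$, then $(s_j,s)\in mt$, and since $p''$ solves the failsafe problem we have $\delta''_p\cap\delta_r=\phi$ (Constraint $C3$), so either $(s_j,s)\in\delta_b$ (append it, done) or $s\in ms_2$ (append the program transition $(s_j,s)$ and recurse exactly as in cases (a)--(c), using that a program transition never triggers the $k=2$ fairness restriction on the following step). If $p''$ has only environment transitions available at $s_j$, then no fairness constraint can block $(s_j,s_1)$ regardless of how $s_j$ was reached, so I append it: if $(s_j,s_1)\in\delta_b$ we are done, otherwise $s_1\in ms_1$ has rank $\le r-1$ and I invoke the hypothesis.

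I expect the main obstacle to be the fairness bookkeeping for $k=2$: each prefix extension must be certified admissible in a $p''[]_2\delta_e$ computation, which is exactly why the argument must split on whether the current state has an outgoing program transition and on the type of the transition appended immediately before (program and fault transitions are always admissible and never trigger the restriction, whereas an environment transition can be blocked right after another environment transition). The second, more routine point is checking that the rank strictly decreases along every recursive call, which follows from the structural observation above that the $ms_2$ and $mt$ guarding the addition of $s_j$ at iteration $r$ reference only $ms_1$-states of smaller rank.
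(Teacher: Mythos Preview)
Your proposal is correct and follows essentially the same inductive approach as the paper's proof: induction on the stage at which $s_j$ enters $ms_1$, with the same base case and the same case analysis on the two disjuncts of Line~\ref{fai:addtoMs1}. The only structural difference is that the paper packages your cases (a)--(c) for $ms_2$ states as a separate lemma (Theorem~\ref{fai:reachms1}) and cites it, whereas you inline that argument; your explicit rank bookkeeping and the appeal to $\delta''_p\cap\delta_r=\phi$ to rule out the $\delta_r$ branch of $mt$ are in fact more careful than the paper's somewhat informal treatment of the same points.
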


\begin{proof}
We prove this inductively based on when states are added to $ms_1$

\noindent \textbf{Base case} $ms_1 = \{s_0 | (s_0, s_1) \in  f \cap \delta_b\}$ 

Since fault transitions can execute in any state, the theorem is satisfied by construction. 

\noindent \textbf{Induction step} A state $s_0$ is added into $ms_1$ in three cases:\\ \\
\textbf{Case 1} $\exists s_1: s_1 \in  ms_2: (s_0 , s_1) \in f$ \\ 
In this case according to Theorem~\ref{fai:reachms1}, a transition in $\delta_b$ may occur, or a state in $ms_1$ can be reached. Hence, according to induction hypothesis a transition in $\delta_b$ can occur in both cases. \\ \\
\textbf{Case 2} $\exists s_1:: (s_1 \in ms_1  \wedge (s_0, s_1) \in \delta_e) \wedge (\forall s_2:: (s_0, s_2) \in mt)$ \\
In this case, if according to fairness $(s_0, s_1)$ can occur, state $s_1 \in ms_1$ can be reached by $(s_0, s_1)$, and according to induction hypothesis safety may be violated. However, if $(s_0, s_1)$ cannot occur, some other transition in $\delta''_p \cup f$ should occur, but we know such transition should be in $mt$ and reaches a state in $ms_2$. Thus, according to Theorem~\ref{fai:reachms1} either a transition in $\delta_b$ can occur, or a state in $ms_1$ can be reached. Hence, according to induction hypothesis a transition in $\delta_b$ can occur in both cases.\\ \\ 
\textbf{Case 3} $((s_0, s_1) \in \delta_e \cap \delta_b) \wedge (\forall s_2:: (s_0, s_2) \in mt)$ \\
In this case, if according to fairness $(s_0, s_1)$ can occur, by its occurrence safety is violated. However, if $(s_0, s_1)$ cannot occur, some other transition in $\delta''_p \cup f$ should occur, but we know such transition should be in $mt$ and reaches a state in $ms_2$. Thus, according to Theorem~\ref{fai:reachms1} either a transition in $\delta_b$ can occur, or a state in $ms_1$ can be reached. Hence, according to induction hypothesis a transition in $\delta_b$ can occur in both cases.
\end{proof}
According to Theorem~\ref{fai:reachms1} and Therorem~\ref{fai:ms1} we have the following two corollaries.
\begin{corollary}
\label{fai:noms2}

Let $p''$ be any program that solves the problem of adding failsafe fault-tolerance and let $S''$ be its invariant. Then, $S'' \cap ms_2 = \phi$.
\end{corollary}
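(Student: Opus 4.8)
The plan is a proof by contradiction that repackages the work already done in Theorems~\ref{fai:reachms1} and~\ref{fai:ms1}. Suppose, for contradiction, that some state $s_0$ lies in $ms_2 \cap S''$. Since $(p'', S'')$ solves the addition problem, condition \textbf{\textit{C2}} gives that $p''[]_2\delta_e$ is failsafe $f$-tolerant to $spec$ from $S''$; in particular, every prefix of $p''[]_2\delta_e[]f$ that starts in a state of $S''$ refines $Sf$, i.e., uses no transition of $\delta_b$. The goal is therefore to exhibit a prefix of $p''[]_2\delta_e[]f$ that starts at $s_0 \in S''$ and contains a $\delta_b$-transition, which is the desired contradiction.

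First I would apply Observation~\ref{failsafe:obs1} to $s_0$ and split on its three cases. If $s_0 \in ms_1$, then Theorem~\ref{fai:ms1} applied to the one-state prefix $\langle s_0 \rangle$ produces a suffix $\beta$ with $\langle s_0 \rangle \beta$ a computation of $p''[]_2\delta_e$ containing a transition of $\delta_b$; since $\delta_e$ is unchangeable and $f$ is the given fault set, this is also a computation of $p''[]_2\delta_e[]f$ beginning in $S''$, contradicting \textbf{\textit{C2}}. If instead there is $s_1$ with $(s_0, s_1) \in \delta_e$ and $s_1 \in ms_1$, I observe that the fairness clause of Definition~\ref{def:progenvcomp} constrains only the steps that \emph{follow} an environment transition and hence never forbids the opening step of a computation; thus $\langle s_0, s_1 \rangle$ is a legitimate prefix of $p''[]_2\delta_e$, and since Theorem~\ref{fai:ms1} places no requirement on how its $ms_1$-state is reached, extending $\langle s_0, s_1 \rangle$ by that theorem again yields a $\delta_b$-transition, a contradiction. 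Finally, if there is $s_1$ with $(s_0, s_1) \in \delta_e \cap \delta_b$, then $\langle s_0, s_1 \rangle$ is itself a prefix of $p''[]_2\delta_e$ (again the opening environment step is always allowed) whose only transition lies in $\delta_b$, again contradicting failsafe $f$-tolerance from $S''$.

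Since each of the three cases is impossible, no such $s_0$ exists, and therefore $S'' \cap ms_2 = \phi$. The only delicate point is the treatment of fairness at the very first transition in cases~2 and~3: I must be sure that the environment move out of $s_0$ may fire even though $p''$ might have an enabled program transition at $s_0$, and this is precisely guaranteed by Definition~\ref{def:progenvcomp}, whose fairness obligation on program actions is activated only after an environment transition has occurred. All the genuinely inductive reasoning --- propagating a future safety violation backward through $ms_1$ and $ms_2$ --- has already been carried out in Theorems~\ref{fai:reachms1} and~\ref{fai:ms1}, so no further heavy lifting is required here.
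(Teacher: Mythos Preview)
Your proof is correct and follows the same route the paper intends: the paper states the corollary as an immediate consequence of Theorems~\ref{fai:reachms1} and~\ref{fai:ms1}, and your argument simply unfolds that implication, using Observation~\ref{failsafe:obs1} directly (which is exactly what the proof of Theorem~\ref{fai:reachms1} does) because Theorem~\ref{fai:reachms1}'s hypothesis that $s_j$ be reached via $f \cup \delta''_p$ does not literally apply to a starting state. Your care about fairness at the first step is warranted and correctly resolved.
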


\begin{corollary}
\label{fai:nomt}
Let $p''$ be any program that solves the problem of adding failsafe fault-tolerance and let $S''$ be its invariant. $p''|S''$ cannot have any transition in $mt$. 

\end{corollary}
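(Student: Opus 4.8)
The goal is to show that any transition $(s_0,s_1)$ of $p''$ with $s_0,s_1\in S''$ lies outside $mt=\{(s_0,s_1):(s_0,s_1)\in\delta_b\cup\delta_r\ \vee\ s_1\in ms_2\}$, which I would do by refuting each of the three alternatives in the definition of $mt$. Throughout, recall that $p''$ is assumed to solve the addition problem, so it satisfies $C1$--$C3$ and $p''[]_2\delta_e$ is failsafe $f$-tolerant to $spec$ from $S''$; in particular $S''$ is closed under $\delta''_p\cup\delta_e$. The first alternative is immediate: if $(s_0,s_1)\in\delta_r$ then $(s_0,s_1)\in\delta''_p\cap\delta_r$, contradicting $C3$, so this case uses nothing beyond the problem statement.

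For the remaining two alternatives I would reduce to a violation of failsafe tolerance. If $(s_0,s_1)\in\delta_b$, then since $s_0\in S''$ the two-state sequence $\langle s_0,s_1\rangle$ is a computation prefix of $p''[]_2\delta_e[]f$ that begins in $S''$ and contains a transition of $\delta_b$, hence it fails to refine $Sf$, contradicting the second clause of failsafe $f$-tolerance. If instead $s_1\in ms_2$, then, since $S''$ is closed under $\delta''_p$ and $s_0\in S''$, we get $s_1\in S''\cap ms_2$, which is empty by Corollary~\ref{fai:noms2}; contradiction. As the remark preceding the corollary hints, one may alternatively bypass Corollary~\ref{fai:noms2} and argue directly: apply Theorem~\ref{fai:reachms1} to the prefix $\langle s_0,s_1\rangle$ --- admissible since $(s_0,s_1)\in\delta''_p\subseteq f\cup\delta''_p$ and $s_1\in ms_2$ --- to obtain a computation of $p''[]_2\delta_e$ along which some state lies in $ms_1$ or some transition lies in $\delta_b$; in the former case push once more through Theorem~\ref{fai:ms1} to get a computation that executes a $\delta_b$ transition. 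Either way there is a computation of $p''[]_2\delta_e$ (hence of $p''[]_2\delta_e[]f$) from $s_0\in S''$ that executes a transition of $\delta_b$, again contradicting failsafe tolerance. Having excluded all three alternatives, $(s_0,s_1)\notin mt$, which is the claim.

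The only point I expect to need genuine care is verifying that the finite sequences produced --- $\langle s_0,s_1\rangle$ and the extensions coming out of Theorems~\ref{fai:reachms1} and~\ref{fai:ms1} --- really are computation prefixes of $p''[]_2\delta_e[]f$ starting in $S''$: this depends on the no-deadlock convention of Assumption~\ref{assumption:infinite} (so that every prefix completes to a maximal sequence satisfying the fairness clause) and on the elementary observation that any $p''[]_2\delta_e$ computation is also a $p''[]_2\delta_e[]f$ computation taking no fault steps, so the failsafe-tolerance hypothesis applies to it. Beyond that bookkeeping, nothing is needed except the cited results and the constraints $C1$--$C3$.
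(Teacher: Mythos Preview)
Your proposal is correct and aligns with the paper's approach: the paper offers no separate proof, merely stating that Corollaries~\ref{fai:noms2} and~\ref{fai:nomt} follow from Theorems~\ref{fai:reachms1} and~\ref{fai:ms1}, which is exactly your alternative route for the $s_1\in ms_2$ case. Your treatment is in fact more complete than the paper's sketch, since you explicitly dispatch the $\delta_r$ case via $C3$ and the $\delta_b$ case via the failsafe clause directly---details the paper leaves implicit---and your primary route through Corollary~\ref{fai:noms2} plus closure is a clean shortcut.
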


\begin{theorem}
\label{fai:noms4}
Let $p''$ be any program that solves the problem of adding failsafe fault-tolerance, and let $S''$ be its invariant. Then $S''$ cannot include any state in set $ms_4$ in any iteration of loop on Lines~\ref{fail:removems4Begin} - \ref{fail:removems4End}. 
\end{theorem}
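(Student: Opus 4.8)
I would prove the statement directly by contradiction: fix a solution $p''$ with invariant $S''$ and fix an iteration of the loop on Lines~\ref{fail:removems4Begin}--\ref{fail:removems4End}, and assume some state $s$ lies in both the $ms_4$ computed in that iteration and in $S''$. By the definition of $ms_4$ there is a state $t\in ms_3$ with $(s,t)\in\delta_e$; since $S''$, being an invariant, is closed in $\delta''_p\cup\delta_e$, this forces $t\in S''$. By the definition of $ms_3$ there are $t_1,t_2$ with $(t,t_1)\in\delta_e$ and $(t,t_2)\in\delta_p$, while $t$ has no successor under the algorithm's current $\delta'_p$. The whole proof aims to contradict this last fact.

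\textbf{Pinning down a surviving program transition of $p''$.} I would apply Lemma~\ref{th:C1}, condition~(4), to $p''$ (reading the lemma's $S'$ and $\delta'_p$ as $S''$ and $\delta''_p$, which is legitimate since $p''$ satisfies $C1$): taking the lemma's $s_1,s_0,s_2,s_3$ to be $t,s,t_1,t_2$ respectively, all hypotheses hold, so there is $t_4$ with $(t,t_4)\in\delta''_p$; closure of $S''$ gives $t_4\in S''$, hence $(t,t_4)$ is a transition of $p''|S''$. Corollary~\ref{fai:nomt} then yields $(t,t_4)\notin mt$, and Lemma~\ref{th:C1}, condition~(3), yields $(t,t_4)\in\delta_p|S$; therefore $(t,t_4)$ belongs to $\delta_p|S-mt$, the value with which $\delta'_p$ is initialised. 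During the loop, $\delta'_p$ is modified only by $EnsureClosure(\delta'_p,S')$, which deletes a transition $(x,y)$ only when $x\in S'$ and $y\notin S'$. Consequently, if $S''$ stays inside $S'$, then $t$ and $t_4$ stay inside $S'$, so $(t,t_4)$ is never deleted and is still present in $\delta'_p$ when $ms_3$ is computed --- contradicting $t\in ms_3$. Thus the theorem follows once the loop invariant $S''\subseteq S'$ is established.

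\textbf{The loop invariant, and the main obstacle.} I would prove $S''\subseteq S'$ by induction on the iteration index of the loop on Lines~\ref{fail:removems4Begin}--\ref{fail:removems4End}. For the base case, condition~$C1$ gives $S''\subseteq S$ (Lemma~\ref{th:C1}(2)) and Corollary~\ref{fai:noms2} gives $S''\cap ms_2=\phi$, so $S''\subseteq S-ms_2$; one then argues that $RemoveDeadlock(S-ms_2,\delta_p|S-mt,\delta_e)$ never discards a state of $S''$: an outgoing environment edge of an $S''$-state stays inside $S''$ by closure (so the closure-deletion step cannot fire on it), and, using Corollary~\ref{fai:nomt}, Lemma~\ref{th:C1}(3) and Assumption~\ref{assumption:infinite}, every $S''$-state retains a program successor inside $S''$ that survives in $\delta_p|S-mt$ (so the deadlock-deletion step cannot fire on it). For the inductive step: $S''\subseteq S'$ on entry makes $EnsureClosure$ leave all transitions inside $S''$ intact, which is exactly what the argument of the previous paragraph requires in order to conclude $ms_4\cap S''=\phi$ for this iteration; together with $S''\subseteq S'$ this gives $S''\subseteq S'-ms_4$, and the same $RemoveDeadlock$ analysis shows $S''\subseteq RemoveDeadlock(S'-ms_4,\delta_p,\delta_e)$, so $S''\subseteq S'$ is re-established. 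The main obstacle is precisely this entanglement: the theorem's conclusion ($ms_4\cap S''=\phi$) and the auxiliary invariant ($S''\subseteq S'$) must be carried together through the induction, and the $RemoveDeadlock$ bookkeeping --- showing via closure and the no-deadlock assumption that no state of $S''$ is ever removed, whether as a deadlock or as a closure violation --- has to be carried out carefully.
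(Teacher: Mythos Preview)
Your proposal is correct and follows essentially the same inductive scheme as the paper: both arguments maintain $\delta''_p|S''\subseteq\delta'_p$ across iterations (which you package as the explicit loop invariant $S''\subseteq S'$), and both derive a contradiction from a hypothetical state in $ms_4\cap S''$ by passing to the associated $ms_3$-state. The only cosmetic difference is that the paper argues contrapositively --- it infers that the $ms_3$-state has no $\delta''_p$-successor and then exhibits a computation of $p''[]_2\delta_e$ that is not a computation of $p[]_2\delta_e$, violating $C1$ directly --- whereas you invoke condition~(4) of Lemma~\ref{th:C1} to produce a surviving $\delta''_p$-transition and contradict membership in $ms_3$; these are dual presentations of the same idea, and your explicit treatment of the loop invariant is if anything clearer than the paper's.
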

\begin{proof}
We show that if $s_0$ is in $S''$, and $s_0$ is in $ms_4$ in any iteration of loop on Lines~\ref{fail:removems4Begin} - \ref{fail:removems4End}, then there is sequence $\sigma = \langle s_0, s_1, \ldots \rangle$ such that $\sigma$ is $p''[]_2\delta_e$ computation, but it is not a $p[]_2\delta_e$ computation.

Suppose $s_0$ is in $ms_4$ in the first iteration of the loop. then there is a state, $s \in ms_3$ such that $(s_0, s) \in \delta_e$. Thus, $\big ( \exists s_1, s_2:: (s, s_1) \in \delta_e \ \wedge (s, s_2) \in \delta_p \ \big ) \wedge$ $
                      \big ( \nexists s_3:: (s, s_3) \in \delta'_p  \big )$. Sine $p''$ solves the problem, $S''$ is closed in $p''[]_2\delta_e$. Hence, $s$ is in $S''$, as well. 
In the first iteration  $\delta''_p|S'' \subseteq \delta'_p$, because according to Corollary~\ref{fai:noms2} and Corollary~\ref{fai:nomt} $\delta''_p$ cannot have a transition in $mt$, and $S-ms_2$ should be closed in $\delta''_p \cup \delta_e$. Therefore, $\big ( \exists s_1, s_2:: (s, s_1) \in \delta_e \ \wedge (s, s_2) \in \delta_p \ \big ) \wedge$ $
                      \big ( \nexists s_3:: (s, s_3) \in \delta''_p  \big )$.

Now, observe that $\langle s_0, s, s_1, \cdots \rangle$ is $p''[]_2\delta_e$ computation, but it is not a $p[]_2\delta_e$ computation, as because of fairness $(s,s_1)$ cannot occur when there exist $(s, s_2) \in \delta_p$. Since $p''$ solves the addition problem, it cannot have any state in $ms_4$ in the first iteration. Therefore, as $S''$ should be closed in $p''[]_2\delta_e$, all transition to states in $ms_4$ in the first iteration should be removed from $\delta''_p$. Thus, $\delta''_p|S'' \subseteq \delta'_p$ in the second iteration as well, and with same argument we can show that if $s_0 \in S''$ is in $ms_4$ in any iteration, then there is sequence $\sigma = \langle s_0, s_1, \cdots \rangle$ such that $\sigma$ is $p''[]_2\delta_e$ computation, but it is not a $p[]_2\delta_e$ computation.
\end{proof}

\begin{theorem}
\label{fail:completeness}
Algorithm~\ref{ALG:FAILSAFE} is complete.
\end{theorem}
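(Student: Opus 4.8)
The plan is to show that whenever the input admits a solution $(p'',S'')$ to the failsafe fault-tolerance problem (so $S''$ is a nonempty invariant of $p''$ and $p''$ satisfies conditions \textit{C1}--\textit{C3}), Algorithm~\ref{ALG:FAILSAFE} does not return Not-possible. The only place the algorithm reports failure is Line~\ref{fail:isEmpty}, reached exactly when $S'$ has become empty; hence it suffices to prove that the invariant
\[
S'' \subseteq S' \qquad \text{and} \qquad \delta''_p|S'' \subseteq \delta'_p
\]
is maintained from the first assignment of $S'$ through the termination of the loop on Lines~\ref{fail:removems4Begin}--\ref{fail:removems4End}. Since $S'' \neq \phi$, this gives $S' \neq \phi$ at Line~\ref{fail:line:finaldelta}, so the algorithm returns $(\delta'_p,S')$, which by soundness (Theorem~\ref{fail:soundness}) is a valid solution. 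Throughout, the driving facts are that $S''$ is closed under $\delta''_p\cup\delta_e$ and, being an invariant of a fault-tolerant program, is deadlock-free (using the convention of Assumption~\ref{assumption:infinite}), so every state of $S''$ has a successor inside $S''$.

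First I would handle the initialization. By Lemma~\ref{th:C1} applied to $p''$, we have $S'' \subseteq S$ and $\delta''_p|S'' \subseteq \delta_p|S$. By Corollary~\ref{fai:noms2}, $S'' \cap ms_2 = \phi$, so $S'' \subseteq S - ms_2$; by Corollary~\ref{fai:nomt}, $p''|S''$ has no transition in $mt$, hence $\delta''_p|S'' \subseteq (\delta_p|S) - mt = \delta'_p$. It then remains to observe that the call $RemoveDeadlock(S - ms_2,\ \delta'_p,\ \delta_e)$ cannot delete a state of $S''$: every $s_0 \in S''$ has a $\delta'_p$-transition that stays in $S''$ (by the auxiliary inclusion and deadlock-freedom of $S''$), and, by closure of $S''$ under $\delta_e$, no environment transition of $s_0$ leaves $S''$. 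A straightforward induction on the removal steps of $RemoveDeadlock$ shows that no state of a $\delta_e$-closed set whose every state has an internal program transition is ever removed, so $S'' \subseteq S'$ right after the first assignment to $S'$.

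Next I would treat one iteration of the loop on Lines~\ref{fail:removems4Begin}--\ref{fail:removems4End}, assuming $S'' \subseteq S'$ and $\delta''_p|S'' \subseteq \delta'_p$ at its start. The call $EnsureClosure(\delta'_p,S')$ only deletes transitions leaving $S'$; since every transition of $\delta''_p|S''$ stays inside $S'' \subseteq S'$, the auxiliary inclusion survives. By Theorem~\ref{fai:noms4}, $S'' \cap ms_4 = \phi$ in this iteration, so $S'' \subseteq S' - ms_4$. Finally, the same induction as above — using closure of $S''$ under $\delta_e$ and the fact that every state of $S''$ has an outgoing transition in $\delta''_p|S'' \subseteq \delta'_p \subseteq \delta_p$ that stays in $S''$ — shows that the final $RemoveDeadlock$ call deletes no state of $S''$, re-establishing both parts of the invariant. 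Since each iteration that changes $S'$ strictly shrinks it, the loop terminates, and at termination $S' \supseteq S'' \neq \phi$; the algorithm therefore reaches Line~\ref{fail:line:finaldelta} and returns $(\delta'_p,S')$.

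The main obstacle I expect is not any single step but keeping the two-part invariant synchronized across the three mutating operations in each iteration ($EnsureClosure$, the $ms_4$-removal, and $RemoveDeadlock$), and in particular justifying that every state of $S''$ always retains a genuine program transition inside $S''$ so that it survives deadlock-removal — this is precisely where Assumption~\ref{assumption:infinite} and the closure/deadlock-freedom of an invariant must be invoked carefully. A secondary point needing care is that Theorem~\ref{fai:noms4} must be applied separately in each iteration, since $ms_3$ and $ms_4$ are recomputed from the current $\delta'_p$ and $S'$.
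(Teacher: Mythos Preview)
Your plan follows the paper's route exactly: both arguments assemble the same facts about any solution $(p'',S'')$ --- disjointness from $ms_2$ (Corollary~\ref{fai:noms2}), disjointness from each iteration's $ms_4$ (Theorem~\ref{fai:noms4}), $\delta''_p|S''\subseteq \delta_p-mt$ (Corollary~\ref{fai:nomt}), $\delta_e$-closure, and deadlock-freeness via Assumption~\ref{assumption:infinite} --- and conclude that the algorithm cannot drain $S'$ to empty. Your explicit two-part loop invariant $S''\subseteq S'$ and $\delta''_p|S''\subseteq\delta'_p$ is a more careful packaging than the paper's sketch (which simply states the constraints on $S''$ and asserts that the algorithm fails only when no subset meeting them exists), but the decomposition and the lemmas invoked are identical.

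One step in your argument is not fully supported, and it is worth flagging even though the paper's terser proof glosses over the same point. You write that ``every $s_0\in S''$ has a $\delta'_p$-transition that stays in $S''$ (by the auxiliary inclusion and deadlock-freedom of $S''$)''. But Assumption~\ref{assumption:infinite} and the invariant property only give a successor in $\delta''_p\cup\delta_e$; nothing you have cited forces $s_0$ to have a \emph{program} transition. The first removal line of $RemoveDeadlock$ deletes precisely those states lacking a program transition into the current set, so a state of $S''$ with only environment successors would be removed and your inclusion $S''\subseteq S'$ would break. Your induction on $RemoveDeadlock$ therefore needs an additional hypothesis (or a side argument) guaranteeing that every state of $S''$ carries at least one $\delta''_p$-transition inside $S''$; without it the step does not go through as written.
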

\begin{proof}
Let program $p''$ and predicate $S''$ solve transformation problem. $S''$ should satisfy following requirements: 
\begin{enumerate}
\item $S'' \cap ms_2 = \phi$
\item $S''$ does not include any state in set $ms_4$ in any iteration of loop on Lines~\ref{fail:removems4Begin} - \ref{fail:removems4End}
\item $\nexists s_0 : s_0 \in S'' : (\exists s_1 : s_1 \notin S'' : (s_0, s_1) \in \delta_e)$
\end{enumerate} 

The first requirement is according to Corollary~\ref{fai:noms2}. The second requirement is according to Theorem~\ref{fai:noms4}, and the third requirement is according to the fact that $S''$ should be closed in $p''[]_2\delta_e$.  

In addtion, according to Corollary~\ref{fai:nomt}, $\delta''_p | S'' \subseteq \delta_p - mt$. Finally, according to Assumption \ref{assumption:infinite}, all ocmputations of $p[]\delta_e$ that start in $S$ are infinite. Hence, by condition $C1$, all computations of $\delta''_p []_2 \delta_e$ that start from a state in $S'$ must be infinite.
Our algorithm declares that no solution for the addition problem exists only when there is no subset of $S$ satisfying three requirements above such there all computation of $(\delta_p-mt) []_2 \delta_e$ within that subset are infinite.

\end{proof}

\begin{theorem}
\label{thm:fsP}
Algorithm \ref{ALG:FAILSAFE} is polynomial (in the state space of $p$)
\end{theorem}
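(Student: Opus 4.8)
The plan is to argue, loop by loop, that (i) the number of iterations of every \textbf{repeat} loop in Algorithm~\ref{ALG:FAILSAFE} (including the ones nested inside the subroutine $RemoveDeadlock$) is bounded by a polynomial in $|S_p|$, and (ii) the body of each loop, together with every line lying outside the loops, executes in time polynomial in $|S_p|$. Part (ii) is the routine part: the algorithm only ever manipulates subsets of $S_p$ and subsets of $S_p\times S_p$, and every assignment ($ms_1$, $ms_2$, $mt$, $ms_3$, $ms_4$, the restrictions and unions defining $\delta'_p$, the final update on Line~\ref{fail:line:finaldelta}, etc.) is a set-builder expression given by a fixed-depth quantified predicate over states and transitions; each such set is therefore computable in $O(|S_p|^c)$ for a small constant $c$. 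Likewise $EnsureClosure$ is a single polynomial-time set difference.

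For (i) the key point is monotonicity, exactly as in the proofs of Theorems~\ref{thm:alg1P} and~\ref{thm:GKP}. First I would treat the loop on Lines~\ref{fail:firstLoopBeging}--\ref{fail:firstLoopEnd}: every assignment in its body has the form $ms_i := ms_i\cup(\cdots)$ and $mt$ is a function of $ms_2$, so $ms_1$ and $ms_2$ are non-decreasing, and the loop exits the first time neither changes; hence every non-terminal iteration adds at least one state to $ms_1\cup ms_2\subseteq S_p$, giving at most $2|S_p|+1$ iterations. Next I would observe that the internal loop of $RemoveDeadlock$ only ever deletes states from its working predicate and stops when it stabilizes, so it runs at most $|S_p|+1$ iterations, each polynomial; thus each call to $RemoveDeadlock$ costs $O(|S_p|^{c+1})$.

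Then I would turn to the loop on Lines~\ref{fail:removems4Begin}--\ref{fail:removems4End}. Across iterations $S'$ is only shrunk — it is reassigned as $RemoveDeadlock(S'-ms_4,\ldots)$, and $RemoveDeadlock$ is monotone decreasing — and the loop exits when $S'$ is unchanged, so there are at most $|S_p|+1$ iterations, each consisting of a constant number of polynomial-time set computations plus one call to $RemoveDeadlock$. Composing the bounds (a linear number of iterations for each loop, a polynomial cost per iteration, and a polynomial amount of work in the straight-line code before, between, and after the loops) yields an overall running time polynomial in $|S_p|$.

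The only part needing some care — though it is not hard — is making the monotonicity precise across the nested structure: the loop on Lines~\ref{fail:removems4Begin}--\ref{fail:removems4End} restarts $RemoveDeadlock$ from scratch each iteration and also shrinks $\delta'_p$ via $EnsureClosure$ (which can enlarge $ms_3$, hence $ms_4$). One must check that none of this breaks the claim that $S'$ is non-increasing across outer iterations; it does not, since $RemoveDeadlock$ is monotone decreasing and is always applied to a subset of the previous $S'$, and $S'$ is the loop's sole termination variable. Once this is noted, the iteration counts are $O(|S_p|)$ per loop and $O(|S_p|^2)$ for the innermost body of the nested loops, and the theorem follows.
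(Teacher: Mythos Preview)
Your proposal is correct and takes essentially the same approach as the paper: the paper's own proof consists of a single sentence stating that each statement executes in polynomial time and the number of iterations is polynomial, and your argument is a (considerably more detailed) expansion of exactly this, spelling out the monotonicity of $ms_1,ms_2$ in the first loop, of the working set in $RemoveDeadlock$, and of $S'$ in the second loop. Nothing you add departs from the paper's reasoning; you simply supply the per-loop iteration bounds that the paper leaves implicit.
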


\begin{proof}
The proof follows from the fact that each statement in Algorithm \ref{ALG:FAILSAFE} is executed in polynomial time and the number of iterations are also polynomial. 
\end{proof}

Proof of Theorem~\ref{thm:fsMainTheorem} is resulted from Theorem~\ref{fail:soundness}, \ref{fail:completeness}, and \ref{thm:fsP}.

\section{Proofs for Algorithm \ref{ALG:MASKING}}
\label{sec:maskingproofs}

In this section, we prove the soundness,  completeness, and complexity result of Algorithm \ref{ALG:MASKING}.

\begin{lemma}
\label{masking:noMs1}
In all computations $ \langle s_0, s_1, \ldots \rangle$ of $p'[]_2\delta_e[]f$ where $s_0 \in S'$, there dose not exist $s_i$ such that $s_i$ is in $ms_1$ in some iteration of loop on Lines~\ref{mas:outerLoopBegin}-\ref{mas:outerLoopEnd}. 
\end{lemma}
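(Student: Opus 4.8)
The plan is to strengthen the statement and prove it by induction along the computation. Throughout I would work with the values of $R$, $R_p$, $ms_1$, $ms_2$, $mt$ and $\delta'_p$ as they stand at the end of the final iteration of the outer loop (Lines~\ref{mas:outerLoopBegin}--\ref{mas:outerLoopEnd}), writing $S^\ast$ for the returned invariant $S'$; the argument for the value of $ms_1$ in an earlier iteration is entirely analogous, noting that across outer iterations $\delta'_p$ only accumulates transitions while $S'$ only shrinks. The invariant I would prove is that for every $p'[]_2\delta_e[]f$ computation $\langle s_0, s_1, \ldots \rangle$ with $s_0 \in S^\ast$ and every $i$,
\[
(\star)\qquad s_i \notin ms_2 \ \vee\ \big( s_i \in ms_2\setminus ms_1 \ \wedge\ i>0 \ \wedge\ (s_{i-1},s_i)\in \delta_e \big).
\]
Since $ms_1 \subseteq ms_2$, either disjunct of $(\star)$ gives $s_i \notin ms_1$, which is the lemma.

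Before the induction I would record four facts. (i) After Line~\ref{mas:removeMT} the set $\delta'_p$ — hence the returned $\delta'_p$, which arises from it by further deletions only — contains no transition of $mt$, and $mt \supseteq \{(s_0,s_1): s_1 \in ms_2\}$, so no program transition of $p'$ enters $ms_2$. (ii) From the fixpoint of Line~\ref{masking:addtoMS1}: if $(s_0,s_1)\in f$ with $s_1 \in ms_2$ then $s_0 \in ms_1$; and if $s_0 \notin ms_1$ but $s_0$ has an environment transition into $ms_1$ or a bad environment transition, then $s_0$ has an outgoing transition in $\delta'_p$. (iii) From the fixpoint of the $ms_2$-update: $s_0 \notin ms_2$ implies $s_0$ has no environment transition into $ms_1$. (iv) Every $s \in ms_2\setminus ms_1$ has an environment transition into $ms_1$, and hence — by (ii), since $s\notin ms_1$ — an outgoing transition in $\delta'_p$, which moreover survives to the returned program because $s \notin S^\ast \subseteq S - ms_2$ puts $s$ outside the scope of \textit{EnsureClosure}. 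Fact (iv) is the crux: if $s$ entered $ms_2$ through the environment-expansion clause it visibly has such an environment transition; otherwise $s \in \neg R$ but $s \notin \neg(R\cup R_p) \subseteq ms_1$, so $s \in R_p\setminus R$, and the only reason the inner loop (Lines~\ref{mas:stBegin}--\ref{mas:stEnd}) did not move $s$ into $R$ is that $s$ has an environment transition into $\neg(R\cup R_p)\subseteq ms_1$.

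The induction then runs as follows. Base: $s_0 \in S^\ast \subseteq S - ms_2$ gives $s_0 \notin ms_2$. Step: assume $(\star)$ at $i$ (so in particular $s_i\notin ms_1$) and consider the transition leaving $s_i$. If it is a transition of $\delta'_p$, then by (i) $s_{i+1} \notin ms_2$. If it is a fault transition with $s_{i+1}\in ms_2$, then (ii) forces $s_i \in ms_1$, a contradiction; so $s_{i+1}\notin ms_2$. If it is an environment transition, I first argue $s_i \notin ms_2$: otherwise $(\star)$ at $i$ puts $s_i$ in $ms_2\setminus ms_1$ reached by an environment transition, and by (iv) $s_i$ has an outgoing transition in $\delta'_p$, so for $k=2$ the definition of a $p'[]_2\delta_e[]f$ computation forbids an environment transition out of $s_i$ — contradiction. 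With $s_i\notin ms_2$, fact (iii) gives $s_{i+1}\notin ms_1$, so $(\star)$ holds at $i+1$ via its second disjunct (using $(s_i,s_{i+1})\in\delta_e$). If $s_i$ is a deadlock the computation ends and there is nothing to prove; by (iv) this can only occur when $s_i\notin ms_2\setminus ms_1$, which is consistent with $(\star)$.

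The main obstacle is getting $(\star)$ right: the naive invariant ``$s_i \notin ms_2$'' is false, since an environment transition out of a $\neg ms_2$ state may legitimately enter $ms_2\setminus ms_1$. The whole difficulty is to show such an excursion is immediately reversed by a forced program/fault step, which is exactly what fact (iv) together with the $k=2$ fairness clause of Definition~\ref{def:progenvcomp} delivers — and this is why $(\star)$ must also record that a state of $ms_2$ is always entered by an environment transition. A secondary, purely bookkeeping concern is confirming that the outgoing transitions promised by Line~\ref{masking:addtoMS1} for states of $ms_2\setminus ms_1$ are still present in the returned $\delta'_p$; this holds because such states lie outside $S^\ast$ and because $mt$ is subtracted from $\delta'_p$ only after those transitions have been accounted for.
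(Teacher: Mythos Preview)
Your proof is correct and takes a genuinely different route from the paper. The paper argues by backward descent: assuming some $s_i\in ms_1$, it shows $(s_{i-1},s_i)\in mt$, hence $(s_{i-1},s_i)\in f\cup\delta_e$ (since $\delta'_p\cap mt=\emptyset$), and then case-splits on the type of this and the following transition to conclude $s_{i-1}\in ms_2$; iterating drives the argument back to $s_0\in ms_2$, contradicting $s_0\in S'$. You instead run a forward induction with the strengthened invariant $(\star)$, which tracks not merely ``not in $ms_2$'' but also the legitimate one-step excursion into $ms_2\setminus ms_1$ via an environment move. Your fact (iv) isolates the structural point that the paper leaves implicit (every state of $ms_2\setminus ms_1$ retains a live $\delta'_p$-exit in the returned program) and makes transparent why the $k{=}2$ fairness clause immediately terminates such excursions; the paper handles the same point in its two-consecutive-environment-transitions subcase and its accompanying footnote.

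Two small remarks. First, the claim that ``across outer iterations $\delta'_p$ only accumulates transitions'' is not literally true: Line~\ref{mas:removeMT} and \textit{EnsureClosure} delete transitions in every iteration. The correct reason the earlier-iteration case is subsumed by your final-iteration argument is that $S'$ is monotone nonincreasing across outer iterations, hence $R$, $R\cup R_p$ shrink and $ms_1$, $ms_2$, $mt$ are monotone nondecreasing; thus $ms_1$ in any earlier iteration is contained in the final $ms_1$, and excluding the latter excludes all the former. Second, in your justification of fact (iv) you should also account for states placed in $ms_2$ by the initialization clause $\{s_0\mid\exists s_1:(s_0,s_1)\in\delta_e\cap\delta_b\}$: such a state need not have an environment edge into $ms_1$, but it does have a bad environment transition, and your fact (ii) already covers that disjunct, so the conclusion (an outgoing $\delta'_p$-transition) still follows. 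Neither point affects the validity of your argument.
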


\begin{proof}
Consider a computation $ \langle s_0, s_1, \ldots \rangle$ of $p'[]_2\delta_e$ where $s_0 \in S'$, and there exists $s_i$ such that $s_i$ is in $ms_1$ some iteration of loop on Lines~\ref{mas:outerLoopBegin}-\ref{mas:outerLoopEnd}. It follows that $(s_{i-1} ,s_i)$ in in $mt$.
By construction, $p'$ does not contain any transition in $mt$. Thus, $(s_{i-1}, s_n)$ is a transition of $f$ or $\delta_e$. If it is in $f$ then $s_{i-1} \in ms_1$ (i.e., $s_{i-1} \in ms_2$). If it is in $\delta_e$ then $s_{i-1} \in ms_2$. Therefore, in both cases, $s_{n-1} \in ms_2$, and $(s_{n-2}, s_{n-1}) \in mt$. Again, by construction we know that $\delta'_p$ does not contain any transition in $mt$, so $(s_{n-2}, s_{n-1})$ is either in $f$ or $\delta_e$. If it is in $f$ then $s_{n-2} \in ms_1$ (i.e., $s_{n-2} \in ms_2$). If it is in $\delta_e$ two cases are possible:

1) $(s_{n-1}, s_n) \in f$. In this case, as stated before, $s_{n-1} \in ms_1$, so $s_{n-2} \in ms_2$. 

2) $(s_{n-1}, s_n) \in \delta_e$. In this case, as both $(s_{i-2},s_{i-1})$ and $(s_{i-1}, s_i)$ are in $\delta_e$, according to fairness assumption, there does not exist a transition $\delta'_p - mt$ starting from $s_{i-1}$, and it means that $s_{i-1}$ is added to  $ms_1$  by line~\ref{masking:addtoMS1}, so $s_{i-2} \in ms_2$.\footnote{Note that, as $s_{n-1} \in ms_2$ it is not in $S'$. Thus, no transition of $\delta'_p$ starting from $s_{i-1}$ is removed in $RemoveDeadlock$ or $EnsureClosure$ functions.}

Continuing this argument further leads to the conclusion that $s_0 \in ms_2$. This is a contradiction. Thus, In all computations $ \langle s_0, s_1, \ldots \rangle$ of $p'[]_2\delta_e$ where $s_0 \in S'$, there dose not exist $s_i$ such that $s_i$ is in $ms_1$ some iteration of loop on Lines~\ref{mas:outerLoopBegin}-\ref{mas:outerLoopEnd}.

\end{proof}

\begin{lemma}
\label{masking:RpByE}
In every computation $ \langle s_0, s_1, \ldots \rangle$ of $p'[]_k \delta_e[] f$ that starts from a state in $S'$, if there exists state $s_i$ in $R_p$, then $(s_{i-1}, s_i) \in \delta_e$.
\end{lemma}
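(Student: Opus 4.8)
The plan is to show that along any computation of the synthesized program that starts in $S'$, a state of $R_p$ can be entered only by an environment transition, by eliminating program steps and fault steps on purely structural grounds. Throughout, $R$, $R_p$, $ms_1$, $ms_2$, $mt$ and $\delta'_p$ denote the values these variables hold when Algorithm~\ref{ALG:MASKING} terminates, i.e.\ the values produced in the last pass of the outer loop on Lines~\ref{mas:outerLoopBegin}--\ref{mas:outerLoopEnd} (and, for $R$ and $R_p$, of its inner loop on Lines~\ref{mas:stBegin}--\ref{mas:stEnd}). I would first record two easy facts. (i) Since $R$ is initialized to $S'$ and only ever grows inside the inner loop, $S' \subseteq R$; as $R_p$ is disjoint from $R$ by its defining expression, $R_p \cap S' = \phi$. (ii) In the terminating pass $ms_2$ is initialized to $\neg R$ and only enlarged afterwards, and $R_p \subseteq \neg R$; hence $R_p \subseteq ms_2$, so by the definition of $mt$ every transition $(s_0,s_1)$ with $s_1 \in R_p$ lies in $mt$.

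Now fix a computation $\langle s_0, s_1, \ldots\rangle$ of $p'[]_k\delta_e[]f$ with $s_0 \in S'$ and suppose $s_i \in R_p$. By fact~(i), $s_i \neq s_0$, so $i \geq 1$ and the step $(s_{i-1},s_i)$ exists; being a step of the computation it lies in $\delta'_p \cup \delta_e \cup f$. I would then eliminate the two undesired possibilities. If $(s_{i-1},s_i) \in \delta'_p$: by fact~(ii) this transition is in $mt$, but Line~\ref{mas:removeMT} deletes all of $mt$ from $\delta'_p$, and every later modification of $\delta'_p$ (the $EnsureClosure$ calls inside the loop on Lines~\ref{mas:ms4Begin}--\ref{mas:ms4End}) only removes transitions, so the final $\delta'_p$ contains no member of $mt$ — contradiction. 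If $(s_{i-1},s_i) \in f$: since $s_i \in R_p \subseteq ms_2$ in the terminating pass, the term $\{\,s_0 \mid \exists s_1 : s_1 \in ms_2 : (s_0,s_1) \in f\,\}$ on Line~\ref{masking:addtoMS1} places $s_{i-1}$ into $ms_1$; but Lemma~\ref{masking:noMs1} asserts that no state of a computation of $p'[]_2\delta_e[]f$ starting in $S'$ is ever in $ms_1$ in any pass of the outer loop — contradiction. Hence $(s_{i-1},s_i) \in \delta_e$, as claimed.

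The difficulty here is bookkeeping rather than ideas: I must be sure the sets named in the statement are precisely those of the terminating pass, that $ms_2$ is only enlarged after its initialization and $\delta'_p$ only shrunk after Line~\ref{mas:removeMT} (so that $R_p \subseteq ms_2$ survives and no transition into $R_p$ is ever re-added to $\delta'_p$), and that $RemoveDeadlock$ and $EnsureClosure$ act only as shrinking operations on $S'$ and $\delta'_p$. One small discrepancy to dispatch is that Lemma~\ref{masking:noMs1} is phrased for $k=2$ whereas the present statement carries a general $k$; since the argument above uses only that each computation step lies in $\delta'_p \cup \delta_e \cup f$ — and nothing about the fairness parameter — it applies verbatim once Lemma~\ref{masking:noMs1} is read for the same $k$.
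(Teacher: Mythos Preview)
Your proof is correct and follows essentially the same route as the paper: you show $R_p \subseteq ms_2$, hence every transition into $R_p$ lies in $mt$, which rules out a program step since $\delta'_p$ has been purged of $mt$; and you rule out a fault step because it would force $s_{i-1}\in ms_1$, contradicting Lemma~\ref{masking:noMs1}. Your added bookkeeping (fixing the terminating pass, checking $i\geq 1$, and noting that $\delta'_p$ is only shrunk after Line~\ref{mas:removeMT}) makes the argument more explicit than the paper's, but the substance is identical.
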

\begin{proof}
Every state in $R_p$ is in $ms_2$, and every transition to a state in $ms_2$ is in $mt$ in some iteration of loop on Lines~\ref{mas:outerLoopBegin}-\ref{mas:outerLoopEnd}. By construction, $p'$ does not contain any transition which is in $mt$ in some iteration of loop on Lines~\ref{mas:outerLoopBegin}-\ref{mas:outerLoopEnd}. Thus, $(s_{i-1}, s_i) \notin \delta'_p$.

In addition  $(s_{i-1}, s_i)$ cannot be in $f$, because if  $(s_{i-1}, s_i) \in f$ then $s_{i-1} \in ms_1$ which according to Lemma~\ref{masking:noMs1} is impossible.
\end{proof}

\begin{lemma}
\label{masking:convergence}
For every computation $\langle s_0, s_1, \ldots \rangle$ of $\delta'_p[]_k\delta_e []f$ that starts from a state in $S'$ we have:
$(\exists i : i \geq 0: s_i \in (R \cup R_p) - S') \Rightarrow (\exists j : j > i : s_j \in S')$.
\end{lemma}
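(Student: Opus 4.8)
The plan is to show that any computation $\sigma=\langle s_0,s_1,\dots\rangle$ of $\delta'_p[]_k\delta_e[]f$ with $s_0\in S'$ and some index $i$ with $s_i\in(R\cup R_p)-S'$ is forced back into $S'$, via three moves: (1) confine $\sigma$ to $R\cup R_p$; (2) discard a fault-afflicted prefix; (3) run a layered recovery argument on $R$ paralleling the proof of Lemma~\ref{self:RtoS} (equivalently Lemma~\ref{selfG:RtoS}). For (1): in the last iteration of the loop on Lines~\ref{mas:outerLoopBegin}--\ref{mas:outerLoopEnd} the set $ms_1$ is assigned $\neg(R\cup R_p)$ and thereafter only grows, so $\neg(R\cup R_p)\subseteq ms_1$ in that iteration; by Lemma~\ref{masking:noMs1} no state of $\sigma$ lies in $ms_1$, hence every state of $\sigma$ lies in $R\cup R_p$. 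For (2): the definition of a $p[]_k\delta_e[]f$ computation supplies an index $n$ past which $\sigma$ uses no fault transition; set $m=\max(i,n)$, so $s_m\in R\cup R_p$ and $\langle s_m,s_{m+1},\dots\rangle$ is a computation of $\delta'_p[]_k\delta_e$ (its fairness obligations being inherited). This discarding is essential, since a fault transition can send $\sigma$ to a state of $R\cup R_p$ that is \emph{farther} from $S'$, and the monotone descent below does not survive faults.

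For move (3) I would induct on the stage (``layer'') $\ell\ge 0$ at which a state first enters $R$ during the loop on Lines~\ref{mas:stBegin}--\ref{mas:stEnd}, layer $0$ being the initial $R=S'$. Here $R\cup R_p$ is monotone across those stages (a state leaves $R_p$ only by entering $R$), and by the addition condition in that loop every inner-loop $R$-state has all its $\delta_e$-successors already in the current $R\cup R_p$, hence at layer $<\ell$ or in $R_p$. The induction claim is: whenever $\sigma$ is at a state $s$ at position $\ge m$ (so no further faults) with $s$ at layer $\ell$, the continuation of $\sigma$ later reaches a state at layer $<\ell$; by well-foundedness this drives $\sigma$ to layer $0$, i.e.\ to $S'$. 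If $s\in R_p$, Lemma~\ref{masking:RpByE} gives that $\sigma$ reached $s$ by a $\delta_e$-step, so fairness for $k=2$ forces the next step of $\sigma$ to be a program transition, which (surviving recovery transitions go into $R$; see below) lands in $R$ and reduces to the $R$-case. For $s\in R$ at layer $\ell\ge1$, the case split of Lemma~\ref{self:RtoS} applies: either (i) $s$ was in $R_p$ at an earlier stage, so Line~\ref{mas:addToDeltaP} inserted a transition $(s,t)\notin\delta_b\cup\delta_r$ with $t$ at layer $<\ell$; or (ii) $s$ has a $\delta_e$-transition into a lower-layer state of $R\cup R_p$ and no $\delta_e$-transition leaving $R\cup R_p$. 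In case (ii) every transition enabled at $s$ either is a program transition to a lower layer, or a $\delta_e$-transition into $R\cup R_p$ after which fairness for $k=2$ (exactly as in Lemma~\ref{self:RtoS}) reaches a lower layer within one more step. In case (i) the \emph{surviving} program transitions at $s$ all go into $R$; if $\sigma$ is at $s$ right after a $\delta_e$-step (which, if $s\notin S$, is the only way, since then $s$ carries no transitions of $\delta_p|S$), fairness forces $\sigma$ to take one of them, dropping the layer. Finally, since $s_i\notin S'$ and the reached $S'$-position $j$ satisfies $j\ge m\ge i$, in fact $j>i$, which is the claim.

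The step I expect to be the main obstacle is the survival clause invoked twice above. Unlike in Algorithm~\ref{ALG:K2}, here $\delta'_p$ is pruned after the recovery transitions are inserted — by the $mt$-removal on Line~\ref{mas:removeMT} and by $EnsureClosure$ — and a state in $S-S'$ that nonetheless lies in $R\cup R_p$ still carries its original $\delta_p|S$ transitions, which need not decrease the layer. I would dispatch this as follows. $EnsureClosure$ deletes only transitions leaving $S'$, which never touches the states used above (all in $(R\cup R_p)-S'$). For the $mt$-removal, a transition $(s,t)$ with $t\in R$ is deleted only if $t\in ms_2$; the key claim is that if at a state $s$ of $\sigma$ with $s\notin S'$ \emph{every} enabled program transition either landed in $ms_2$ or pointed outside $R$, then — tracing $s$'s $f$- and $\delta_e$-predecessors precisely as in the proof of Lemma~\ref{masking:noMs1}, and invoking the $ms_1$-clause on Line~\ref{masking:addtoMS1} (which adds $s_0$ to $ms_1$ when it has a fault-step into $ms_2$, or a ``bad''/into-$ms_1$ environment step together with no program transition outside $mt$) — $s$ itself would be placed into $ms_1$ in that iteration, contradicting Lemma~\ref{masking:noMs1}. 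The same bookkeeping rules out a harmful residual $\delta_p|S$ transition of a state in $R\cap(S-S')$ that would keep $\sigma$ from descending: such a transition survives only to a state $t'\in R\setminus ms_2$, and if the state had \emph{no} descending option along $\sigma$ it would again be forced into $ms_1$. So the guiding principle is that $ms_1$ was inflated precisely to hold every state from which the \emph{pruned} program could fail to recover, and Lemma~\ref{masking:noMs1} — $\sigma$ avoids $ms_1$ — is exactly the lever that keeps a recovery move available at each state of $\sigma$.

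In summary, the proof is: confine $\sigma$ to $R\cup R_p$ (Lemma~\ref{masking:noMs1}), drop to the fault-free tail, and drive the current state's layer down to $0$ as in Lemma~\ref{self:RtoS}, with Lemma~\ref{masking:RpByE} absorbing the $R_p$-states and Lemma~\ref{masking:noMs1} re-invoked to certify that the post-insertion pruning of $\delta'_p$ never removes a recovery transition the descent needs. Everything except that certification — in particular, ruling out harmful residual $\delta_p|S$ transitions of states in $S-S'$ by forcing such states into $ms_1$ — is a routine transcription of the stabilization proof together with $k=2$ fairness bookkeeping, and the certification is where the real content of the lemma lies.
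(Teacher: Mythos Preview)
Your high-level decomposition matches the paper's exactly: split on $s_i\in R$ versus $s_i\in R_p$, invoke the stabilization convergence argument (Lemma~\ref{self:RtoS}) for the former, and combine Lemma~\ref{masking:RpByE} with $k=2$ fairness for the latter. The paper's proof is literally that two-sentence case split --- it does not isolate a fault-free tail, does not confine $\sigma$ to $R\cup R_p$ via Lemma~\ref{masking:noMs1}, and does not discuss why Lemma~\ref{self:RtoS}, proved for Algorithm~\ref{ALG:K2}, transfers to the pruned $\delta'_p$ of Algorithm~\ref{ALG:MASKING}. You are doing considerably more than the paper, and your identification of the ``survival clause'' as the crux is well-taken: the paper simply cites Lemma~\ref{self:RtoS} and moves on.

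One caveat in your dispatch of that clause. Your claim that a state $s\in R\cap(S-S')$ with ``no descending option along $\sigma$'' would be forced into $ms_1$ does not quite go through as stated: if $s$ retains a residual $\delta_p|S$ transition to some $t'\in R\setminus ms_2$ (possibly at a \emph{higher} layer), then $s$ has a transition in $\delta'_p-mt$, and the second disjunct of Line~\ref{masking:addtoMS1} does not add $s$ to $ms_1$. What you do get for free after Line~\ref{mas:removeMT} is that every surviving program transition lands in $\neg ms_2\subseteq R$, so the key step ``$(s_0,s_1)\in\delta'_p\Rightarrow s_1\in R$'' from the proof of Lemma~\ref{self:RtoS} still holds; what fails is that $s_1$ need not sit at a strictly lower layer, and ruling out infinite $R$-internal cycles of the pruned program needs a separate termination argument that neither you nor the paper makes explicit. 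Your overall plan is sound and strictly more careful than the paper's; the residual-transition step just needs a sharper justification than $ms_1$-membership.
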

\begin{proof}
There are two cases:\\ \\ 
\textbf{Case 1} $s_i \in R$ \\
As $s_i \in R$, according to Lemma~\ref{self:RtoS}, $\exists j : j > i : s_j \in S'$. \\ \\
\textbf{Case 2}$s_i \in R_p$ \\ 
As $s_i$ is in $R_p$, there is a program transition from $s_i$ to a state $s$ in $R$. As $s_0 \in S'$, according to Lemma~\ref{masking:RpByE}, $(s_{i-1}, s_i) \in \delta_e$, and because of fairness assumption program can reach $R$ using $(s_i, s)$.
\end{proof}

\begin{lemma}
\label{masking:f-span}
$R \cup R_p$ is a $f$-span for $p'[]_2\delta_e$ from $S'$.
\end{lemma}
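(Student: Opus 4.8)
The plan is to verify the two clauses in the definition of $f$-span for $T=R\cup R_p$, using the values of $R$, $R_p$ (and $ms_1,ms_2,mt,\delta'_p,S'$) produced in the \emph{last} iteration of the outer loop on Lines \ref{mas:outerLoopBegin}--\ref{mas:outerLoopEnd}. The first clause, $S'\subseteq R\cup R_p$, is essentially immediate: in the final outer iteration the termination test forces $S''=S'$, so $S'$ does not change during that iteration; since $R$ is initialized to $S'$ and the inner loop on Lines \ref{mas:stBegin}--\ref{mas:stEnd} only enlarges $R$, we get $S'\subseteq R\subseteq R\cup R_p$.

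For the second clause I would proceed by induction on the index of a computation $\langle s_0,s_1,\ldots\rangle$ of $p'[]_2\delta_e[]f$ with $s_0\in S'$, the inductive step being: if $s_i\in R\cup R_p$ then $s_{i+1}\in R\cup R_p$. I would first isolate three facts. (a) A \emph{monotonicity} sub-lemma: within the inner loop on Lines \ref{mas:stBegin}--\ref{mas:stEnd} the set $R\cup R_p$ never shrinks (if a state leaves $R_p$ it is because it was absorbed into $R$; and since $R_p$ is recomputed against the now-larger $R$, any state that was in $R_p$ and is still outside $R$ remains in $R_p$). (b) After Line \ref{mas:removeMT} every transition of the current $\delta'_p$ has its target in $R$, because $mt$ contains every transition whose target lies in $ms_2\supseteq\neg R$, and the only later changes to $\delta'_p$ are the $EnsureClosure$ calls, which merely delete transitions. (c) $S'$ is closed under $\delta_e$ (it is an output of $RemoveDeadlock$), and any state $s_0$ that the inner loop places into $R$ satisfies $\nexists s_2\in\neg(R\cup R_p):(s_0,s_2)\in\delta_e$ at the instant it is added, hence, by (a), all its $\delta_e$-successors lie in the final $R\cup R_p$. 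Granting these, take $s_i\in R$: a $\delta'_p$-step from $s_i$ lands in $R$ by (b); a $\delta_e$-step lands in $R$ if $s_i\in S'$ (by closure) and in $R\cup R_p$ if $s_i\in R\setminus S'$ (by (c)); and an $f$-step cannot reach $\neg(R\cup R_p)$, for $\neg(R\cup R_p)\subseteq ms_1\subseteq ms_2$, so such a step would put $s_i$ into $ms_1$ via the first clause of Line \ref{masking:addtoMS1}, contradicting Lemma \ref{masking:noMs1}.

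The remaining and hardest case is $s_i\in R_p\setminus R$. Here I would invoke Lemma \ref{masking:RpByE}: $s_i$ is entered by an environment transition, so the $k=2$ fairness clause of a $p'[]_2\delta_e[]f$ computation forbids a further environment step from $s_i$ whenever $s_i$ has an enabled $\delta'_p$-transition. If it does, the next step is in $\delta'_p\cup f$, and both subcases are handled exactly as above (the $\delta'_p$-case by (b), the fault case by the $ms_1$/Lemma \ref{masking:noMs1} argument). If $s_i$ has no enabled $\delta'_p$-transition, the next step is in $\delta_e\cup f$; the fault case is as before, and an environment step landing in $\neg(R\cup R_p)\subseteq ms_1$ would --- since the side condition ``$\nexists s_2:(s_i,s_2)\in\delta'_p-mt$'' of the second clause of Line \ref{masking:addtoMS1} is exactly our hypothesis --- put $s_i$ into $ms_1$, again contradicting Lemma \ref{masking:noMs1}. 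This completes the induction and hence the lemma. I expect the bookkeeping in this last case to be the main obstacle: one must line up ``$s_i$ has no surviving program transition'' with the precise side condition of the $ms_1$-update, and check that neither $EnsureClosure$ nor $RemoveDeadlock$ ever strips the relevant transitions from states lying outside $S'$; the rest of the argument is routine.
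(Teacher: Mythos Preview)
Your argument is correct, but it is far more elaborate than what the paper actually does, and in fact you already have in your hands the one observation that collapses the whole induction.

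The paper's proof is essentially three lines. First, $S'\subseteq R$ because in the final outer iteration $R$ is initialised to $S'$ and only grows; hence $S'\Rightarrow R\cup R_p$. Second, every state of $\neg(R\cup R_p)$ is in $ms_1$ for some iteration of the outer loop, simply because Line~16 sets $ms_1=\neg(R\cup R_p)$ and $ms_1$ is only enlarged afterwards. Third, Lemma~\ref{masking:noMs1} says that no state appearing in a $p'[]_2\delta_e[]f$ computation from $S'$ is ever in $ms_1$ (in any iteration). Putting the last two together, no such computation can visit $\neg(R\cup R_p)$, which is exactly the second clause of the $f$-span definition.

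You invoke precisely this combination --- $\neg(R\cup R_p)\subseteq ms_1$ together with Lemma~\ref{masking:noMs1} --- repeatedly inside your inductive case analysis (for the fault step, and again in the $R_p\setminus R$ environment subcase). The point is that this observation applies to the \emph{target} state $s_{i+1}$ directly: if $s_{i+1}\in\neg(R\cup R_p)$ then $s_{i+1}\in ms_1$, contradicting Lemma~\ref{masking:noMs1}. There is no need to argue that $s_i$ is forced into $ms_1$, no need to distinguish $\delta'_p$/$\delta_e$/$f$ steps, no need for your facts~(a)--(c), no need to invoke Lemma~\ref{masking:RpByE} or to worry about lining up ``no surviving program transition'' with the side condition of Line~\ref{masking:addtoMS1}. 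All of that bookkeeping you flag as ``the main obstacle'' simply evaporates once you apply Lemma~\ref{masking:noMs1} to $s_{i+1}$ rather than to $s_i$.
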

\begin{proof}
By construction, we know that $S' \subseteq R$, thus $S' \Rightarrow (R \cup R_p)$. Any state in $\neg (R \cup R_p)$ is in $ms_1$ in some iteration of loop on Lines~\ref{mas:outerLoopBegin}-\ref{mas:outerLoopEnd}. According to Lemma~\ref{masking:noMs1}, there is no  computations of $p'[]_2\delta_e[]f$ where $s_0 \in S'$ such that $s_i$ is in $ms_1$ in some iteration of loop on Lines~\ref{mas:outerLoopBegin}-\ref{mas:outerLoopEnd}. Therefore, $R \cup R_p$ is a $f$-span for $p'[]_2\delta_e$.

\end{proof}

\begin{theorem}
\label{mas:soundness}
Algorithm~\ref{ALG:MASKING} is sound.
\end{theorem}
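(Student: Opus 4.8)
The plan is to verify, for the pair $(\delta'_p, S')$ returned by Algorithm~\ref{ALG:MASKING}, the three requirements \textit{C1}, \textit{C2}, and \textit{C3} of the addition problem of Section~\ref{sec:probdef}, reusing Lemma~\ref{th:C1} together with the masking-specific Lemmas~\ref{masking:noMs1}, \ref{masking:RpByE}, \ref{masking:convergence}, and~\ref{masking:f-span}. First I would observe that the algorithm returns a pair only when $S' \neq \phi$ (Line~\ref{mas:isEmpty}), so the synthesized invariant is nonempty. Condition \textit{C3} is then immediate: every transition ever placed into $\delta'_p$ — the initialization with $\delta_p|S$, which is disjoint from $\delta_r$ since $\delta_p \cap \delta_r = \phi$, and the recovery transitions added on Line~\ref{mas:addToDeltaP}, which by construction avoid $\delta_b \cup \delta_r$ — lies outside $\delta_r$, while Line~\ref{mas:removeMT} and the \emph{EnsureClosure} calls only delete transitions; hence $\delta'_p \cap \delta_r = \phi$.

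For condition \textit{C1} the plan is to check the five conditions of Lemma~\ref{th:C1} for $(\delta'_p, S')$, mirroring the \textit{C1} part of the proof of Theorem~\ref{fail:soundness}. The conditions $S' \subseteq S$ and $\delta'_p|S' \subseteq \delta_p|S$ hold because $S'$ starts as $S$ and only shrinks, while every non-initialization transition of $\delta'_p$ originates in $R_p \subseteq \neg R \subseteq \neg S'$ and therefore never starts inside $S'$. Closure of $\delta'_p \cup \delta_e$ in $S'$ holds at the fixpoint of the loop on Lines~\ref{mas:ms4Begin}--\ref{mas:ms4End}, where \emph{EnsureClosure} has removed every program transition leaving $S'$ and \emph{RemoveDeadlock} has removed every state of $S'$ with an environment transition leaving $S'$. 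The no-new-deadlock condition follows from the fact that \emph{RemoveDeadlock} eliminates exactly the states that would become deadlocks, together with Assumption~\ref{assumption:infinite}. The remaining condition — whenever $s_0 \in S'$, $(s_0,s_1),(s_1,s_2) \in \delta_e$, and $(s_1,s_3) \in \delta_p$, the state $s_1$ must have an outgoing $\delta'_p$ transition — is enforced by the $ms_3/ms_4$ removal: such a state $s_1$ with no $\delta'_p$ successor is in $ms_3$, so any state with an environment transition into it lies in $ms_4$ and has been removed from $S'$, exactly as argued for the fourth requirement in the proof of Theorem~\ref{fail:soundness}.

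For condition \textit{C2} I would establish the two clauses of masking $f$-tolerance. The failsafe clause has two parts: $p'[]_2\delta_e$ refines $spec$ from $S'$, which follows from \textit{C1} and the hypothesis that $p[]_2\delta_e$ refines $spec$ from $S$; and every prefix of $p'[]_2\delta_e[]f$ starting in $S'$ refines $Sf$, which follows from Lemma~\ref{masking:noMs1} and the backward-tracing argument of the \textit{C2} part of Theorem~\ref{fail:soundness} — a safety-violating prefix would force its first state into $ms_2 \supseteq \neg R$, contradicting $S' \subseteq S - ms_2$. For the masking clause I would take the fault-span $T = R \cup R_p$: Lemma~\ref{masking:f-span} gives that $R \cup R_p$ is an $f$-span of $p'[]_2\delta_e$ from $S'$, and Lemma~\ref{masking:convergence} gives that any computation of $p'[]_2\delta_e[]f$ from $S'$ that reaches a state in $(R \cup R_p) - S' = T - S'$ subsequently returns to $S'$, which is precisely condition~(2) in the definition of masking $f$-tolerance. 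Together these give \textit{C2}, and all three conditions hold, establishing soundness.

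The step I expect to be the main obstacle is condition \textit{C1}, specifically reconciling the two interleaved removal processes — the stabilization-style construction of $R$ and $R_p$ and the failsafe-style expansion of $ms_1$, $ms_2$, and $mt$ — inside the outer loop on Lines~\ref{mas:outerLoopBegin}--\ref{mas:outerLoopEnd}, so that at termination the resulting $S'$ simultaneously satisfies closure, absence of new deadlocks, and the $ms_3/ms_4$ condition of Lemma~\ref{th:C1}. The ``in some iteration of the outer loop'' bookkeeping carried by Lemmas~\ref{masking:noMs1} and~\ref{masking:f-span} must be threaded carefully through the safety and fault-span arguments; once that is done, the remainder is routine assembly of the cited lemmas, as in the soundness proofs of Algorithms~\ref{ALG:K2} and~\ref{ALG:FAILSAFE}.
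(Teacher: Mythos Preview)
Your proposal is correct and follows essentially the same approach as the paper: \textit{C1} via the five conditions of Lemma~\ref{th:C1} (the paper simply cites the failsafe proof here), \textit{C2} by combining the backward-tracing safety argument with Lemma~\ref{masking:noMs1} and then invoking Lemmas~\ref{masking:convergence} and~\ref{masking:f-span} for recovery with fault-span $R\cup R_p$, and \textit{C3} from disjointness with $\delta_r$. Your write-up is in fact more explicit than the paper's in several places (the \textit{C1} checks, the direct argument for \textit{C3}), but the decomposition and the lemmas used are identical.
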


\begin{proof}
In order to show the soundness of our algorithm, we need to show that the three conditions of the problem statement are satisfied.

\textit{\textbf{C1}}: Satisfaction of \textit{C1} for Algorithm~\ref{ALG:MASKING} is the same as that for Algorithm~\ref{ALG:FAILSAFE} stated in the proof of the Theorem~\ref{fail:soundness}.

\textit{\textbf{C2}}: From \textit{C1} and the assumption that $p[]_2\delta_e$ refines $spec$ from $S$, $p'[]_2\delta_e$ refines $spec$ from $S'$.

Let $spec = \langle Sf, Lv \rangle$. Consider prefix $c$ of $p'[]_k \delta_e []f$  such that $c$ starts from a state in $S'$. If $c$ does not refine $Sf$, there exists a prefix of $c$, say $\langle s_0, s_1,\ldots, s_n \rangle$, such that it has a transition in $\delta_b $. Wlog, let $\langle s_0, s_1, \cdots, s_n\rangle$ be the smallest such prefix. It follows that $(s_{n-1}, s_n) \in \delta_b$. Hence, $(s_{n-1}, s_n) \in mt$. By construction, $p'$ does not contain any transition in $mt$. Thus, $(s_{n-1}, s_n)$ is a transition of $f$ or $\delta_e$. If it is in $f$ then $s_{n-1} \in ms_1$ which it is a contradiction to Lemma~\ref{masking:noMs1}. If it is in $\delta_e$ then $s_{n-1} \in ms_2$, and $(s_{n-2}, s_{n-1}) \in mt$. Again, by construction we know that $\delta'_p$ does not contain any transition in $mt$, so $(s_{n-2}, s_{n-1})$ is either in $f$ or $\delta_e$. If it is in $f$ then $s_{n-2} \in ms_1$ (contradiction to  Lemma~\ref{masking:noMs1}). If it is in $\delta_e$, as both $(s_{n-2},s_{n-1})$ and $(s_{n-1}, s_n)$ are in $\delta_e$, according to fairness assumption, there should does not exist a transition of $\delta'_p - mt$ starting from $s_{n-1}$, and it means that $s_{n-1} \in ms_1$, which is again a contradiction to  Lemma~\ref{masking:noMs1}. Thus, each prefix of $c$ does not have a transition in $\delta_b$. Therefore, any prefix of $p'[]_k \delta_e []f$ refines $Sf$.

As $p'[]_2\delta_e$ refines $spec$ from $S'$, any prefix of $p'[]_k \delta_e []f$ refines $Sf$, and according to  Lemma~\ref{masking:convergence} and Lemma~\ref{masking:f-span}, $p'$ is masking 2-$f$-tolerant to $spec$ from $S'$ in environment $\delta_e$.

\textit{\textbf{C3}}: Any $(s_0, s_1) \in \delta_r$, is in $mt$. By construction, $p'$ does not have any transition in $mt$, so \textit{C3} holds.

\end{proof}

\begin{observation}
\label{mas:ms1}
In each iteration of loop on Lines \ref{mas:outerLoopBegin}-\ref{mas:outerLoopEnd}, there are two cases for any state $s_0 \in ms_1$: 
\begin{enumerate}
\item $s_0 \in \neg (R \cup R_p)$
\item $s_0$ is added by Lines~\ref{mas:expandingMsBeign}-\ref{mas:expandingMsEnd}
\end{enumerate}
\end{observation}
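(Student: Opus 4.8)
The plan is to prove Observation~\ref{mas:ms1} by direct inspection of the control flow of Algorithm~\ref{ALG:MASKING}, exploiting the fact that, within a single iteration of the outer loop, $ms_1$ is assigned exactly once and thereafter only grows. The case split in the statement is then simply "the state was present at that single assignment" versus "the state entered afterwards".

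First I would fix an arbitrary iteration of the loop on Lines~\ref{mas:outerLoopBegin}-\ref{mas:outerLoopEnd} and trace the value of $ms_1$ through it. Inside this iteration, the inner loop on Lines~\ref{mas:stBegin}-\ref{mas:stEnd} computes $R$ and $R_p$, and neither variable is written again before control returns to the top of the outer loop. Immediately after that inner loop terminates, the assignment $ms_1 = \neg (R \cup R_p)$ fixes the initial value of $ms_1$ for this iteration, so at that moment every state of $ms_1$ satisfies case~(1). I would then note that every subsequent write to $ms_1$ in this iteration lies in the block of Lines~\ref{mas:expandingMsBeign}-\ref{mas:expandingMsEnd}, and each such write has the monotone form $ms_1 := ms_1 \cup X$: the set $\{s_0 \mid (s_0,s_1)\in f\cap\delta_b\}$ on Line~\ref{mas:expandingMsBeign}, and the set added on Line~\ref{masking:addtoMS1} inside the inner fixpoint loop. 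Since union never removes elements, any state that is in $ms_1$ but was not in the value $\neg(R\cup R_p)$ at the initializing assignment must have entered through one of these union operations, i.e., it satisfies case~(2). As the two cases are exhaustive, the observation follows.

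The only point requiring care is to make the phrase "in each iteration" precise: the predicate $\neg(R\cup R_p)$ appearing in case~(1) must be read with $R$ and $R_p$ as computed in the \emph{same} outer iteration, and I would record explicitly that those two sets are stable from the assignment $ms_1=\neg(R\cup R_p)$ until the end of that iteration, so the reference is unambiguous. Apart from this bookkeeping, the claim is essentially a syntactic consequence of the structure of Algorithm~\ref{ALG:MASKING}, and I do not anticipate a substantive obstacle; its role is auxiliary, supporting the later completeness argument where one needs to classify why a given state belongs to $ms_1$.
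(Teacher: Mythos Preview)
Your proposal is correct and matches the paper's treatment: the paper states this as an \emph{observation} with no accompanying proof, treating it as immediate from the structure of Algorithm~\ref{ALG:MASKING}, and your argument is precisely the detailed control-flow inspection that justifies this immediacy. Your care about fixing $R$ and $R_p$ to their values in the same outer iteration is appropriate and is exactly the reading the paper intends.
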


\begin{theorem}
\label{mas:noMs2}
Let $p''$ be any program that solves the problem of adding masking fault-tolerance and let $S''$ be its invariant. Then, $S''$ does not include any state in the set $ms_2$ in any iteration of loop on Line~\ref{mas:outerLoopBegin}-\ref{mas:outerLoopEnd}.
\end{theorem}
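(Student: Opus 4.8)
The plan is to prove the statement by induction on the iteration number of the outer loop on Lines~\ref{mas:outerLoopBegin}-\ref{mas:outerLoopEnd}, carrying the slightly stronger hypothesis that $S'' \subseteq S'$ holds at the start of every iteration, together with the conclusion $S'' \cap ms_2 = \phi$ and its companion $S'' \cap ms_4 = \phi$ (in every sub-iteration of Lines~\ref{mas:ms4Begin}-\ref{mas:ms4End}) for that iteration. For the base case $S' = S$, and $S'' \subseteq S$ is exactly item~2 of Lemma~\ref{th:C1} applied to $p''$ and $S''$ via Constraint $C1$. The hypothesis $S'' \subseteq S'$ is preserved across iterations because $S'$ only shrinks through the $RemoveDeadlock$/$EnsureClosure$ calls and the subtractions of $ms_2$ and $ms_4$: the subtractions are harmless by the current iteration's instance of this theorem and of the masking analogue of Theorem~\ref{fai:noms4}, and a routine check (as in the failsafe proofs) shows $EnsureClosure$ and $RemoveDeadlock$ never discard a state of $S''$, using that $S''$ is closed in $\delta''_p \cup \delta_e$ and that every $p''[]_2\delta_e$ computation from $S''$ is infinite (Assumption~\ref{assumption:infinite} together with $C1$). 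It therefore suffices to fix one iteration, assume $S'' \subseteq S'$ at its start, and show $S'' \cap ms_2 = \phi$ for the $ms_2$ built in that iteration.

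First I would dispose of the ``stabilization seeds''. The inner loop on Lines~\ref{mas:stBegin}-\ref{mas:stEnd} only enlarges $R$ from its initial value $S'$, so $S'' \subseteq S' \subseteq R$, which gives $S'' \cap \neg R = \phi$; and since every state of $R_p$ lies outside $R$, hence outside $S' \supseteq S''$, we also get $S'' \cap \neg(R \cup R_p) = \phi$. Thus the sets $\neg R$ and $\neg(R \cup R_p)$ with which the masking algorithm initializes $ms_2$ and $ms_1$ are already disjoint from $S''$.

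Next I would handle the states thrown into $ms_1$ and $ms_2$ by the expansion loop on Lines~\ref{mas:expandingMsBeign}-\ref{mas:expandingMsEnd}, which is syntactically the expansion used in Algorithm~\ref{ALG:FAILSAFE}. The idea is to re-run the completeness argument of Theorems~\ref{fai:reachms1} and~\ref{fai:ms1} for $p''$; by Observation~\ref{mas:ms1} the only change is that the induction on ``when a state enters $ms_1$'' now has an extra base case: a state already in $\neg(R \cup R_p)$. For that base case, in place of ``a transition of $\delta_b$ eventually fires'' I would invoke the stabilization-completeness results applied to $p''$ (Observation~\ref{notR1}, Lemma~\ref{lem:notRNotRecovery}, Corollary~\ref{notRNotRecovery}): from any state of $\neg(R \cup R_p)$ — which, lying in $S''$, is not a deadlock — there is a $p''[]_2\delta_e$ computation, hence a $p''[]_2\delta_e[]f$ computation that uses no faults, all of whose states remain in $\neg R \subseteq \neg S' \subseteq \neg S''$. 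With this augmented base case, the otherwise verbatim inductions of Theorems~\ref{fai:reachms1} and~\ref{fai:ms1} show that from every $s_0 \in ms_1$, and hence (via the generalized Observation~\ref{failsafe:obs1}) from every $s_0 \in ms_2$, reached by a suitable prefix, there is a $p''[]_2\delta_e[]f$ suffix that either executes a transition of $\delta_b$ or stays forever outside $S''$. Finally, assuming for contradiction $s_0 \in S'' \cap ms_2$ in this iteration: since $s_0 \in S''$ it lies on a $p''[]_2\delta_e[]f$ computation starting in $S''$, and the suffix above makes that computation either violate $Sf$ — contradicting that $p''$ is failsafe $f$-tolerant ($C2$) — or leave $S''$ without ever returning — contradicting that $p''$ is masking $f$-tolerant ($C2$). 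Hence $S'' \cap ms_2 = \phi$.

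I expect the main obstacle to be the bookkeeping in the adapted inductions of Theorems~\ref{fai:reachms1} and~\ref{fai:ms1}: one must check, disjunct by disjunct, that each condition under which Line~\ref{masking:addtoMS1} puts $s_0$ into $ms_1$ (a fault to $ms_2$; a $\delta_e$ step to $ms_1$ while no transition of $\delta'_p - mt$ leaves $s_0$; a $\delta_e \cap \delta_b$ step while no transition of $\delta'_p - mt$ leaves $s_0$) can be realized as an actual $p''[]_2\delta_e[]f$ suffix, using that $p''$'s transitions on $S''$ avoid $\delta_b \cup \delta_r$ and avoid every transition into $ms_2$ — so the only moves available at such a state are the offending $\delta_e$ moves — and using $k=2$ fairness to force the $\delta_e$ move at exactly the right point. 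A secondary but genuine subtlety is making the cross-iteration induction airtight, i.e.\ that $S'' \subseteq S'$ is truly preserved by the $RemoveDeadlock$/$EnsureClosure$ calls and the $ms_4$ removals, which forces one to prove the masking analogue of Theorem~\ref{fai:noms4} in tandem with this theorem rather than afterwards.
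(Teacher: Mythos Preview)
Your proposal is correct and follows essentially the same approach as the paper: the paper's proof explicitly cites Observation~\ref{mas:ms1} to split $ms_1$ into the stabilization seeds $\neg(R\cup R_p)$ and the states added by the expansion loop, then invokes the extensions of Theorems~\ref{fai:reachms1} and~\ref{fai:ms1} for the latter and the extension of Corollary~\ref{notRNotRecovery} for the former---exactly the decomposition you lay out. Your additional scaffolding (the explicit outer-loop induction carrying $S'' \subseteq S'$, and the observation that Theorem~\ref{mas:noMs4} must be proved in tandem) is more detailed than the paper's terse sketch but is precisely what is needed to make the cross-iteration argument rigorous.
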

\begin{proof}
The proof of this theorem is based on Observation~\ref{mas:ms1}, extension of  Theorem~\ref{fai:reachms1}  and Theorem~\ref{fai:ms1} from failsafe fault-tolerance to masking fault-tolerance, and extension of  Corollary~\ref{notRNotRecovery} from stabilizaton to masking fault-tolerance.

\end{proof}

\begin{corollary}
\label{mas:noMt}
Let $p''$ be any program that solves the problem of adding masking fault-tolerance and let $S''$ be its invariant. $p''|S''$ cannot have any transition in $mt$ in any iteration of loop on Lines~\ref{mas:outerLoopBegin}-\ref{mas:outerLoopEnd}. 
\end{corollary}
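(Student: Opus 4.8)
The plan is to derive this statement directly from Theorem~\ref{mas:noMs2}, the definition of $mt$, and the three constraints $C1$--$C3$ of the addition problem. Fix an arbitrary iteration of the loop on Lines~\ref{mas:outerLoopBegin}--\ref{mas:outerLoopEnd}, and let $ms_1$, $ms_2$, $mt$ denote the values of these sets at the end of the inner expansion loop (Lines~\ref{mas:expandingMsBeign}--\ref{mas:expandingMsEnd}) in that iteration; recall that at that point $mt = \{(s_0,s_1) \mid (s_0,s_1) \in \delta_b \cup \delta_r \ \vee\ s_1 \in ms_2\}$. Suppose, for contradiction, that $p''|S''$ contains a transition $(s_0,s_1) \in mt$. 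Then $s_0, s_1 \in S''$ and $(s_0,s_1) \in \delta''_p$.

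Next I would split on the reason $(s_0,s_1)$ lies in $mt$. If $(s_0,s_1) \in \delta_r$, this contradicts $C3$, which forces $\delta''_p \cap \delta_r = \phi$. If $s_1 \in ms_2$, then since $s_1 \in S''$ this contradicts Theorem~\ref{mas:noMs2}, which states that $S''$ contains no state of $ms_2$ in any iteration of this loop. The remaining case is $(s_0,s_1) \in \delta_b$: here I would invoke $C2$, specifically the failsafe component of masking $f$-tolerance. Since $s_0 \in S''$ and $(s_0,s_1)$ is an available program transition, the $k=2$ semantics permits a computation of $p''[]_2\delta_e$ (hence of $p''[]_2\delta_e[]f$) starting from $s_0$ whose first step is $(s_0,s_1)$; because $s_1 \in S''$ and, by Assumption~\ref{assumption:infinite}, there are no deadlocks inside $S''$, this prefix extends to a full computation starting in a state of $S''$ that contains the bad transition $(s_0,s_1)\in\delta_b$, contradicting the requirement that every computation prefix of $p''[]_2\delta_e[]f$ starting in $S''$ refines $Sf$. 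Since all three cases are impossible, no such transition exists, and as the iteration was arbitrary the claim holds for every iteration.

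The only slightly delicate point, and the place where I would be most careful, is the $\delta_b$ case: one must check that the single step $(s_0,s_1)$ genuinely occurs in some legitimate $p''[]_2\delta_e[]f$ computation, i.e.\ that the fairness and no-deadlock conditions do not forbid it. This is handled by noting that a program transition is always eligible from $s_0$ in the $k=2$ semantics, and that Assumption~\ref{assumption:infinite} lets the computation be continued indefinitely within $S''$. An alternative, fully parallel route is to observe that this corollary is the masking analogue of Corollary~\ref{fai:nomt} and follows from the masking-level extensions of Theorems~\ref{fai:reachms1} and~\ref{fai:ms1} already used in the proof of Theorem~\ref{mas:noMs2}; I would state that as the one-line justification and retain the case analysis above as its expansion.
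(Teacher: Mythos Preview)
Your proposal is correct and aligns with the paper's treatment. The paper provides no explicit proof for this corollary: it is listed immediately after Theorem~\ref{mas:noMs2} as its direct consequence, in exact parallel to how Corollary~\ref{fai:nomt} is asserted right after Theorems~\ref{fai:reachms1} and~\ref{fai:ms1} in the failsafe section. Your ``alternative route'' paragraph is therefore the paper's entire argument, and your three-case expansion via $C2$, $C3$, and Theorem~\ref{mas:noMs2} is a legitimate unpacking of it; if anything, your $\delta_b$ case is slightly over-careful, since the failsafe clause quantifies over \emph{prefixes}, so the two-state prefix $\langle s_0,s_1\rangle$ already yields the contradiction without needing the no-deadlock extension.
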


\begin{theorem}
\label{mas:noMs4}
Let $p''$ be any program that solves the problem of adding masking fault-tolerance, and let $S''$ be its invariant. Then $S''$ cannot include any state in set $ms_4$ in any iteration of loop on Lines~\ref{mas:ms4Begin}-\ref{mas:ms4End}. 
\end{theorem}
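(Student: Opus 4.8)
The plan is to mirror the proof of Theorem~\ref{fai:noms4} for the masking case, wrapping it in an induction over the iterations of the outer loop on Lines~\ref{mas:outerLoopBegin}--\ref{mas:outerLoopEnd}. Suppose, for contradiction, that $p''$ (with invariant $S''$) solves the masking problem and yet some state $s_0 \in S''$ lies in $ms_4$ during some iteration of the loop on Lines~\ref{mas:ms4Begin}--\ref{mas:ms4End}. I will produce a computation of $p''[]_2\delta_e$ starting in $S''$ that is not a computation of $p[]_2\delta_e$, contradicting condition \textit{C1} (in the decomposed form of Lemma~\ref{th:C1}).

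The key preparatory step is the containment $\delta''_p|S'' \subseteq \delta'_p$, to be established at the moment $ms_3$ (hence $ms_4$) is computed, in every iteration. From \textit{C1} we get $S'' \subseteq S$ and $\delta''_p|S'' \subseteq \delta_p|S$; from Theorem~\ref{mas:noMs2}, $S''$ is disjoint from $ms_2$ in every outer iteration; and from Corollary~\ref{mas:noMt}, $p''|S''$ uses no transition of $mt$. Therefore $\delta''_p|S'' \subseteq (\delta_p|S) - mt$. One then checks $S'' \subseteq S'$ at each point where $RemoveDeadlock$ or $EnsureClosure$ is invoked: $S''$ is closed under $\delta_e$ so it survives the environment-leaving removals, and each of its states has a program transition (a real one or the self-loop supplied by Assumption~\ref{assumption:infinite}) so it survives the deadlock removals. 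Consequently $EnsureClosure$ deletes no transition of $\delta''_p|S''$, the removal of $mt$ on Line~\ref{mas:removeMT} leaves $(\delta_p|S)-mt$ untouched, and the recovery transitions added on Line~\ref{mas:addToDeltaP} (which all originate outside $R \supseteq S''$) are irrelevant; so $\delta''_p|S'' \subseteq \delta'_p$ when $ms_3$ is formed.

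Now the core argument. Since $s_0 \in ms_4$, there is $s \in ms_3$ with $(s_0,s) \in \delta_e$, and closure of $S''$ under $p''[]_2\delta_e$ forces $s \in S''$. By definition of $ms_3$ there are $s_1,s_2$ with $(s,s_1)\in\delta_e$, $(s,s_2)\in\delta_p$, and no transition of $\delta'_p$ leaving $s$; combined with the containment and closure of $S''$ under $\delta''_p$, no transition of $\delta''_p$ leaves $s$ either. Completing $\langle s_0, s, s_1, \dots\rangle$ to a full computation of $p''[]_2\delta_e$ (possible since $p''$ is deadlock-free by \textit{C1}), we obtain a computation starting in $S''$ that cannot be a computation of $p[]_2\delta_e$: having reached $s$ via the environment step $(s_0,s)$, the fairness clause of Definition~\ref{def:progenvcomp} forbids the successor step from being the environment transition $(s,s_1)$ while the program transition $(s,s_2)$ is enabled at $s$. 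This contradicts \textit{C1}. For the induction over the outer loop: once it is known that $p''$ has no $ms_4$-state in a given iteration, closure of $S''$ forces every $\delta''_p$-transition into such a state to be absent, which is precisely what keeps $S'' \subseteq S'$ (and hence the containment) valid on re-entry, so the argument repeats.

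I expect the main obstacle to be exactly this bookkeeping: tracking $\delta'_p$ and $S'$ through the stabilization sub-loop (Lines~\ref{mas:stBegin}--\ref{mas:stEnd}), the $mt$-removal, and the repeated $RemoveDeadlock$/$EnsureClosure$ calls across several outer passes, and verifying that none of these operations ever discards a transition or state that the fixed valid solution $p''$, $S''$ actually uses --- which is where Theorem~\ref{mas:noMs2}, Corollary~\ref{mas:noMt}, and the self-loop convention of Assumption~\ref{assumption:infinite} all have to be brought to bear simultaneously.
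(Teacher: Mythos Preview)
Your proposal is correct and follows exactly the approach the paper intends: the paper's entire proof of this theorem is the single sentence ``The proof of this theorem is the same as that of Theorem~\ref{fai:noms4}.'' Your write-up is in fact more careful than the paper's one-line deferral, since you explicitly thread the induction through the additional outer loop on Lines~\ref{mas:outerLoopBegin}--\ref{mas:outerLoopEnd} and spell out the $S'' \subseteq S'$ and $\delta''_p|S'' \subseteq \delta'_p$ bookkeeping (via Theorem~\ref{mas:noMs2} and Corollary~\ref{mas:noMt}) that the masking algorithm adds beyond the failsafe case.
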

\begin{proof}
The proof of this theorem is the same as that of Theorem~\ref{fai:noms4}.
\end{proof}

\begin{theorem}
\label{mas:completeness}
Algorithm~\ref{ALG:MASKING} is complete.
\end{theorem}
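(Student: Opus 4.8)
The plan is to show that whenever a solution to the masking addition problem exists, Algorithm~\ref{ALG:MASKING} never executes the \texttt{Not-possible} return on Line~\ref{mas:isEmpty}; together with soundness (Theorem~\ref{mas:soundness}) this gives completeness. Since the only failure point of the algorithm is the test $S'=\phi$, it is enough to prove the invariant that $S'' \subseteq S'$ holds at every point during the execution of the algorithm, where $(p'',S'')$ is an arbitrary solution. As $S'' \neq \phi$, this forces $S' \neq \phi$ whenever Line~\ref{mas:isEmpty} is reached, so the algorithm returns some $(\delta'_p,S')$, which by Theorem~\ref{mas:soundness} is a valid solution.

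For the invariant, I would argue by induction over the modifications made to $S'$. The base case is immediate: $C1$ gives $S'' \subseteq S$ and $S'$ is initialized to $S$. For the inductive step, note that $S'$ changes only through the two \textsc{RemoveDeadlock} calls (Lines~\ref{mas:removeMs2} and~\ref{mas:removeMs4}); \textsc{EnsureClosure} only deletes transitions and, since by hypothesis $S'' \subseteq S'$, it cannot delete any transition of $\delta''_p|S''$ (those stay inside $S'' \subseteq S'$). A state is dropped by \textsc{RemoveDeadlock} only if it (a) lies in the excised set ($ms_2$ on Line~\ref{mas:removeMs2}, $ms_4$ on Line~\ref{mas:removeMs4}), or (b) has no program transition staying in the current set, or (c) has an environment transition leaving the current set. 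Case (a) is ruled out for states of $S''$ by Theorem~\ref{mas:noMs2} ($S'' \cap ms_2 = \phi$ in every iteration) and Theorem~\ref{mas:noMs4} ($S'' \cap ms_4 = \phi$ in every iteration of the loop on Lines~\ref{mas:ms4Begin}--\ref{mas:ms4End}), so $S''$ is contained in the input predicate of each \textsc{RemoveDeadlock} call. For (b) and (c): $S''$ is closed under $\delta''_p \cup \delta_e$ (it is an invariant of $p''[]_k\delta_e$) and, by Assumption~\ref{assumption:infinite} together with $C1$, contains no deadlock state of $\delta''_p \cup \delta_e$; moreover $\delta''_p|S'' \subseteq \delta'_p$ at Line~\ref{mas:removeMs2} because at that point $\delta'_p \supseteq (\delta_p|S) - mt$ (by Lines~\ref{mas:addToDeltaP},~\ref{mas:removeMT}), $\delta''_p|S'' \subseteq \delta_p|S$ by $C1$, and $\delta''_p|S''$ avoids $mt$ by Corollary~\ref{mas:noMt}; similarly $\delta''_p|S'' \subseteq \delta_p$ covers Line~\ref{mas:removeMs4}. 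Hence $S''$ is a ``stable'' subset of the input to each \textsc{RemoveDeadlock} call ($\delta_e$-closed, deadlock-free), so the monotone fixpoint it computes retains all of $S''$. This closes the induction, and with termination of all loops (each inner loop is monotone, and $S'$ is non-increasing over a finite state space, so the outer loop on Lines~\ref{mas:outerLoopBegin}--\ref{mas:outerLoopEnd} halts), we conclude that a solution is returned. This mirrors the completeness argument for Algorithm~\ref{ALG:FAILSAFE} (Theorem~\ref{fail:completeness}), with the recovery component folded into Theorem~\ref{mas:noMs2}.

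The step I expect to be the main obstacle is keeping the ``per iteration'' character of Theorems~\ref{mas:noMs2},~\ref{mas:noMs4} and Corollary~\ref{mas:noMt} consistent with the shrinking of $S'$: the sets $R$, $R_p$, $ms_1$, $ms_2$, $mt$, $ms_3$, $ms_4$ are all recomputed each time around the outer loop and depend on the current $\delta'_p$ (which is itself pruned during the run), so one must verify that applying these facts iteration by iteration, in lockstep with the invariant $S'' \subseteq S'$, is well founded --- which it is, since every quantity in the algorithm is driven monotonically by the (finite, non-increasing) evolution of $S'$. Making this dependency explicit, and checking that the recovery argument inherited from Corollary~\ref{notRNotRecovery} (via Theorem~\ref{mas:noMs2}) indeed applies uniformly to every outer iteration rather than only the first, is the delicate part of the proof.
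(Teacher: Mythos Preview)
Your proposal is correct and follows essentially the same route as the paper. The paper's proof lists the necessary constraints on any solution invariant $S''$ (disjointness from $ms_2$ and $ms_4$ in every iteration via Theorems~\ref{mas:noMs2} and~\ref{mas:noMs4}, $\delta_e$-closure, and no transitions in $mt$ via Corollary~\ref{mas:noMt}) and then asserts that the algorithm declares failure only when no subset of $S$ meets them; you make the same argument but phrase it as an explicit running invariant $S''\subseteq S'$ that survives each \textsc{RemoveDeadlock}/\textsc{EnsureClosure} step, which is exactly what the paper's claim implicitly relies on.
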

\begin{proof}
Let program $p''$ and predicate $S''$ solve transformation problem. $S''$ should satisfy following requirements: 
\begin{enumerate}
\item $S''$ does not include any state in set $ms_2$ in any iteration of loop on Lines~\ref{mas:outerLoopBegin}-\ref{mas:outerLoopEnd}
\item $S''$ does not include any state in set $ms_4$ in any iteration of loop on Lines~\ref{mas:ms4Begin}-\ref{mas:ms4End}
\item $\nexists s_0 : s_0 \in S'' : (\exists s_1 : s_1 \notin S'' : (s_0, s_1) \in \delta_e)$
\end{enumerate}

The first requirement is according to Theorem~\ref{mas:noMs2}. The second requirement is according to Theorem~\ref{mas:noMs4}, and the third requirement is according to the fact that $S''$ should be closed in $p''[]_2\delta_e$.  

In addtion, according to Corollary~\ref{mas:noMt}, $\delta''_p | S''$ cannot have any transition in $mt$ in any iteration of loop on Lines~\ref{mas:outerLoopBegin}-\ref{mas:outerLoopEnd}. Finally, according to Assumption \ref{assumption:infinite}, all ocmputations of $p[]\delta_e$ that start in $S$ are infinite. Hence, by condition $C1$, all computations of $\delta''_p []_2 \delta_e$ that start from a state in $S'$ must be infinite.
Our algorithm declares that no solution for the addition problem exists only when there is no subset of $S$ satisfying three requirements above such there all computation of $(\delta_p-mt) []_2 \delta_e$ within that subset are infinite.

\end{proof}

\begin{theorem}
\label{thm:mP}
Algorithm \ref{ALG:MASKING} is polynomial (in the state space of $p$)
\end{theorem}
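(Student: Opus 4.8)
The plan is to mirror the complexity proofs of Theorems~\ref{thm:alg1P}, \ref{thm:GKP}, and \ref{thm:fsP}: first argue that each individual statement of Algorithm~\ref{ALG:MASKING} runs in time polynomial in $|S_p|$, then bound the number of iterations of every loop by a polynomial in $|S_p|$, and conclude that the running time, being the product of these quantities, is polynomial. The per-statement bound is immediate: every line manipulates subsets of $S_p$ or of $S_p\times S_p$ — forming $R_p$, $ms_1$, $ms_2$, $mt$, $ms_3$, $ms_4$, applying $RemoveDeadlock$ and $EnsureClosure$, and evaluating the various quantified membership conditions — and each such operation ranges over at most $|S_p|^2$ transitions.

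For the inner loops I would give a monotone-measure argument, exactly as in the earlier proofs. The inner \textbf{repeat} on Lines~\ref{mas:stBegin}--\ref{mas:stEnd} exits when $R' = R$, and in every non-exiting iteration at least one state is added to $R\subseteq S_p$ (the only update to $R$ is $R := R\cup\{s_0\}$), so it runs at most $|S_p|+1$ times. The inner \textbf{repeat} ending at Line~\ref{mas:expandingMsEnd} exits when $ms_1' = ms_1\wedge ms_2' = ms_2$, and in every non-exiting iteration at least one state is added to $ms_1$ or $ms_2$, both subsets of $S_p$, so it runs at most $2|S_p|+1$ times. The \textbf{repeat} on Lines~\ref{mas:ms4Begin}--\ref{mas:ms4End} exits when $S''' = S'$, and in every non-exiting iteration the call $RemoveDeadlock(S'-ms_4,\ldots)$ strictly shrinks $S'$, so it runs at most $|S_p|+1$ times.

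The one point that needs a dedicated observation — and the main obstacle — is the outer \textbf{repeat} on Lines~\ref{mas:outerLoopBegin}--\ref{mas:outerLoopEnd}, since $S'$ is recomputed there rather than merely shrunk in place. The plan is to show that the values taken by $S'$ at the tops of successive outer iterations form a non-increasing chain: the inner stabilization loop constructs $R$ as a least fixed point that is monotone in the set it is seeded with, so a smaller starting $S'$ yields a smaller $R\cup R_p$, hence larger $ms_1,ms_2,mt$, hence (through the monotone operators $RemoveDeadlock$ and $EnsureClosure$) a smaller recomputed $S'$, which the $ms_4$-loop then only shrinks further. Since the outer loop re-iterates precisely when $S''\neq S'$, i.e.\ when at least one state was deleted from $S'$ during the iteration, and since $S'\subseteq S_p$ never grows back, the outer loop performs at most $|S_p|+1$ iterations. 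Multiplying the polynomial per-statement cost by the product of these four polynomial iteration bounds gives a polynomial overall running time, as claimed. I expect verifying the monotonicity of the $S'$-values across the outer loop to be the only non-routine part; everything else is bookkeeping identical to the earlier complexity arguments.
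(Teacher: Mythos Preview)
Your proposal is correct and follows exactly the same approach as the paper, which simply asserts that each statement executes in polynomial time and the number of iterations of every loop is polynomial. Your version is considerably more detailed than the paper's one-sentence proof—in particular, the monotonicity argument you sketch for the outer \textbf{repeat} on Lines~\ref{mas:outerLoopBegin}--\ref{mas:outerLoopEnd} is not spelled out in the paper at all—but the underlying structure is identical.
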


\begin{proof}
The proof follows from the fact that each statement in Algorithm \ref{ALG:MASKING} is executed in polynomial time and the number of iterations are also polynomial. 
\end{proof}

Proof of Theorem~\ref{thm:mMainTheorem} is resulted from Theorem~\ref{mas:soundness}, \ref{mas:completeness}, and \ref{thm:mP}.

\end{document}